\documentclass{article}

\usepackage{xspace}
\usepackage{cite}
\usepackage{amsmath,amssymb,amsfonts,amsthm}
\usepackage{graphicx}
\usepackage{textcomp}
\usepackage{xcolor}

\usepackage{mdframed}
\usepackage{adjustbox,float}
\usepackage[boxed,vlined,ruled,linesnumbered]{algorithm2e}
\usepackage{url}
\usepackage{hyperref}
\usepackage{upgreek}

\newcommand{\remove}[1]{}
\newcommand{\myboldmath}{}
\newcommand{\defn}[1]{{\textit{\textbf{\myboldmath #1}}}}

\newcommand{\poly}{\mathrm{poly}}
\newtheorem{definition}{Definition}
\newtheorem{theorem}{Theorem}
\newtheorem{lemma}{Lemma}

\newcommand{\cM}{{\mathcal M}}

\newcommand{\cA}{{\mathcal A}}

\newcommand{\cE}{{\mathcal E}}
\newcommand{\cH}{{\mathcal H}}

\newcommand{\cP}{{\mathcal P}}

\newcommand{\mwd}{MLI\xspace}
\newcommand{\mwdfull}{Maximum Local Interference\xspace}
\newcommand{\mwdfulllong}{Maximum Local Interference on Heavy Links\xspace}
\newcommand{\probfull}{Local Broadcast\xspace}
\newcommand{\probshort}{LB\xspace}

\newcommand{\received}{received\xspace}
\newcommand{\weight}{weight\xspace}
\newcommand{\heavy}{heavy\xspace}
\newcommand{\round}{step\xspace}
\newcommand{\rounds}{steps\xspace}

\newcommand{\parhead}[1]{\noindent{\textbf{#1.}\xspace}}

\newcommand{\LS}{Link Scheduling\xspace}

\newcommand{\congest}{{\fontfamily{cmss}\selectfont CONGEST}\xspace}

\newcommand{\dk}[1]{{\color{red}{#1}}}
\newcommand{\mm}[1]{{\color{cyan}{#1}}}
\newcommand{\mam}[1]{{\color{cyan}{#1}}}
\newcommand{\mamr}[1]{{\color{cyan}{#1}}}
\newcommand{\mmr}[1]{{\color{cyan}{#1}}}

\newcommand{\miguel}[1]{{\color{cyan}{#1}}}
\newcommand{\mgl}[1]{{\color{cyan}{#1}}}
\newcommand{\mglast}[1]{{\color{blue}{#1}}}
\renewcommand{\dk}[1]{#1}
\renewcommand{\mm}[1]{#1}
\renewcommand{\mam}[1]{#1}
\renewcommand{\mamr}[1]{#1}
\renewcommand{\mmr}[1]{#1}

\renewcommand{\miguel}[1]{#1}
\renewcommand{\mgl}[1]{#1}
\renewcommand{\mglast}[1]{#1}

\usepackage{fullpage}
\usepackage{array}
\newcolumntype{x}[1]{>{\centering\arraybackslash\hspace{0pt}}p{#1}}

\usepackage{graphicx}

\begin{document}

\title{Weighted Beeping Networks~\thanks{A preliminary version of this work has appeared in~\cite{discversion}.}
}

\author{
Dariusz~R.~Kowalski~\thanks{School of Computer and Cyber Sciences, Augusta University}\and
Miguel~A.~Mosteiro~\thanks{Computer Science Department, Pace University}
}

\date{}

\maketitle 

\begin{abstract}
\remove{
Neighborhood Learning (NL) is a fundamental tool in
multiagent and communication systems.
The task is for each autonomous agent to learn
some information (e.g., agent identifier, message, etc.) from every neighboring agent, according to some notion of vicinity.
NL thus requires communication among neighboring agents, which is particularly challenging 
if agents are tiny devices with very limited capabilities (for instance, biological systems) and may interrupt each other.
In this work, we study how the speed of learning depends on the 
network topology.
We model the communication environment as a Weighted Beeping Network (WBN).
In a WBN, network nodes (one for each agent) communicate by deciding whether to beep or stay silent -- all the beeps are then scaled by weights on the corresponding links,
and a threshold function is applied at each idle node to check if they heard a beep or not.
We introduce a novel characteristic of a WBN topology, called \mwdfull (\mwd), and we prove almost tight upper and lower bounds on the running time to accomplish the NL task as a linear function of that characteristic. 

\mgl{Neighborhood Learning (NL) is a foundational primitive in distributed and multiagent systems, requiring autonomous agents to learn information (e.g., identifiers or messages) from all neighboring nodes. This task is particularly challenging in highly constrained, interference-prone environments like biological systems or tiny sensor networks. Traditionally, such settings are modeled using the well-studied Beeping Network model, which features discrete, binary communication. In this work, we introduce a major generalization of this framework: the Weighted Beeping Network (WBN) model.
Unlike the standard binary model, a WBN incorporates link-specific weights that scale the emitted signals, with idle nodes applying a threshold function to detect incoming "beeps." This novel framework captures realistic physical-layer characteristics—such as signal attenuation and continuous interference—while preserving the simplicity of the beeping paradigm. Under this generalized model, we investigate how network topology governs the speed of learning. We introduce a novel topological metric specific to WBNs, termed the Heavy-Link Maximum Weighted Interference Ingress (HL-MWII). Leveraging this metric, we establish almost tight upper and lower bounds on the time complexity required to complete the NL task, proving that the running time is a linear function of the HL-MWII. Our results bridge the gap between abstract token-dissemination models and structurally heterogeneous networks.}
}

\dk{Beeping Networks (BN) have become an established model of wireless communication and computation. They assume, however, that the communication interference caused by beeping nodes is binary (OR channel), while both artificial and nature-driven networks are based on varied signal strength and receiving thresholds. This work naturally extends the classic beeping networks to weighted beeping networks (WBN), by modeling signal strengths by weights on graph edges and receiving a beep if the accumulative signal strengths of transmitting neighbors are above receiving threshold. Under this generalized model, we study the fundamental \probfull (\probshort) problem. Since edges with low weights may not even accumulate to a received beep, we only require that messages are delivered along sufficiently heavy edges, with weights above some given threshold $\sigma$. We design an \probshort algorithm that completes its task in $O(I_\sigma \log n \cdot (\mathcal{M}+\log n))$ beeping rounds, whp, where $n$ is the number of nodes, $I_\sigma$ is the maximum local interference on heavy edges, and $\mathcal{M}$ is the maximum length of messages sent locally. This result is nearly optimal, as, with probability~$1$, $\Omega(I_\sigma)$~beeping rounds are necessary to solve \probshort. Finally, we show that any algorithm designed for the classic beeping networks and terminating in polynomial number of beeping rounds, can be emulated on any WBN with a logarithmic overhead, that is, it returns the same output as working on the classic beeping network isomorphic to the sub-graph of heavy edges.}

{\bf Keywords:} \probfull, Multiagent Systems, Weighted Beeping Networks.
\end{abstract}


\section{Introduction}
\label{s:intro}

In this work, we study fundamental communication and computation problems in multi-agent wireless networks embedded in harshly constrained environments. Smart agent systems—such as insect colonies, bird flocks, and bacteria populations—rely on the collective intelligence of individual organisms communicating via simple wireless signals (see the recent survey by~\cite{liu2024review}). To study the capabilities of these systems, distributed computing has heavily relied on the Beeping Network (BN) model~\cite{cornejo2010deploying,afek2011biological,afek2013beeping}. In standard BNs, nodes can only choose to beep or listen in discrete time slots. If a node listens, it receives a binary response over an OR channel: it hears a beep if at least one neighbor transmits, and silence otherwise. (An overview of the various BN model extensions and problems studied in the literature can be found in Section~\ref{sec:motivrelated}.)

While the classic binary BN model provides a minimalistic abstraction, it is fundamentally insufficient for capturing real-world biological and physical phenomena. Physical channels, artificial or nature-driven, are rarely binary. Instead, they exhibit varied signal strengths and receiving thresholds. To address this gap, we naturally extend the classic beeping model to Weighted Beeping Networks (WBN). In a WBN, signal strengths are explicitly modeled by weights on graph edges. Rather than an OR channel, communication is governed by a threshold channel: a listening node receives a beep if and only if the cumulative signal strengths of its transmitting neighbors exceed a designated receiving threshold (normalized~to~1).\footnote{Note that the threshold channel becomes an OR channel, and WBN collapses to BN, if all positive~weights~are $1$.}

Under this newly generalized model, we investigate the fundamental \probfull (\probshort) problem, where nodes seek to propagate local information to their immediate vicinity. (See~\cite{garncarek2025deterministic,lynch2015truly,halldorsson2012towards,Hay} and the references therein.) 
Moving to a weighted model introduces unique algorithmic hurdles: links with low weights may fail to accumulate enough signal strength to overcome the receiving threshold, rendering them effectively silent. Consequently, we focus on a localized variation of the problem where messages must be successfully delivered along sufficiently heavy edges—defined as links with weights above a given parametric threshold $\sigma$.

Resolving \probshort in WBNs is highly non-trivial. Because the discrete behaviors of neighboring nodes are compressed into a single aggregated bit at the threshold channel, algorithms must carefully schedule transmissions to prevent heavy signals from being drowned out by surrounding interference. We overcome these challenges by characterizing the complexity of the problem using a novel topological metric: the maximum local interference on heavy edges, denoted as $I_\sigma$.

\subsection{Our Contributions}

\begin{itemize}
    \item \underline{Algorithmic Contribution}: We design a \probfull algorithm tailored for WBNs that successfully completes its task with high probability (whp)\footnote{For a network size $n$, we say that an event $E$ occurs \defn{with high probability} if $Pr(\lnot E)\leq n^{c}$, for some constant $c<0$.} in $O(I_\sigma \log n (\cM+\log n))$ beeping rounds, where $n$ is the total number of nodes, $\cM$ is the maximum length of locally transmitted messages, and $I_\sigma$ is the maximum local interference on heavy edges.
    \item \underline{Complexity Upper/Lower Bounds}: We show that our algorithmic result is nearly optimal. We establish an almost tight lower bound proving that, with probability 1, $\Omega(I_\sigma)$ beeping rounds are strictly necessary to solve the \probfull problem in this environment.
    \item \underline{General Emulation Theorem}: Beyond standalone algorithms, we prove a powerful structural relationship between classic and weighted beeping networks. We demonstrate that any algorithm originally designed for classic BNs terminating in a polynomial number of rounds can be emulated on any WBN with only a {\bf\it logarithmic overhead}. The emulated protocol is guaranteed to produce the exact same output as if it were executing directly on a classic, unweighted beeping network isomorphic to the sub-graph of heavy~edges~($E_\sigma$).
\end{itemize}

\textbf{Details of Our Contributions.}
More specifically, we study \probshort in a system of $n$ autonomous nodes under the WBN model, where each agent is installed in one network node.
We assume that nodes beep synchronously in {\it rounds}, and we measure algorithmic performance in beeping rounds.
We study provably efficient algorithms,
run autonomously by nodes in WBL networks.

We identify a characteristic of WBNs, which we call \mwdfulllong, also called \mwdfull (\mwd) for short, and denote as $I_\sigma$, for a parametric threshold $\sigma$.\footnote{The \probshort problem, similarly to almost all complex graph problems, does not have an exact formula for the number of rounds needed to solve the problem, only an approximation as a function of some graph characteristic that holds asymptotically.
E.g., time-related parameters (mixing, covering, etc.) of random walks are only approximated by concise formulas, such as network diameter, vertex/edge expansion or conductance. 
We follow the same methodology~here.}
We show mathematically~that 

\vspace*{2ex}
\hspace*{-1.8em}
\defn{$I_\sigma$ is a good approximation of the minimum number of rounds needed to solve the \probfull problem,}

\vspace*{2ex}
\noindent
by showing that the ratio between the upper and lower bound is in $O(\log^2 n)$.
More precisely, we show in Section~\ref{s:upbound} that our \probshort algorithm 
locally propagates messages of length $\cM$ in $O(I_\sigma \log n \cdot (\cM+\log n))$ rounds on {\it any} topology graph $G$ with \mwd at most $I_\sigma$ (see Theorem~\ref{thm:algorithm-LB}). (Note that the additive $\log n$ comes from attaching the node identifier to the message, and it subsumes $\cM$ for short messages, i.e., $\cM=O(\log n)$.) 
This bound holds with high probability, i.e., with error polynomially small in~$n$. 

For clarity, we present and analyze our algorithm in three steps. For settings where only $I_\sigma$ and a polynomial upper bound on $n$ are known and all nodes start execution simultaneously, we present a simplified version of \probshort, called {\it One-beep Local Broadcast (OBLB)}, in which, for each node, it is enough to guarantee the occurrence of at least one ``dominant'' beep from each \heavy neighbor, that is, a beep from an \heavy neighbor when the aggregated \weight of all other beeping neighbors is less than the 
hearing threshold $1$ 
(Sec.~\ref{sec:one-beep}). Note that the WBN model does not provide a mechanism to detect a dominant beep.
Next, we generalize our techniques to \probshort (Sec.~\ref{sec:general-algorithm}), by incorporating algorithmic techniques (based on short codes) that allow to identify each occurrence of dominant beep in the simulated execution of our OBLB algorithm. 
Finally, we show how to remove the knowledge of $I_\sigma$ and synchronization assumptions, 
and discuss the case of dynamic weights 
(Sec.~\ref{sec:extensions}).

In Section~\ref{sec:lower-bound}, we prove an almost matching lower bound of $\Omega(I_\sigma)$ for OBLB (see Theorem~\ref{thm:lower}).
This lower bound
automatically holds for \probshort, as a more general problem.
To prove the lower bound, we show an adversarial input on which any OBLB algorithm requires the claimed time. Motivated by the topology of classic neural networks, we used a bipartite graph with certain weight function -- we conjecture that our approach could be further extended to other more specific network classes, such as multi-layered networks\footnote{%
Multi-layered WBNs could be seen as beeping neural networks, similarly to spiking neural networks.} 
or scale-free graphs. 
In fact, our extensive experimental study 
(Section~\ref{sec:experimental}) 
strongly indicates that such lower bound applies to a variety of real-world and (commonly used) synthetic topologies (e.g., grids or scale-free graphs).

\mgl{In Section~\ref{sec:general-emulation}, we also present an emulator of any \dk{polynomial-round} algorithm designed for a BN, that runs on a WBN. Our emulator succeeds whp introducing only a $O(\log n)$ time overhead over the original BN algorithm. As a consequence, the wealth of BN algorithms in the literature can be automatically applied to WBN with only a logarithmic overhead.} 


We complement our theoretical study 
with a thorough experimental evaluation 
(Section~\ref{sec:experimental}) 
of our algorithms on two  
environments: social networks in the natural world  (\mmr{four} datasets), on top of which nano-devices could be deployed, and ad-hoc infrastructures (3D-grid, scale-free nets), deployed, e.g.,  
for simultaneous scanning and learning. 
Our experiments show that our theoretical analysis is almost tight and constants hidden in the asymptotic notation are reasonably small in those datasets.
We also show that performance of our algorithm is much better than 
simple/direct approaches
(such as round-robin or periodic transmission schedules with initial random shift).

\mgl{\textbf{Strength of Our Contributions.}
}
Our technical results above have important implications on the efficiency of \probshort.
\mwd is a global characteristic that could be 
\mgl{larger than the number of nodes,}
even if the \heavy neighborhoods (i.e., the sets of \heavy links connected to each receiver) are small.\footnote{E.g., 
each node has only one \heavy link, and all remaining links have \weight just below the threshold.}
Given the locality of the \probshort problem, one would expect to solve it fast in small neighborhoods.  
To the contrary to this intuition,
our lower bound shows that executing \probshort may be costly, even in networks with small degree \dk{of heavy links} (but with high \mwd). 

\mgl{
In recent work~\cite{davies2023optimal}, Davies presented an $\Omega(\Delta \log n)$ lower bound on the overhead of simulating the Broadcast-CONGEST model of communication in BNs, where $\Delta$ is the maximum degree. In a round of the Broadcast-CONGEST model, nodes may send the same $O(\log n)$-bit message to each of their neighboring nodes, and each node hears the messages from all of its neighbors. Thus, the simulation of one Broadcast-CONGEST round is equivalent to solving \probshort with $\cM\in O(\log n)$. Thus, the $\Omega(\Delta \log n)$ lower bound of~\cite{davies2023optimal} applies to \probshort in WBN for the particular case where weights are in $\{0,1\}$ and $\sigma = 1$. In such case our metric becomes $I_\sigma=\Delta-1$ and our upper bound is $O(\Delta \log^2 n)$, yielding the gap a logarithmic factor smaller.

Our lower bound instead applies to the more general case of non-uniform weights and threshold~$\sigma$. Moreover, the adversarial input used in our proof is a geometric network where weights are inversely proportional to distances, resembling the probability function modeling influence in social networks~\cite{kleinberg2000small}.\footnote{We conjecture that our lower bound techniques extend to other weight functions related to social networks, which is left for future work.}
}

\remove{
\mgl{-------------------------- OLD INTRO}

In this work, we study \textbf{Neighborhood Learning (NL)} in multi-agent communication systems embedded in harshly constrained environments. 
To solve NL, each agent must receive an information from every ``neighboring'' agent, according to some notion of vicinity\footnote{Notice that complex learning tasks can be implemented using an NL algorithm  (see Section~\ref{sec:other-related}), 
as long as it facilitates the exchange of arbitrary long messages (not only ID's) among agents, but the problem is called Neighborhood Learning only to convey that agents do not need to know initially the other agents they can communicate with.
} (see Section~\ref{s:model} for a formal definition of NL).
NL is a fundamental tool for typical computations in systems where agents must communicate among them (e.g.,~\cite{ba2024cautiously,chen2024rgmcomm,du2024situation,li2024optimistic,liu2024imagine}). (See Section~\ref{sec:other-related}
for details on other applications.)

Smart agents' systems are common in nature, and their environment is often harsh and communication restrictive.
E.g., insect colonies, bird flocks, bacteria populations, and others (see the recent survey by~\cite{liu2024review}).
As a subfield of 
Swarm Intelligence, smart agents enable a collective of organisms or entities to exhibit a whole-system intelligence. This is done by having each member communicate and analyze (typically) simple ``wireless'' signals between peers.

\textbf{Beeping Networks.} 
%
\mgl{One of the most restrictive, minimalistic
abstractions in distributed computing}
reflecting those natural mechanisms is the Beeping Network (BN)~\cite{cornejo2010deploying,afek2011biological,afek2013beeping}. 
BNs are graphs where a pair of 
nodes are neighbors if they can communicate directly (i.e., without relays). 
At any given time a BN node can either \emph{beep} (emit some signal with no~embedded information but the presence of the signal) or \emph{listen}. 
\mgl{In the baseline classic BN model,} if a beep at a neighbor is equivalent to assigning a logical 1 to it, and no beep is equivalent to assigning a logical 0 to it, a listener hears a logical OR from the values assigned to its neighbors (OR channel).

\mgl{
Extending/enriching the baseline BN model,
various assumptions have been considered. BN model variants can be classified according to assumptions such as 
wake-up synchronization (asynchronous~\cite{afek2011biological,afek2013beeping,forster2014deterministic,dufoulon2018beeping,hounkanli2020global} vs. globally synchronous start up~\cite{davies2023optimal,DBLP:conf/esa/GarncarekKKM25}), 
feedback/collision-detection (deaf sender~\cite{DBLP:conf/esa/GarncarekKKM25,cornejo2010deploying,davies2023optimal} vs. sender-side collision detection~\cite{afek2011biological,afek2013beeping}), 
network knowledge (no information~\cite{afek2011biological,afek2013beeping} vs. knowing polynomial upper bound on network size and/or node degree~\cite{DBLP:conf/esa/GarncarekKKM25,davies2023optimal}), 
noise in the communication channel (noisy~\cite{efremenko2020noisy,davies2023optimal,gilbert2017symmetry} vs. clean channels~\cite{davies2023optimal,DBLP:conf/esa/GarncarekKKM25}), and 
topological scale (one-hop~\cite{efremenko2020noisy,davies2023optimal,DBLP:conf/esa/GarncarekKKM25} vs. multi-hop~\cite{DBLP:conf/esa/GarncarekKKM25,forster2014deterministic,gilbert2017symmetry}). 
Further details on each of these model variants can be found in Section~\ref{sec:other-related}.

The Beeping Network model was defined by Cornejo and Kuhn in~\cite{cornejo2010deploying} in 2010, inspired by continuous beeping studied by Degesys et al.~\cite{degesys2007desync} and Motskin et al.~\cite{motskin2009lightweight}, and by the implementation of coordination by carrier sensing given by Flury and Wattenhofer in~\cite{flury2010slotted}.
Since then, the literature has included studies on 
MIS and Coloring~\cite{afek2011biological,afek2013beeping,jeavons2016feedback,holzer2016brief,beauquier2018fast,casteigts2019design,giakkoupis2026self}, 
Naming~\cite{chlebus2017naming}, 
Leader Election~\cite{ghaffari2013near,forster2014deterministic,dufoulon2018beeping}, 
Broadcast~\cite{ghaffari2013near,hounkanli2015deterministic,hounkanli2016asynchronous,czumaj2019communicating,beauquier2019optimal},
Shortest Paths~\cite{dufoulon2022beeping},
and simulation of \congest algorithms~\cite{beauquier2018fast,davies2023optimal,ashkenazi2020brief,DBLP:conf/esa/GarncarekKKM25}.
See~\cite{beauquier2018fast,dufoulon2022beeping} and the references therein for further BN problems details. 
}

The BN communication mechanism is simple but insufficient to model real-world phenomena where the perceived strength of a signal at a node may vary significantly among neighboring nodes (e.g. interference and/or environmental features in radio networks, relevance in social networks, etc.).
In this work, we generalize the BN communication model to comprise also those scenarios, as follows.

\textbf{Weighted Beeping Networks.}
We generalize the BN communication model assuming that links are weighted, and we call such model a \emph{Weighted Beeping Network} (WBN). A WBN is a complete weighted graph, where the weights correspond to  environment conditions that make some neighbors \mgl{heavier 
(or more} ``relevant'') than others to any given node\footnote{The link between two nodes that cannot communicate is assigned weight zero.}. 
As in BN, at any given time a WBN node can either beep or listen. If we conceptually assign a $1$ to a beep and a $0$ to no beep, a listener hears a beep if and only if the sum of the weights of the links connecting to its beeping neighbors (i.e., the weighted sum of $1$'s) is at least a \emph{hearing threshold}, normalized to $1$ (we also call this communication mechanism a threshold channel).\footnote{Note that the threshold channel becomes an OR channel, and WBN collapses to BN, if all positive~weights~are $1$.} 
WBN is a simple yet universal model, in the sense that many ``richer'' wireless communication models (e.g., Radio Networks and SINR~\cite{kowalski2020generic,kowalski2025dwls}), 
can be implemented in WBN, whereas biological communication systems (e.g., bacteria signaling~\cite{moreno2023quorum}) 
can be abstracted~by~WBN (see 
Section~\ref{sec:bacteria}).

\textbf{Learning within beeping networks} presents significant algorithmic challenges due to severe communication constraints. Specifically, in both aforementioned communication models, the discrete information that neighboring nodes attempt to transmit -- represented by the binary assignment of $\{0,1\}$ -- is effectively ``collapsed'' to a single bit of information by the threshold channel of the WBN (or the OR channel in standard BNs). Consequently, facilitating learning in these environments requires algorithms designed to derive meaningful signal representations from this highly aggregated data.
In this work, we overcome those challenges while studying NL algorithms for WBNs.
Moreover, we generalize the NL problem introducing a parametric threshold $\sigma$ on the \weight of the links connecting to neighbors each node has to learn from to solve NL. We call those neighbors \mgl{and the links connecting to them} \emph{\heavy}. 

\subsection{Our Contributions}

\begin{itemize}

\item Modeling contributions:

\begin{itemize}
\item we generalize the NL problem considering a parametric \emph{\weight threshold} $\sigma$ such that only \heavy neighbors, i.e. neighbors connected by a link with \weight at least $\sigma$, need to be learned; this substantially shortens learning time (of significant information).  
\item we introduce a \emph{\mwdfull} $W$, which depends on $\sigma$, as the fundamental topological characteristic governing complexity of the NL problem.
\end{itemize}

\item Algorithmic contribution:
we present a provably efficient Neighborhood Learning (NL) algorithm whose running time is nearly-optimal in terms of $W$.

\item Complexity contribution:
we prove an almost tight $\Omega(W)$ lower bound, showing that NL can be costly even in sparse networks with high $W$.
\end{itemize}

\textbf{Broader discussion of our contribution.}
More specifically, we study NL in a system of $n$ autonomous agents under the WBN model, where each agent is installed in one network node.
We assume that nodes beep synchronously in \emph{rounds}, and we measure algorithmic performance in beeping rounds.
We study provably efficient algorithms,
run autonomously by agents in WBL networks.

We identify a characteristic of WBNs, which we call \mwdfull (\mwd) and denote as $W$.\footnote{The NL problem, similarly to almost all complex graph problems, does not have an exact formula for the number of rounds needed to solve the problem, only an approximation as a function of some graph characteristic that holds asymptotically.
E.g., time-related parameters (mixing, covering, etc.) of random walks are only approximated by concise formulas, such as network diameter, vertex/edge expansion or conductance. 
We follow the same methodology~here.}
We show mathematically~that 

\vspace*{2ex}
\hspace*{-1.2em}
\defn{$W$ is a good approximation of the minimum number of rounds needed to solve the NL problem,}

\vspace*{2ex}
\noindent
by showing that the ratio between the upper and lower bound is $O(\log^2 n)$.
More precisely, we show in Section~\ref{s:upbound} that our NL algorithm 
locally propagates messages of length $\cM$ in $O(W\log n \cdot (\cM+\log n))$ rounds on \emph{any} topology graph $G$ with \mwd at most $W$ (see Theorem~\ref{thm:algorithm-LB}). (Note that the additive $\log n$ comes from attaching the node identifier to the message, and it subsumes $\cM$ for short messages, i.e., $\cM=O(\log n)$.) 
This bound holds with high probability, i.e., with error polynomially small in~$n$. 


For clarity, we present and analyze our algorithm in three steps. For settings where only $W$ and a polynomial upper bound on $n$ are known and all agents start execution simultaneously, we present a simplified version of NL, called {\em One-beep Local Broadcast (OBLB)}, in which, for each node, it is enough to guarantee the occurrence of at least one ``dominant'' beep from each \heavy neighbor, that is, a beep from an \heavy neighbor when the aggregated \weight of all other beeping neighbors is less than the 
hearing threshold $1$ 
(Sec.~\ref{sec:one-beep}). Note that the WBN model does not provide a mechanism to detect a dominant beep.
Next, we generalize our techniques to NL (Sec.~\ref{sec:general-algorithm}), by incorporating algorithmic techniques (based on short codes) that allow to identify each occurrence of dominant beep in the simulated execution of our OBLB algorithm. 
Finally, we show how to remove the knowledge of $W$ and synchronization assumptions, 
and discuss the case of dynamic weights 
(Sec.~\ref{sec:extensions}).

In Section~\ref{sec:lower-bound}, we prove an almost matching lower bound of $\Omega(W)$ for OBLB (see Theorem~\ref{thm:lower}).
This lower bound
automatically holds for NL, as a more general problem.
To prove the lower bound, we show an adversarial input on which any OBLB algorithm requires the claimed time. Motivated by the topology of classic neural networks, we used a bipartite graph with certain weight function -- we conjecture that our approach could be further extended to other more specific network classes, such as multi-layered networks\footnote{%
Multi-layered WBNs could be seen as beeping neural networks, similarly to spiking neural networks.} 
or scale-free graphs. 
In fact, our extensive experimental study 
(Section~\ref{sec:experimental}) 
strongly indicates that such lower bound applies to a variety of real-world and (commonly used) synthetic topologies (e.g., grids or scale-free graphs).

\mgl{
In recent work~\cite{davies2023optimal}, Davies presented a $\Omega(\Delta \log n)$ lower bound on the overhead of simulating the Broadcast-CONGEST model of communication in BNs, where $\Delta$ is the maximum degree. In a round of the Broadcast-CONGEST model, nodes may send the same $O(\log n)$-bit message to each of their neighboring nodes, and each node hears the messages from all of its neighbors. Thus, the simulation of one Broadcast-CONGEST round is equivalent to solving NL with $\cM\in O(\log n)$. Thus, the $\Omega(\Delta \log n)$ lower bound of~\cite{davies2023optimal} applies to NL in WBN for the particular case where weights are in $\{0,1\}$ and $\sigma = 1$. In such case our metric becomes $W=\Delta-1$ and our upper bound is $O(\Delta \log^2 n)$, yielding the gap a logarithmic factor smaller. Our lower bound instead applies to the more general case of arbitrary weights and \weight threshold $\sigma$. 
}

Our technical results above have important implications on the efficiency of NL.
\mwd is a global characteristic that could be as large as the number of all agents, even if the \heavy neighborhoods (i.e., the sets of \heavy links connected to each receiver) are small.\footnote{E.g., 
each node has only one \heavy link, and all remaining links have \weight just below the threshold.}
Given the locality of the NL problem, one would expect to solve it fast in small neighborhoods.  
To the contrary to this intuition,
our lower bound shows that executing NL may be costly, even in networks with small degree (but with high \mwd). 
%
%

Additionally, in 
Section~\ref{sec:experimental},
we complement our theoretical study with
a thorough experimental evaluation of our algorithms on two  
learning environments: social networks in the natural world  (\mmr{four} datasets), on top of which nano-devices could be deployed, and ad-hoc infrastructures (3D-grid, scale-free nets), deployed, e.g.,  
for simultaneous scanning and learning. 
Our experiments show that our theoretical analysis is almost tight and constants hidden in the asymptotic notation are reasonably small in those datasets.
We also show that performance of our algorithm is much better than 
simple/direct approaches
(such as round-robin or periodic transmission schedules with initial random shift).

}


\section{Model and Definitions}
\label{s:model}

\subsection{System Model}
We consider a system formed by $n$ autonomous nodes of a Weighted Beeping Network (WBN). 
Nodes have communication and computational capabilities, and each node is labeled with a unique \defn{identifier (ID)} of $O(\log n)$ bits.  
Time is discretized in \defn{rounds}, also called \defn{steps} or \defn{slots} indistinctively.
In each round a node may be either \defn{beeping}, i.e., sending a \defn{beep},\footnote{Some kind of signal, such as radio, light, sound, etc.}
or \defn{listening}.
A round is long enough to beep, or hear a beep if listening (see Section~\ref{sec:communication-model} for details), and possibly perform one bit read/write/compare operation in local memory
which is assumed to take negligible time with respect to the time required by the communication. 
When presenting the main ideas, we assume clock synchronization, but we also show how to get rid of this limitation later 
(Section~\ref{sec:extensions}).

\subsection{WBN Communication Model on General Bidirected~Weighted Graphs}
\label{sec:communication-model}

Due to various reasons (communication interference, preference, etc.), 
the beep of a node may be perceived differently by two other nodes. We model this communication phenomenon with a \defn{\weight} function $w:V\times V\to \mathbb{R}^+ \cup\{0\}$ where 
$V$ is the set of network nodes, and
for any $x,y\in V$, $w(x,x)=0$ and $w(x,y)=w(y,x)$.
The network topology is modeled by a {\bf\it general bidirected weighted graph} $G=(V,E)$, where $E=\{ (x,y) | x,y \in V \land w(x,y)>0\}$ is the set of bidirected links\footnote{Even though the weight is the same in both directions, we use bidirected links to be able to specify the direc\-tion of communication when 
needed. Graph $G$ can be {\bf\it arbitrary} -- weight $0$ means no edge in $E$.}.
An example of a topology graph can be seen in 
Figure~\ref{fig:model}. 
A link from \defn{beeper} $x\in V$ to \defn{listener} $y\in V$ is denoted as $(x,y)$, and the set of nodes beeping at time $t$ is denoted as $V(t)$.
In each time slot, a listening node may \defn{hear}: 
a \defn{beep} 
if the aggregated \weight of the links connecting to beeping neighbors is at least the \defn{hearing threshold}, normalized to $1$, 
or \defn{silence} otherwise (i.e., the total \weight of neighboring beepers is too low to be recognized~by~the~listener).

\begin{figure}[t]
\centering
\begin{minipage}{.5\textwidth}
  \centering
  \includegraphics[width=.9\linewidth]{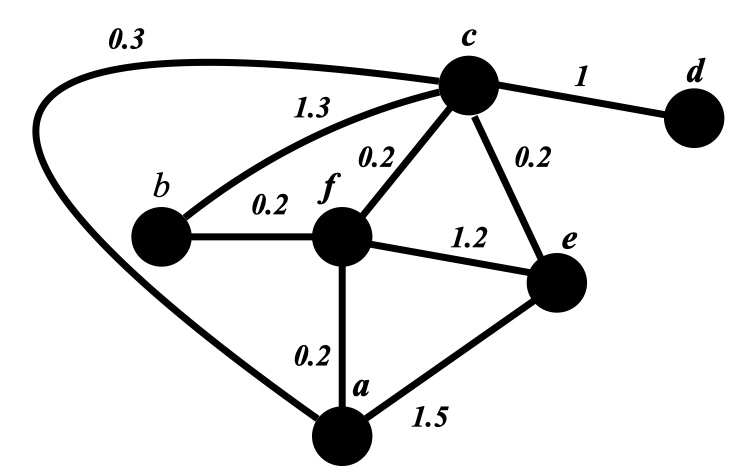}\\(a)
  \label{fig:wbn}
\end{minipage}%
\begin{minipage}{.5\textwidth}
  \centering
  \includegraphics[width=.9\linewidth]{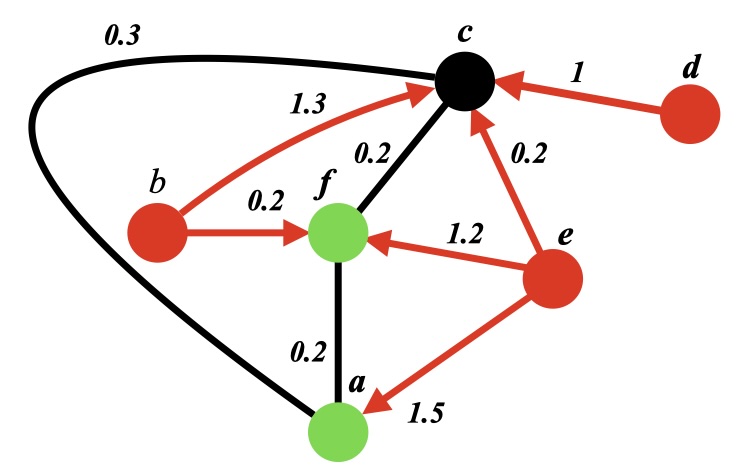}\\(b)
  \label{fig:dominance}
\end{minipage}
\vspace*{-3ex}
\caption{Beeping communication model.
Fig.~(a)~Example of WBN topology modeled by a bidirected weighted graph, see Section~\ref{sec:communication-model}. 
The arrows of each bidirected link are omitted for clarity.
Only edges with positive \weight are depicted.
Fig.~(b)~
Example of the 
dominant beep
concept, as defined in Sec.~\ref{sec:problem}. The depicted arrows point from the beeping nodes to their neighbors, while the remaining arrows are omitted.
Consider a time slot when nodes $b$, $d$, and $e$ beep (in red) and $a$, $f$, and $c$ listen.
Let $\sigma=1$.
Node $a$ hears a beep, its only beeping neighbor is $e$, hence the beep from $e$ is dominant at $a$.
Node $f$ also hears a beep, caused by its two beeping neighbors:~$b$, with \weight $0.2$, and~$e$, with \weight $1.2$; hence, the beep from $e$ is 
dominant at $f$ (indicated in green) while the beep from $b$ is not.
Node $c$ also hears a beep, caused by its three beeping neighbors:~$b$, with \weight $1.3$, node~$e$, with \weight $0.2$, and~$d$, with \weight $1$; 
as a result, no beep 
is dominant~at~$c$.\vspace{-.1in}}
\label{fig:model}
\end{figure}

\subsection{\probfull and Simplified OBLB Problem}
\label{sec:problem}

The following definitions will be used.
For any constant $\sigma\geq 1$, 
called \defn{\weight threshold}, let $E_\sigma\subseteq E$ be the subset of links with \weight at least $\sigma$, that is $E_\sigma=\{(x,y)\in E | w(x,y)\geq \sigma\}$. We call these links \defn{\heavy}, and we extend the notation to nodes saying that $x$ is a \heavy neighbor of $y$ and vice versa.
A \defn{message} is a sequence of bits held by a node for the purpose of sending it to other node(s). 
We say that a node \defn{analyzes} a sequence of beep/silence heard when it decodes such sequence into a sequence of bits, according to a previously established coding system. 
When a node $x$ receives some information from some other node $y$ after analyzing the sequence of beep/silence heard, we say that $x$ has 
\defn{\received}
the information sent by $y$.

\begin{definition}
\label{def:LB}
Given a constant $\sigma\geq 1$ and a WBN with set of \heavy links $E_\sigma$ where each node holds a message, 
the \defn{\probfull (\probshort)} problem is solved in $\Gamma$ rounds if, for each \heavy link $(x,y)\in E_\sigma$, by the end of slot $\Gamma$ node $y$ has \received the ID and message of node $x$ by analyzing the sequence of beep/silence heard.  
\end{definition}
The threshold $\sigma$ restricts learning only to neighbors connected by links in $E_\sigma$. Nevertheless, nodes with lower \weight are still around and may interfere with the reception of information from \heavy neighbors. 

\subsubsection{Tools: beep dominance and OBLB problem.}
In order to design and analyze efficient algorithms for \probfull, we introduce a simpler problem as an algorithmic tool. 
The main technical concept here is the ``dominance'' of a beep 
from node $x$ at node $y$, with $w(x,y)\ge \sigma$, 
which describes a situation in a time slot when node $y$ heard a beep while node $x$ beeped, but also that the total \weight of all the other beepers on $y$ was small.
More formally, for a \weight threshold $\sigma$, we say that a beep from node $x$ 
is \defn{dominant} at node $y$
in some time slot $t$, if the following happens in time slot $t$, 
\begin{quote}
\textbf{\emph{Dominance Condition:}}
$x$ beeps, $y$  listens, $w(x,y)\ge \sigma$,
and the aggregated \weight 
of all other links connecting
beepers in $V(t)$
to $y$
is less than 
the hearing threshold $1$
(i.e. $\sum_{z\in \mm{V(t)}:z\neq x} w(z,y) < 1$).
\end{quote}
Figure~\ref{fig:model} 
illustrates an example of dominance.
Note that, since many beeps or a single beep are heard identically, there is no mechanism to automatically detect, by any node, that the beep from $x$ was dominant at node $y$, not even that there was a dominant beep at node $y$, regardless of the beeper. 
Such detection needs to be part of an algorithm -- in fact, our second algorithm discovers some dominant beeps during the computation (and decode information that this fact carries, using short coding sequences), but some with delay because it has to analyze some beeps from the past. 

The concept of ``dominating beeps'' divides the design of our main \probshort algorithm in two parts. First, we design an algorithm for a simpler OBLB problem, defined below (Definition~\ref{def:one-beep}), whose goal is to guarantee that each \heavy neighbor has a dominant beep in some slot. Second, the OBLB algorithm is enhanced with short coding sequences to assure that the receiver will eventually (during its beep sequence analysis) decode the dominating beep and the information it encodes.


\begin{definition}
\label{def:one-beep}
Given a constant \weight threshold $\sigma\geq 1$ and a WBN with set of \heavy links $E_\sigma$, the \defn{One-beep Local Broadcast (OBLB)} problem is solved in $\Gamma$ rounds if, for each \heavy link $(x,y)\in E_\sigma$, by the end of slot $\Gamma$ 
there has been a dominant beep from~node~$x$~at~node~$y$.
\end{definition}

\subsection{\mwd Metric Function of Algorithms}
\label{sec:algs}
We measure algorithmic performance in rounds as a function of $|E_{\sigma}|$ and the following characterization function of a WBN, called \defn{\mwdfulllong}, also called \defn{\mwdfull} (\defn{\mwd}) for short: 

\remove{
\begin{align*}
\boxed{
I_\sigma = \max_{E'\subseteq E_\sigma} \frac{1}{|E'|}\sum_{(x,y)\in E'} \sum_{(z,y)\in E} w(z,y)
\ .
}
\end{align*}
\mwd is the \mwdfull the receiver of an average link in any subset of $E_\sigma$ (links with \weight above threshold) can experience, from the transmitters of all other links.
}

\mgl{
\begin{align*}
\boxed{
I_\sigma = \max_{(x,y) \in E_\sigma} \sum_{(z,y)\in E:z\neq x} w(z,y)
\ .
}
\end{align*}
\mwd is the maximum aggregated weight of all other links incoming to a receiver of any link in $E_\sigma$ (i.e., the set of \heavy links).}
Increasing the parameter $\sigma$ we restrict the information to be learned (neighborhoods may shrink), but we may learn it faster as $I_\sigma$ is monotonically non-increasing with
growing $\sigma$. In Figure~\ref{fig:model} 
for example, setting $\sigma=1$ means that only $4$ links 
have to have dominant beeps.
Nevertheless, links with smaller \weight may still 
cause hearing a beep by the receiver, which has to be mitigated by the algorithm.

\section{Algorithmic Upper Bound}
\label{s:upbound}

We first present and analyze a solution to the simplified 
version of \probshort,
the One-beep Local Broadcast (OBLB) as in Def.~\ref{def:one-beep} (Sec.~\ref{sec:one-beep}). 
Next,
we generalize it to any 
\probshort
input (Sec.~\ref{sec:general-algorithm}). Extensions of our algorithms are given in Sec.~\ref{sec:extensions}.

\subsection{One-beep Local Broadcast}
\label{sec:one-beep}

Consider the following 
One-beep Local Broadcast algorithm:

\begin{center}
\boxed{
\textit{In each time slot, each node beeps with probability $p\leq 
\frac{1}{4I_\sigma}$ .}
}
\end{center}

The following analysis shows 
correctness and upper bound on the running time 
of this algorithm.

\begin{lemma}
\label{lemma:fraction}
Consider any subset of links $E'_\sigma\subseteq E_\sigma$, such that for each $(x,y)\in E'_\sigma$ node $x$ beeps with probability $p\leq 1/(4I_\sigma)$ 
\miguel{or listens with probability $1-p$.}
Then, the expected number of links 
$(x,y)\in E'_\sigma$ where a beep from $x$ is dominant at $y$
is at least $|E'_\sigma|(1-p)p/4$.
\end{lemma}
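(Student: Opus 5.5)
By linearity of expectation it suffices to show that, for each fixed link $(x,y)\in E'_\sigma$, the probability that the beep from $x$ is dominant at $y$ is at least $(1-p)p/4$. Summing over the $|E'_\sigma|$ links then gives the claim. I interpret the setting as: every node that is a transmitter of some link beeps independently with probability $p$. Nodes that might beep are the only source of interference, and any node that never beeps only helps.

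Fix $(x,y)\in E'_\sigma$. Dominance requires three things in the slot:
\begin{itemize}
\item $x$ beeps, which has probability $p$;
\item $y$ listens, which has probability at least $1-p$;
\item the interference $S=\sum_{z\in V(t),\,z\neq x} w(z,y)$ is below $1$.
\end{itemize}
The weight condition $w(x,y)\ge\sigma$ holds by assumption.

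These events concern the distinct nodes $x$, $y$ and $V\setminus\{x,y\}$. If $y$ is itself a potential beeper, $y$ beeping does not add to $S$, since $w(y,y)=0$. So by independence
\[
\Pr[\text{dominant}] \ge p(1-p)\,\Pr[S<1].
\]
The only issue is that conditioning on $x$ beeping and $y$ listening leaves the distribution of the other beepers unchanged. This holds because the decisions are independent.

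To bound $S$, compute $\mathbb{E}[S]\le p\sum_{(z,y)\in E,\,z\neq x} w(z,y)\le pI_\sigma\le 1/4$, using the definition of $I_\sigma$ applied to the heavy link $(x,y)$. Markov's inequality then gives $\Pr[S\ge 1]\le 1/4$, so $\Pr[S<1]\ge 3/4\ge 1/4$. Hence each link is dominant with probability at least $3p(1-p)/4$, which is at least the required $(1-p)p/4$.

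The main obstacle is purely one of care. First, the links $(z,y)$ with $z\neq x$ summed in $S$ must be exactly those counted in $I_\sigma$. Second, nodes that are not transmitters of links in $E'_\sigma$ must be handled: they either do not beep, or if they beep with probability $p$ they are still covered by the same expectation bound. Third, $y$'s own behaviour must not interfere. If nodes outside the transmitter set may beep with arbitrary probability, the lemma as stated needs the convention that all nodes run the same protocol with probability $p$, and I would state that assumption explicitly.
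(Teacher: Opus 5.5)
Your proof is correct and follows the same route as the paper. Both use linearity of expectation, factor out $p(1-p)$ for $x$ beeping and $y$ listening, and apply Markov's inequality to the interference $\sum_{z\in V(t),\,z\neq x} w(z,y)$, whose expectation is at most $pI_\sigma\le 1/4$ by the definition of $I_\sigma$. Your explicit use of independence and of the convention that every node beeps with probability $p$ states what the paper's proof assumes implicitly, and your constant $3/4$ is what the argument actually gives (the paper records only the weaker $1/4$).
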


\remove{
\begin{proof}
For a given local broadcast protocol, let $X_{x}(t)$ be a random variable indicating \miguel{($1$ or $0$)} whether node $x\in V$ beeps at time~$t$, and let $X_{x,y}(t)$ be a random variable indicating \miguel{($1$ or $0$)} whether a beep from $x$ was 
dominant at
$y$ at time $t$.
To prove the lemma, we need to show: $\mathbf{E}\left(\sum_{(x,y)\in E'_\sigma}X_{x,y}(t)\right) \geq |E'_\sigma|(1-p)p/4$.
\miguel{Let $M=\{(x,y)\in E'_\sigma | \sum_{z\in V} w(z,y)\leq 2I_\sigma \}$.}
By linearity of expectation, we have
\[
\mathbf{E}\Big(\sum_{(x,y)\in E'_\sigma}X_{x,y}(t)\Big) 
= 
\sum_{(x,y)\in M}\mathbf{E}\left(X_{x,y}(t)\right)
+
\sum_{(x,y)\in E'_\sigma\setminus M}\mathbf{E}\left(X_{x,y}(t)\right)\nonumber
\geq 
\sum_{(x,y)\in M} \mathbf{E}\left(X_{x,y}(t)\right)
\]

\begin{equation}
= \sum_{(x,y)\in M} Pr(X_{x,y}(t)=1)
= \sum_{(x,y)\in M}
Pr(X_{x,y}(t)=1|X_x(t)=1\land X_y(t)=0)
\cdot p
(1-p)
\ .
\label{eq:lemma3}
\end{equation}

Recall that, for any $(x,y)\in E$, the conditions 
for a beep from node $x$ to be dominant at node $y$
at time $t$ are:
    $X_y(t) = 0$,
    \miguel{$X_x(t)=1$,
    $w(x,y) \geq \sigma$,
    and 
    $\sum_{z\in \mm{V(t)}:z\neq x} w(z,y) < 1$,
    where $V(t)\subseteq V$ is the set of nodes beeping at time $t$.
From such conditions
we now bound the complementary event.
Defining the random variable $Z_{x,y}(t)=\sum_{z\in V(t):z\neq x} w(z,y)$ \dk{for any $(x,y)\in E_\sigma$}, we have the following:}
$$Pr\Big(X_{x,y}(t)=0|X_x(t)=1\land X_y(t)=0\Big) = 
\miguel{Pr\Big(Z_{x,y}(t) \geq 1\Big)} \ .$$

By Markov inequality,\footnote{Note that Markov inequality does not require independence of the random variables (see Thm.~3.1~in~\cite{MitzenmacherUpfal-book2005}).} the latter is 

$$\miguel{Pr\Big(Z_{x,y}(t) \geq 1\Big)\leq \mathbf{E}(Z_{x,y}(t)) = p\sum_{z\in V:z\neq x} w(z,y)\leq p\sum_{z\in V} w(z,y) \ .}$$

Upper bounding \miguel{$\sum_{z\in V} w(z,y)\leq 2I_\sigma$} (by definition of $M$) and  $p\leq 1/(4I_\sigma)$ (by the condition of the lemma), we get
$Pr\left(X_{x,y}(t)=0|X_x(t)=1\land X_y(t)=0\right) 
\leq 1/2$.
Putting them all together \dk{in Eq.~\ref{eq:lemma3}}, we have that
$
\mathbf{E}\left(\sum_{(x,y)\in E'_\sigma}X_{x,y}(t)\right) 
\geq
\sum_{(x,y)\in M} \left( p(1-p)/2 \right)
\geq 
|M|p(1-p)/2
$.

To conclude the proof, it remains to show that $|M|\geq |E'_\sigma|/2$. It follows from the Pigeonhole principle applied to 
$I_\sigma = \max_{E'\subseteq E_\sigma} \frac{1}{|E'|}\sum_{(x,y)\in E'} \sum_{z\in V} w(z,y)$,
in which the complementary set $E'_\sigma \setminus M = \{(x,y)\in E'_\sigma | \sum_{z\in V} w(z,y)> 2I_\sigma \}$ must have at most $|E'_\sigma|/2$ elements
(as otherwise,  
$\sum_{(x,y)\in E'_\sigma }\sum_{z\in V} w(z,y) \ge \sum_{(x,y)\in E'_\sigma \setminus M}\sum_{z\in V} w(z,y)$ 
would be bigger that $2I_\sigma\cdot |E'_\sigma|/2= I_\sigma|E'_\sigma|$, 
directly contradicting the definition of $I_\sigma$ for instance set $E'=E'_\sigma$).

\end{proof}
}

\mgl{
\begin{proof}
For a given local broadcast protocol, let $X_{x}(t)$ be a random variable indicating \miguel{($1$ or $0$)} whether node $x\in V$ beeps at time~$t$, and let $X_{x,y}(t)$ be a random variable indicating \miguel{($1$ or $0$)} whether a beep from $x$ was 
dominant at
$y$ at time $t$.
To prove the lemma, we need to show: $\mathbf{E}\left(\sum_{(x,y)\in E'_\sigma}X_{x,y}(t)\right) \geq |E'_\sigma|(1-p)p/4$.
We have that
\begin{align}
\mathbf{E}\Big(\sum_{(x,y)\in E'_\sigma}X_{x,y}(t)\Big) 
&= \sum_{(x,y)\in E'_\sigma} Pr(X_{x,y}(t)=1)\nonumber\\
&= \sum_{(x,y)\in E'_\sigma}
Pr(X_{x,y}(t)=1|X_x(t)=1\land X_y(t)=0)
\cdot p
(1-p)
\ .
\label{eq:lemma3}
\end{align}

Recall that, for any $(x,y)\in E$, the conditions 
for a beep from node $x$ to be dominant at node $y$
at time $t$ are:
    $X_y(t) = 0$,
    \miguel{$X_x(t)=1$,
    $w(x,y) \geq \sigma$,
    and 
    $\sum_{z\in \mm{V(t)}:z\neq x} w(z,y) < 1$,
    where $V(t)\subseteq V$ is the set of nodes beeping at time $t$.
From such conditions
we now bound the complementary event.
Defining the random variable $Z_{x,y}(t)=\sum_{z\in V(t):z\neq x} w(z,y)$ \dk{for any $(x,y)\in E_\sigma$}, we have the following:}
$$Pr\Big(X_{x,y}(t)=0|X_x(t)=1\land X_y(t)=0\Big) = 
\miguel{Pr\Big(Z_{x,y}(t) \geq 1\Big)} \ .$$

By Markov inequality,\footnote{Note that Markov inequality does not require independence of the random variables (see Thm.~3.1~in~\cite{MitzenmacherUpfal-book2005}).} the latter is 

$$Pr\Big(Z_{x,y}(t) \geq 1\Big)\leq \mathbf{E}(Z_{x,y}(t)) = p\sum_{z\in V:z\neq x} w(z,y)\ .$$

Upper bounding $\sum_{z\in V:z\neq x} w(z,y)\leq I_\sigma$ (by definition of $I_\sigma$) and  $p\leq 1/(4I_\sigma)$ (by the condition of the lemma), we get
$Pr\left(X_{x,y}(t)=0|X_x(t)=1\land X_y(t)=0\right) 
\leq 1/4$.
Putting them all together \dk{in Eq.~\ref{eq:lemma3}}, we have that
$
\mathbf{E}\left(\sum_{(x,y)\in E'_\sigma}X_{x,y}(t)\right) 
\geq
\sum_{(x,y)\in E'_\sigma} \left( p(1-p)/4 \right)
\geq 
|E'_\sigma|p(1-p)/4
$, as claimed.
\end{proof}
}

\remove{
\textcolor{red}{OLD PROOF:}

\begin{proof}{(Sketch):}
For a given local broadcast protocol, let $X_{x}(t)$ be a random variable indicating whether node $x\in V$ beeps at time~$t$, and let $X_{x,y}(t)$ be a random variable indicating whether a beep from $x$ was 
dominant at
$y$ at time $t$.
To prove the lemma, we need to show: $\mathbf{E}\left(\sum_{(x,y)\in E'_\sigma}X_{x,y}(t)\right) \geq |E'_\sigma|(1-p)p/4$.

By linearity of expectation, for $M=\{(x,y)\in E'_\sigma | \sum_{z\in \mm{V(t)}} w(z,y)\leq 2I_\sigma \}$ :
\[
\mathbf{E}\Big(\sum_{(x,y)\in E'_\sigma}X_{x,y}(t)\Big) 
= 
\sum_{(x,y)\in M}\mathbf{E}\left(X_{x,y}(t)\right)
+
\sum_{(x,y)\in E'_\sigma\setminus M}\mathbf{E}\left(X_{x,y}(t)\right)\nonumber
\geq 
\sum_{(x,y)\in M} \mathbf{E}\left(X_{x,y}(t)\right)
\]

\[
= \sum_{(x,y)\in M} Pr(X_{x,y}(t)=1)
= \sum_{(x,y)\in M}
Pr(X_{x,y}(t)=1|X_x(t)=1\land X_y(t)=0)
\cdot p
(1-p)
\ .
\]

Recall that, for any $(x,y)\in E$, the conditions 
for a beep from node $x$ to be dominant at node $y$
at time $t$ are:
    $X_y(t) = 0$, and
    $w(x,y)\cdot X_x(t) \geq \sigma$,
    and 
    $\sum_{z\in \mm{V(t)}:z\neq x} w(z,y) X_z(t) < 1$. 
From such conditions
and by Markov inequality,\footnote{Note that Markov inequality does not require independence of the random variables (see Thm.~3.1~in~\cite{MitzenmacherUpfal-book2005}).} we 
bound the complementary event:
\begin{eqnarray*}
Pr\Big(X_{x,y}(t)=0|X_x(t)=1\land X_y(t)=0\Big) 
&=& 
Pr\Big(\sum_{z\in \mm{V(t)}:z\neq x} w(z,y) X_z(t) \geq 1\Big)
\\
&\leq& 
p\sum_{z\in \mm{V(t)}} w(z,y)
\ .
\end{eqnarray*}
By upper bounding $\sum_{z\in \mm{V(t)}} w(z,y)\leq 2I_\sigma$ (by definition of $M$) and  $p\leq 1/(4I_\sigma)$ (by the condition of the lemma), we get
$Pr\left(X_{x,y}(t)=0|X_x(t)=1\land X_y(t)=0\right) 
\leq 1/2
$.
Putting them all together, we have that
$
\mathbf{E}\left(\sum_{(x,y)\in E'_\sigma}X_{x,y}(t)\right) 
\geq
\sum_{(x,y)\in M} \left(\frac{1}{2}\cdot p(1-p) \right)
\geq 
|M|p(1-p)/2
$.

To conclude the proof, it remains to show that $|M|\geq |E'_\sigma|/2$. It follows from the Pigeonhole principle applied to 
$I_\sigma = \max_{E'\subseteq E_\sigma} \frac{1}{|E'|}\sum_{(x,y)\in E'} \sum_{z\in \mm{V(t)}} w(z,y)$, in which the complementary set $E'_\sigma \setminus M = \{(x,y)\in E'_\sigma | \sum_{z\in \mm{V(t)}} w(z,y)> 2I_\sigma \}$ must have at most $|E'_\sigma|/2$ elements.
\end{proof}
}

\begin{theorem}
\label{thm:One-beep-LB}
For $p\leq \frac{1}{4I_\sigma}$, One-beep Local Broadcast can be solved in~$O(I_\sigma \log |E_\sigma|)$ steps~whp.
\end{theorem}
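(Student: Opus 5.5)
The plan is to use the per-link guarantee that sits inside the proof of Lemma~\ref{lemma:fraction}, and then take a union bound over all heavy links across independent rounds. We use $p = \frac{1}{4I_\sigma}$; the same argument works for any $p=\Theta(1/I_\sigma)$. If $I_\sigma<1/4$ we instead take $p=1/2$, since the Markov step below only needs $pI_\sigma\le 1/4$.

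\textbf{Step 1: per-link success probability in one slot.} Fix a heavy link $(x,y)\in E_\sigma$ and a slot $t$. The proof of Lemma~\ref{lemma:fraction} shows that
\[
Pr\big(X_{x,y}(t)=1\big)\ \ge\ p(1-p)\cdot\big(1-p\,I_\sigma\big)\ \ge\ \tfrac34\,p(1-p).
\]
This is obtained by applying that lemma's argument with $E'_\sigma=\{(x,y)\}$: condition on $x$ beeping and $y$ listening, and bound the interference $Z_{x,y}(t)$ by Markov's inequality, using $\sum_{z\neq x}w(z,y)\le I_\sigma$. There is one subtlety. The Markov step computes $\mathbf{E}(Z_{x,y}(t))=p\sum_{z\neq x}w(z,y)$, and this must hold \emph{conditioned} on $X_x(t)=1$ and $X_y(t)=0$. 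It does, because nodes beep independently, $w(y,y)=0$, and the term for $x$ is excluded. Since $p\le 1/4$ gives $1-p\ge 3/4$, we get $q:=Pr(X_{x,y}(t)=1)\ge \frac{9}{16}p=\Omega(1/I_\sigma)$.

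\textbf{Step 2: amplification over independent slots.} Each node makes fresh coin flips in every slot, so the events $X_{x,y}(t)$ are independent across slots. Run the algorithm for $\Gamma = \lceil (c+1)\ln(n^2)/q\rceil = O(I_\sigma\log n)$ slots. The probability that link $(x,y)$ never gets a dominant beep is then at most
\[
(1-q)^{\Gamma}\le e^{-q\Gamma}\le |E_\sigma|^{-(c+1)}.
\]
We can phrase the count with $\log|E_\sigma|$ in place of $\log n$, since $|E_\sigma|\le n^2$. This is the form the theorem states; for the ``whp in $n$'' guarantee we use $\log|E_\sigma|+\log n$ when $|E_\sigma|$ is small.

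\textbf{Step 3: union bound.} Summing the failure probability over all $|E_\sigma|$ heavy links gives total failure probability at most $|E_\sigma|^{-c}$, or $n^{-c}$ in the $n$-based version. Hence OBLB is solved within $O(I_\sigma\log|E_\sigma|)$ steps whp.

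The main obstacle is Step 1. Lemma~\ref{lemma:fraction} is stated only as an expectation over a set of links, but here we need a per-link lower bound. So we must re-extract the single-link bound, making the independence and conditioning explicit, rather than cite the lemma as stated. Applying the lemma with the singleton set $E'_\sigma=\{(x,y)\}$ already gives exactly $Pr(X_{x,y}(t)=1)\ge p(1-p)/4$, which is enough, and we would present it that way. Everything after that step is a routine geometric-tail and union-bound calculation.
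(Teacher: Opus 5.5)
Your argument is correct and is essentially the paper's proof: the paper applies Lemma~\ref{lemma:fraction} to the current set $E_\sigma^{(t)}$ of not-yet-dominated heavy links to get $\mathbf{E}(|E_\sigma^{(t+1)}|)\le(1-(1-p)p/4)^t|E_\sigma|$ and then applies Markov's inequality to this count, which is exactly your per-link geometric tail summed by a union bound (the paper likewise ends up with $\ln(n|E_\sigma|)$, i.e.\ $O(I_\sigma\log n)$ slots, for whp in $n$). The only slip is your small-$I_\sigma$ patch: the bound $p\le 1/4$ (which you use for $1-p\ge 3/4$) requires $I_\sigma\ge 1$, and for $I_\sigma$ just above $1/4$ your rule makes $p=1/(4I_\sigma)$ close to $1$ (at $I_\sigma=1/4$ exactly, $p=1$ and no node ever listens), so the patch should be $p=\min\{1/4,1/(4I_\sigma)\}$.
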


\begin{proof}
Consider an execution of a One-beep Local Broadcast protocol. Let $E_{\sigma}^{(t)}\subseteq E_\sigma$ be the set of links with \weight at least $\sigma$ where 
a beep
was not 
dominant
before the beginning of time slot $t=1,2,\dots$. That is, $E_{\sigma}^{(1)} = E_\sigma$.
By Lemma~\ref{lemma:fraction}, for $p\leq 1/(4I_\sigma)$ we have that for any $t\geq 1$ it is
%
$\mathbf{E}\left(|E_\sigma^{t+1} |\ \big| \ |E_\sigma^t|=i\right)\leq (1-(1-p)p/4)i$.
%

Then, using conditional expectations, we get:
\remove{
\begin{eqnarray*}
\mathbf{E}\left(|E_\sigma^{(t+1)}|\right) 
&=& \sum_{i=0}^{|E_\sigma|} Pr(|E_\sigma^{(t)}|=i) \cdot \mathbf{E}\left(|E_\sigma^{t+1}| \ \big| \ |E_\sigma^t|=i\right)\\
&\leq&  \sum_{i=0}^{|E_\sigma|} Pr(|E_\sigma^{(t)}|=i) (1-(1-p)p/4)i
\\
&=& (1-(1-p)p/4) \mathbf{E}\left(|E_\sigma^{(t)}|\right)
\ = \ (1-(1-p)p/4)^t |E_\sigma|
\ .
\end{eqnarray*}
}
\begin{multline*}
\mathbf{E}\left(|E_\sigma^{(t+1)}|\right) 
= \sum_{i=0}^{|E_\sigma|} Pr(|E_\sigma^{(t)}|=i) \cdot \mathbf{E}\left(|E_\sigma^{t+1}| \ \big| \ |E_\sigma^t|=i\right) \leq \\
\ \ \le
\sum_{i=0}^{|E_\sigma|} Pr(|E_\sigma^{(t)}|=i) (1-(1-p)p/4)i
= (1-(1-p)p/4) \mathbf{E}\left(|E_\sigma^{(t)}|\right)
\ = \ (1-(1-p)p/4)^t |E_\sigma|
\ .
\end{multline*}
Notice for the latter inequality that transmitters do not need to switch off because in Lemma~\ref{lemma:fraction} the \weight is added over all neighbors, regardless of whether 
their beep was dominant already~or~not. 

On the other hand, by Markov inequality,
the probability that \mm{One-beep} Local Broadcast has not been solved at the beginning of time slot $t+1$ is
\begin{align*}
Pr\left(|E_\sigma^{(t+1)}|\geq 1\right)
&\leq \mathbf{E}\left(|E_\sigma^{(t+1)}|\right)
\leq (1-(1-p)p/4)^{t} |E_\sigma|
\ .
\end{align*}
Fixing $t=(4/((1-p)p))\ln (n|E_\sigma|)$, which is in $O(I_\sigma\log n)$, we have that 
\begin{align*}
Pr\left(|E_\sigma^{(t+1)}|\geq 1\right)
&
\leq \left(1-\frac{(1-p)p}{4}\right)^{\frac{4}{(1-p)p}\ln (n|E_\sigma|)} |E_\sigma|
\leq \frac{1}{e^{\ln(n|E_\sigma|)}} |E_\sigma|
\ = \ 1/n
\ .
\end{align*}
Thus, the claim of the theorem follows (for $t$ steps and probability at least $1-1/n$, which could be boosted to any whp by multiplying value $t$ by suitable constant).
\end{proof}

\subsection{\probfull}
\label{sec:general-algorithm}


In this section, we generalize the OBLB algorithm and analysis from Section~\ref{sec:one-beep} to \mm{\probfull} 
(Definition~\ref{def:LB}). It is run locally and autonomously by each node $v$. The input contains, apart of the identifier of node $v$ (given by a binary string of length $\log n$) and an input message: the number of nodes $n$, the \mwd $I_\sigma$, and the maximum size of a message $\cM$. 
In Section~\ref{sec:extensions} we show how to get rid of assumptions of known $I_\sigma$ and synchronization, and relaxing the requirement of knowing $n$ to only a polynomial upper bound on $n$.

\paragraph*{Algorithm's structure.}
The pseudo-code of the \mm{\probfull} algorithm is given in Algorithm~\ref{alg:local-broadcast}.
The algorithm partitions \rounds into consecutive windows of time, called \defn{super-\rounds}, each of length $2\log n + \cM$, where $\cM$ is the maximum length of a message sent. 
The number of super-\rounds, denoted by $\tau$, is computed as in the time analysis of the One-beep Local Broadcast algorithm in Theorem~\ref{thm:One-beep-LB}.
The algorithm also uses a random generator Random$(1/(4I_\sigma))$ that produces $1$ with probability $1/(4I_\sigma)$, and $0$ otherwise.

\begin{algorithm}[t!]
\caption{\probfull algorithm, pseudo-code for a node $v$} 
    \label{alg:local-broadcast}
    \SetKw{In}{Input}
    \SetKw{Out}{Output}
    \SetKw{Break}{break}
\In (upper bounds on) parameters $n,I_\sigma,\cM$\\
    $\tau \gets $ upper bound on time in proof of Thm.~\ref{thm:One-beep-LB}, computed locally based on given~$n,I_\sigma$\\
    $\alpha_v \gets $ beeping pattern computed according to the definition of $\alpha_v$ in the main text \\
    \For{$i=1,2, \dots , \tau$}{
        \If{Random$(1/(4I_\sigma))=1$}{node $v$ \underline{beeps} in the $2\log n+\cM$ \rounds of the super-\round $i$ \underline{according to pattern $\alpha_v$}
        }
        \Else{node $v$ stays silent and records the feedback in the $2\log n+\cM$ \rounds of the super-\round $i$ into string $\beta_{v,i}$
        }
    }
\For{$i=1,2, \dots , \tau$}{
    \If{$\beta_{v,i}$ is valid}{node $v$ locally (no beeping \round needed) \underline{decodes} $\beta_{v,i}$ into a pair~(ID,~message)}
    }
\Out decoded pairs (without redundancies), each containing node ID and message
\end{algorithm}

\paragraph*{Super-\round $i$.}
In the beginning of a super-\round $i$, each node $v$ performs local action as in \round $i$ of the One-beep Local Broadcast algorithm (see Section~\ref{sec:one-beep}). It uses the random generator. If the decision is to beep, then it 
\defn{beeps} in its $i$th super-\round\ 
\defn{according to the following beeping pattern, denoted $\alpha_v$}:
\begin{itemize}
\item 
if there is bit 1 in position $j$ of ID of node $v$, it beeps and stay silent  in \rounds $2j-1$ and $2j$ of its super-\round, resp.
\item 
if there is bit 0 in position $j$ of ID of node $v$, it stay silent and beeps in \rounds $2j-1$ and $2j$ of its super-\round, resp.
\item 
in \rounds $2\log n+1,\ldots, 2\log n +\cM$ of its super-\round, node $v$ beeps according to the bit representation of its message (i.e., it beeps for bit 1 and stays silent for bit~0); if there are less than $\cM$ bits in the message, it piggybacks the message with the 0's until the end of the super-\round.
\end{itemize}
If, however, the decision is not to beep, node $v$ stays silent during the whole super-\round $i$ and records the received sequence of beep/silence into a binary string $\beta_{v,i}$.

We call a super-\round $i$ \defn{valid} for node $v$ if $v$ has been listening during the super-\round and its recorded feedback $\beta_{v,i}$ has the following property: the number of 1's in the first $2\log n$ positions of $\beta_{v,i}$ is exactly $\log n$. 

\paragraph*{Final decoding.}
At the end of Algorithm~\ref{alg:local-broadcast}, node $v$ \textbf{\emph{decodes}} each $\beta_{v,i}$ of a valid super-\round~$i$:
\begin{itemize}
\item 
if there are bits 10 at positions $2j-1,2j$, for $j\le \log n$, it decodes bit 1 and puts it at position $j$ of the decoded~ID;
\item 
if there are bits 01 at positions $2j-1,2j$, for $j\le \log n$, it decodes bit 0 and puts it at position $j$ of the decoded~ID;

\item 
in \round $2\log n + j$, for $j\le \cM$, node $v$ decodes 1 and puts it in position $j$ of the decoded message if it heard a beep, and decodes 0 if there was a silence.
\end{itemize}
Then, node $v$ puts pairs, containing decoded ID and message, to its output, removing redundancies.



\paragraph*{Algorithm analysis.}

\begin{lemma}
\label{lem:algo-progress}
If there is only one \heavy neighbor of a node $v$ choosing to beep in the considered super-\round $i$, and the aggregated \weight on $v$ from other neighbors in this super-\round is smaller than $1$, then node $v$ correctly decodes the other node's ID and message from sequence $\beta_{v,i}$ at the end of the algorithm.
\end{lemma}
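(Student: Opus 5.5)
The plan is a slot-by-slot comparison between what $v$ records in $\beta_{v,i}$ and the pattern $\alpha_u$ of the unique heavy neighbor $u$ that beeps in super-\round $i$. Once we show $\beta_{v,i}=\alpha_u$ bit for bit, validity and correct decoding follow directly from how $\alpha_u$ was built.

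First, I fix the setting. Node $v$ listens during all of super-\round $i$, since it was the one recording $\beta_{v,i}$. The beeping decision is made once per super-\round, so the set $B_i$ of nodes choosing to beep is the same throughout the super-\round. By hypothesis, $B_i$ contains exactly one \heavy neighbor $u$ of $v$, so $w(u,v)\ge\sigma\ge 1$. Also by hypothesis, $\sum_{z\in B_i,\,z\neq u} w(z,v)<1$.

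I read the second hypothesis as a bound on the aggregate weight of all \emph{chosen} beepers other than $u$. Within a \round, the actual beepers form a subset of $B_i$, and weights are nonnegative. So in every \round the weight reaching $v$ from beepers other than $u$ is still $<1$. This monotonicity is the key point, and I expect it to be the only subtle step: the lemma's hypothesis has to be interpreted over the whole super-\round, not over one \round. Non-heavy neighbors may beep in some \rounds and not others according to their own patterns. Any subset of them still contributes less than $1$, which is exactly what rules out spurious beeps.

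Next is a per-\round case analysis. If $u$ beeps in a \round, the received weight is at least $w(u,v)\ge 1$, so $v$ hears a beep. If $u$ is silent, the received weight is $<1$, so $v$ hears silence. Therefore $\beta_{v,i}$ coincides exactly with the beep/silence pattern $\alpha_u$.

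Finally, I read off the conclusions. In the first $2\log n$ positions, $\alpha_u$ contains exactly one $1$ in each pair $(2j-1,2j)$, giving exactly $\log n$ ones, so super-\round $i$ is valid for $v$. Each pair is $10$ or $01$ according to the $j$th bit of $u$'s ID, so the final decoding recovers the ID of $u$ exactly. Positions $2\log n+1,\dots,2\log n+\cM$ equal the bits of $u$'s message, padded with zeros, so the message is recovered as well. Since decoding is performed locally at the end of Algorithm~\ref{alg:local-broadcast} on every valid $\beta_{v,i}$, the pair (ID of $u$, message of $u$) appears in $v$'s output.
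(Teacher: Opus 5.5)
Your proposal is correct and takes the same route as the paper: you show $\beta_{v,i}=\alpha_u$ for the unique beeping heavy neighbor $u$, and then observe that decoding inverts the encoding of the ID pairs and the message bits. Your per-\round case analysis (using $w(u,v)\ge\sigma\ge 1$ and the fact that any subset of the other chosen beepers still contributes weight below $1$) and your explicit check that super-\round $i$ is valid spell out details the paper leaves implicit.
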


\remove{
\begin{proof}[Proof of Lemma~\ref{lem:algo-progress}]
The crucial observation is that if there is only one \heavy neighbor of $v$ that chooses to beep in the super-\round $i$, say node $v^*$, and the aggregated \weight from other neighbors of $v$ is smaller than $1$, then the operations of beeping according to pattern $\alpha_{v^*}$ is directly received by $v$, in the sense that $\beta_{v,i}=\alpha_{v^*}$. In such case, observe that decoding is the actual reversed operation to encoding the pattern (i.e., reversed to creating pattern $\alpha_{v^*}$), and in this sense can uniquely decode the ID of $v^*$ (based on the first $2\log n$ bits of sequence $\beta_{v,i}$) and the message of $v^*$ (based on the remaining $\cM$ bits of the sequence).
\end{proof}
}
\begin{proof}
The crucial observation is that if there is only one \heavy neighbor of $v$ that chooses to beep in the super-\round $i$, say node $v^*$, and the aggregated \weight from other neighbors of $v$ is smaller than $1$, then the operations of beeping according to pattern $\alpha_{v^*}$ is directly received by $v$, in the sense that $\beta_{v,i}=\alpha_{v^*}$. In such case, observe that decoding is the actual reversed operation to encoding the pattern (i.e., reversed to creating pattern $\alpha_{v^*}$), and in this sense can uniquely decode the ID of $v^*$ (based on the first $2\log n$ bits of sequence $\beta_{v,i}$) and the message of $v^*$ (based on the remaining $\cM$ bits of the sequence).
\end{proof}

\begin{lemma}
\label{lem:algo-correct}
If two or more \heavy neighbors of a node $v$ beep in the beginning of a super-\round $i$, then node $v$ ignores the information received in this super-\round in sequence $\beta_{v,i}$.
\end{lemma}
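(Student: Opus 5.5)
The plan is to show that under the hypothesis of Lemma~\ref{lem:algo-correct} the recorded string $\beta_{v,i}$ is never \emph{valid}. The final decoding loop of Algorithm~\ref{alg:local-broadcast} decodes only valid strings, so $v$ then ignores super-\round $i$. If $v$ itself chose to beep in super-\round $i$, it records nothing and there is nothing to decode, so we may assume $v$ listens throughout super-\round $i$.

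\textbf{Step 1 (a heavy beeper alone forces a beep).} In every \round of the super-\round where at least one \heavy neighbor $u$ of $v$ beeps, $v$ hears a beep. Indeed, all weights are nonnegative, so the aggregated \weight at $v$ is at least $w(u,v)\ge\sigma\ge 1$, which reaches the hearing threshold. Consequently, position $k$ of $\beta_{v,i}$ is $1$ whenever some beeping \heavy neighbor's pattern has a beep in position $k$. Other (light) neighbors can only add further $1$'s, never remove them.

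\textbf{Step 2 (count the 1's in the ID part).} Let $u_1\neq u_2$ be two \heavy neighbors of $v$ that beep in super-\round $i$.

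\begin{itemize}
\item For each $j\le\log n$, the pattern $\alpha_{u_1}$ beeps in exactly one of the \rounds $2j-1,2j$, by the 10/01 encoding. By Step~1, every pair $(2j-1,2j)$ of $\beta_{v,i}$ therefore contains at least one $1$.
\item Since IDs are unique and have the same length $\log n$, there is a position $j^*$ where the IDs of $u_1$ and $u_2$ differ. At that position one of them beeps in \round $2j^*-1$ and the other in \round $2j^*$. By Step~1, both positions $2j^*-1$ and $2j^*$ of $\beta_{v,i}$ are $1$.
\end{itemize}

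Summing over the $\log n$ pairs, the first $2\log n$ positions of $\beta_{v,i}$ contain at least $\log n+1$ ones. So $\beta_{v,i}$ violates the validity condition (exactly $\log n$ ones) and is discarded.

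\textbf{Main obstacle.} The only delicate point is to make sure that the interference from light neighbors cannot push the count back to exactly $\log n$. Step~1 settles this by monotonicity: extra beepers can only turn $0$'s into $1$'s. The argument also relies on $\sigma\ge 1$ from the definition of \heavy links and on all IDs being padded to exactly $\log n$ bits, and I would state both assumptions explicitly.
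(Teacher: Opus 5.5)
Your proof is correct and takes the same route as the paper: the first $2\log n$ positions of $\beta_{v,i}$ contain more than $\log n$ ones, so super-step $i$ is not valid and is skipped during decoding. Your version supplies the details the paper leaves implicit: a single heavy beeper already reaches the threshold since $\sigma\ge 1$, interference can only add ones, every pair of positions gets at least one $1$, and the pair where the two IDs differ gets two.
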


\remove{
\begin{proof}[Proof of Lemma~\ref{lem:algo-correct}]
It follows from the fact that $v$ can recognize that at least two of its \heavy neighbors beeped, or that the aggregated \weight from other neighbors is at least $1$, by analyzing the first $2\log n$ \rounds of this super-\round. More precisely, among these bits, there would be more than $\log n$ beeps in these \rounds, while in the case of one \heavy neighbor and aggregated \weight of others smaller than $1$ there should be exactly $\log n$ beeps in the received sequence~$\beta_{v,i}$.
\end{proof}
}
\begin{proof}
It follows from the fact that $v$ can recognize that at least two of its \heavy neighbors beeped, or that the aggregated \weight from other neighbors is at least $1$, by analyzing the first $2\log n$ \rounds of this super-\round. More precisely, among these bits, there would be more than $\log n$ beeps in these \rounds, while in the case of one \heavy neighbor and aggregated \weight of others smaller than $1$ there should be exactly $\log n$ beeps in the received sequence~$\beta_{v,i}$.
\end{proof}

\begin{theorem}
\label{thm:algorithm-LB}
For $0<p\leq 1/(4I_\sigma)$, \mm{\probfull} can be solved in $O(I_\sigma \log n \cdot (\log n+\cM))$ time slots whp.
\end{theorem}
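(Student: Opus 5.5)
The plan is to reduce Theorem~\ref{thm:algorithm-LB} to Theorem~\ref{thm:One-beep-LB} by viewing the super-\rounds of Algorithm~\ref{alg:local-broadcast} as the \rounds of the One-beep Local Broadcast algorithm.

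First I would couple the two executions. In super-\round $i$ every node makes a single decision, beep (with pattern $\alpha_v$) or listen for the whole super-\round, with probability $p=1/(4I_\sigma)$, independently across nodes and super-\rounds. This is exactly the random experiment of \round $i$ of OBLB. Call super-\round $i$ \emph{dominant} for a link $(x,y)\in E_\sigma$ if $x$ decides to beep, $y$ decides to listen, and the aggregated \weight on $y$ of all other beep-deciders is less than $1$. That is precisely the Dominance Condition applied to the decision vector.

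Applying Theorem~\ref{thm:One-beep-LB} to this decision process, after $\tau=O(I_\sigma\log n)$ super-\rounds (using $|E_\sigma|\le n^2$, so $\log|E_\sigma|=O(\log n)$) every heavy link has had at least one dominant super-\round, whp. The constant in $\tau$ can be boosted to reach any polynomial error.

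Next I would argue correctness of reception and of filtering.
\begin{itemize}
\item For a dominant super-\round, no other heavy neighbor of $y$ beeps, since a heavy neighbor has \weight $\sigma\ge1$. So Lemma~\ref{lem:algo-progress} applies: $\beta_{y,i}=\alpha_x$, the super-\round is valid, and $y$ decodes the ID and message of $x$.
\item For any super-\round in which $y$ listens but which is not dominant for a single sender, the output must contain no spurious pair. Here I would use the validity test as in Lemma~\ref{lem:algo-correct}. The pattern $\alpha_v$ has exactly one beep per pair of \rounds $(2j-1,2j)$. The received sequence in the ID part is a thresholded OR-like union of such patterns, so each pair contains at least one $1$ whenever the beepers' total \weight reaches $1$.
\end{itemize}

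The main obstacle is the second item. Light interferers can produce sub-threshold sums that make some pairs read $00$, and the pattern of which pairs read $00$ depends on which ID bit each light node has. So the count of 1's among the first $2\log n$ bits could be exactly $\log n$ while the decoded string is a mixture of IDs, or even an ID of a non-heavy node. I would first try to show that, because the received signal is monotone in the set of beepers, a valid sequence (exactly one $1$ per pair, all pairs nonempty) forces a single ``pivotal'' set of beepers. If that fails, I would weaken the claim in the style of the paper's Lemma~\ref{lem:algo-correct} to: when two heavy neighbors beep, some pair reads $11$ because each alone exceeds the threshold and their IDs differ in some bit, so the super-\round is invalid. Residual cases would be handled by accepting only IDs consistent across the analysis, or simply by relying on Lemma~\ref{lem:algo-correct} as stated.

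Finally, the time bound is $\tau$ super-\rounds of length $2\log n+\cM$ each, giving $O(I_\sigma\log n\cdot(\log n+\cM))$ time slots whp. Decoding is local and costs no \rounds. For general $0<p\le 1/(4I_\sigma)$ the same argument gives $O(\log n/p)$ super-\rounds, which matches the stated bound when $p=\Theta(1/I_\sigma)$, as in the algorithm.
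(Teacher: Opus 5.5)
Your proof is essentially the paper's argument:
\begin{itemize}
\item couple super-steps with OBLB steps;
\item get a dominant super-step for every heavy link from Theorem~\ref{thm:One-beep-LB};
\item decode that super-step via Lemma~\ref{lem:algo-progress};
\item discard super-steps with two or more heavy beepers via Lemma~\ref{lem:algo-correct};
\item multiply $\tau=O(I_\sigma\log n)$ by $2\log n+\cM$.
\end{itemize}
Two of your details are more careful than the paper's own proof: you use $|E_\sigma|\le n^2$ where the paper appeals to $|E_\sigma|=\Theta(n)$, and you note that the count is really $O(\log n/p)$ super-steps.

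The obstacle you raise in your second bullet is genuine. Suppose exactly three neighbors of $y$ beep, each non-heavy with weight $0.6$. Then every ID pair reads exactly one $1$, so $y$ sees a valid string that decodes to the bitwise majority of their IDs and messages. Hence your ``pivotal set'' attempt cannot succeed, and the paper's claim that ``too noisy'' super-steps are recognized is not justified. However, Definition~\ref{def:LB} only requires that each heavy neighbor's ID and message be received, not that spurious pairs be excluded. So your first bullet already proves the theorem, and your fallback to Lemma~\ref{lem:algo-correct} as stated is exactly what the paper does.
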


\begin{proof}
We combine Theorem~\ref{thm:One-beep-LB} with Lemma~\ref{lem:algo-progress} to conclude that for any node $v$ and any of its \heavy neighbors $v^*$, there is a super-\round $i$ when $v^*$ is the only \heavy neighbor of $v$ that beeps, and the aggregated \weight of other neighbors of $v$ is smaller than the channel threshold $1$ (hence, the beeps of $v^*$ within the super-\round are heard, as it is \heavy, while the slots when $v^*$ does not beep are silent at $v$ due to the threshold channel condition). Thus, the ID and message of $v^*$ are correctly decoded by 
$v$ at the end of the algorithm.

Lemma~\ref{lem:algo-correct} implies that node $v$ correctly ignores information that could come from more than one \heavy neighbor or be ``too noisy'' by aggregated \weight of others -- node $v$ correctly recognize it as non-valid.

The number of \rounds is $\tau \cdot (2\log n + \cM)$. By Theorem~\ref{thm:One-beep-LB}, which analysis is used to compute $\tau$, and by asymptotic $|E_\sigma|=\Theta(n)$, the total number of time slots is $O(I_\sigma \log n \cdot (\log n+\cM))$.
\end{proof}




\subsection{Extensions and Dropping Off Some Limitations}
\label{sec:extensions}

Our main algorithm in Section~\ref{sec:general-algorithm} can be extended in the following ways.

\paragraph*{Arbitrary (unknown) $I_\sigma$ and only polynomial upper bound on $n$.}

First, observe that in the algorithm we only need a polynomial upper bound on $n$,
as $n$ occurs under logarithm in the formula on \mwd in Theorem~\ref{thm:algorithm-LB}. 

The impact of an estimation of the actual parameter $I_\sigma$
 could be, hypothetically, more significant, as the number of learning \rounds depends linearly on $I_\sigma$
 (see Thm.~\ref{thm:algorithm-LB} and later Thm.~\ref{thm:lower}).
Our algorithm could be transformed into one without the need of knowing $I_\sigma$
in advance, by using doubling estimates~of~$I_\sigma$ (i.e., we keep running our algorithm for parameters $I_\sigma$ being subsequent powers~of~$2$) until 
\miguel{no beep is dominant.}
The main property that makes it feasible is described in Lemma~\ref{lem:algo-correct} -- receiving nodes can recognize whether the current communication attempt comes from one agent (which is desired) or more than one (in which case, the received sequence of beeps can be ignored).
Additionally, acknowledgments of successful transmissions have to be implemented: by repeating the same algorithm, but this time with messages being acknowledgments of previously received messages. Note that this method increases the number of steps only by a constant factor.



\paragraph*{De-synchronized systems.}
 Due to a relatively simple structure and various checking mechanisms in our algorithm, it 
 \mmr{can} be extended to de-synchronized settings (with clock shifts) by using the \mmr{known} synchronizing methods for beeping networks, see e.g., the recent work~\cite{DeMarcoK25}. These methods require only a constant overhead, and transform algorithms with structures similar to ours (i.e., main loop iterating a preamble code and message) from fully synchronized into de-synchronized~solutions. 

\remove{
\paragraph*{Additional comments on asynchrony. 
\dk{[[[I'd remove this whole part on additional asynchrony]]]}}

There are natural multiagent environments where the entities have an initial phase of synchronization followed by synchronous interactions. For instance, bacteria communicate with each other through a process called quorum sensing, using chemical signals to coordinate group behaviors. This communication allows them to sense population density and synchronize activities like biofilm formation, virulence, and bioluminescence. (See Sec.~\ref{sec:bacteria}~for~more~details.)

 We also note that full asynchrony (i.e., when a single beep cycle of one agent could correspond to several beeps of other agent(s)) is a challenging question, left to future work, because even at starting point it requires defining the feedback based not just on the current set of transmitters, but actually on transmitting sequences (of potentially different lengths) of many simultaneously active nodes. From a practical perspective, since we account for every single communication bit in the beeping model, clock drifts and asynchrony would imply that entities try to communicate using different frequencies, which does not typically happen in natural or applied systems, and if it happens, it typically implies no successful communication means.

 More precisely, 
 our model is a low-level information-theoretical model. Unlike a vast majority approaches, e.g., federated or distributed learning, which operate at high layers of the protocol stack (and thus have to deal with asynchrony), our approach models low-layer algorithms. By analogy to communication protocols' stack, our algorithmic approach combines coding (physical layer) with overcoming negative \weight \mgl{[[what weight? this is confusing]]} (link layer), and our goal was to characterize efficiency of learning at such low (information) level. While clock shifts are natural phenomena for low-level protocols, asynchrony (as understood in classic distributed computing/learning) is not. The reason is that asynchrony at low level would mean in fact different frequency, which in case of both artificial wireless networks and biological networks is typically considered non-compatible. For instance, while codes could be applied to higher-level asynchronous algorithms in distributed computing, the codes themselves are typically designed and analyzed as synchronous objects (i.e., with fixed position indexing by subsequent integers).
\mgl{[[this paragraph sounds defensive and confusing, I think we should remove it. Also, I think this subsection should be merged with the previous one, as both are about asynch]]}
}

\paragraph*{Dynamic weights and noise.}


In our model, physical conditions, such as noise, are embedded in link weights. The question of fluctuating physical conditions then boils down to how dynamic weights impact the complexity of \probshort. Under arbitrary dynamicity, \probshort
is not computable in 
a WBN: if neighborhoods may change in every \round, the computation must have taken at most one \round. This observation yields that \probshort could be defined for restricted dynamic settings, for instance, if weights change only every so many \rounds, or changes are driven by a stochastic process with bounded deviation. Our algorithms can be applied to such settings as well, as long as weights are locally static (or nearly static, in the sense that at any time point, their \mwd does not exceed a given upper bound) during the execution of the algorithm.
This is because Lemma~\ref{lemma:fraction} still holds for any single \round of such dynamic model, while the way it is used in the final proofs of Theorem~\ref{thm:One-beep-LB} and Theorem~\ref{thm:algorithm-LB} is independent on specific values of weights but depends only the upper bound on \mwd.  
More dynamic scenarios, e.g., when \mwd of current weights may oscillate, are left as a challenging open problem. Note that our lower bound (for static weights) automatically extends to dynamic scenarios, and thus gives a point of reference to the future study of more dynamic scenarios.

\section{Lower Bound}
\label{sec:lower-bound}

In this section, we prove that any \probfull algorithm takes at least $\Omega(I_\sigma)$ beeping \rounds in some WBN.
In fact, even a simplified one dominant beep per \heavy neighbor version of the problem requires such number of beeping \rounds in some networks.

\mm{In the analysis of this section we will show an adversarial WBN that requires the claimed running time. The network topology of such WBN is a complex bipartite graph (non-existing links correspond to weight $0$ between the end nodes).}
%
We start with a technical lemma 
followed by the main~theorem.

\begin{lemma}
\label{lemma:agginf}
Consider a WBN with a complete bipartite topology graph $(B,L,E)$, where $B$ is a set of beepers, $L$ a set of listeners, $E=\{(b,\ell)|b\in B \land \ell\in L \}$, and $|B|=|L|$.
The network is embedded in a metric space with distance function $d:B\times L \to \mathbb{R}^+$. 
The \weight function is $w(b,\ell)=d_{\min}/d(b,\ell)$ for all $(b,\ell)\in E$, where $d_{\min}$ is the smallest length among all links in $E$. 
Consider some link $(b',\ell')\in E$ and a set of links $\cE\subset E$ such that 
$(b',\ell')\notin \cE$. 
Then, if it is possible 
to have a dominant beep
in each link $(b,\ell)\in \cE$ in one time slot, it is $\sum_{(b,\ell)\in \cE} w(b,\ell')\in O(1).$
\end{lemma}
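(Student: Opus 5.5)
Throughout, we use the dominance condition of Section~\ref{sec:problem} and the metric structure (triangle inequality on $d$). Fix a time slot in which every link $(b,\ell)\in\cE$ carries a dominant beep. Then every $b$ appearing in $\cE$ beeps and every $\ell$ appearing in $\cE$ listens. Moreover, for each such listener $\ell$, the aggregated weight at $\ell$ from all beepers other than its own dominant beeper is less than $1$. For $\ell$ to be dominated at all, it must have a unique heavy neighbour among the beepers in that slot. So, without loss of generality, each listener appears in at most one link of $\cE$, and we may also assume $|\cE|\ge 2$.

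The plan is to transfer the low-interference constraint at the listeners of $\cE$ into a bound at $\ell'$ using the triangle inequality. The key step is a distance comparison. Take two links $(b_1,\ell_1),(b_2,\ell_2)\in\cE$ with $b_1\neq b_2$. Dominance at $\ell_1$ gives $w(b_2,\ell_1)<1$, i.e. $d(b_2,\ell_1)>d_{\min}$, and also $w(b_1,\ell_1)\ge\sigma\ge1$, i.e. $d(b_1,\ell_1)\le d_{\min}/\sigma$. Distances between beepers are defined through the metric (extend $d$ to the ambient space). By the triangle inequality, $d(b_1,b_2)\ge d(b_2,\ell_1)-d(b_1,\ell_1)$. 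More usefully, the sum constraint $\sum_{b\ne b_1} d_{\min}/d(b,\ell_1)<1$ states that the beepers of $\cE$ other than $b_1$ are ``spread out'' as seen from $\ell_1$, and since $\ell_1$ is within $d_{\min}/\sigma$ of $b_1$, they are also spread out as seen from $b_1$.

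Now we bound $\sum_{(b,\ell)\in\cE} w(b,\ell')=\sum d_{\min}/d(b,\ell')$. Pick the beeper $b_0$ of $\cE$ closest to $\ell'$, whose weight contributes at most $1$ since $d\ge d_{\min}$. For every other beeper $b$ we compare $d(b,\ell')$ with $d(b,\ell_0)$, where $\ell_0$ is the listener dominated by $b_0$. By the triangle inequality,
\[
d(b,\ell_0)\le d(b,\ell')+d(\ell',b_0)+d(b_0,\ell_0)\le 2d(b,\ell')+d_{\min}/\sigma\le 3d(b,\ell'),
\]
using $d(\ell',b_0)\le d(\ell',b)$ and $d(b,\ell')\ge d_{\min}$. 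Hence
\[
w(b,\ell')\le 3\,w(b,\ell_0).
\]
Summing over $b\ne b_0$ and using dominance at $\ell_0$ gives
\[
\sum_{(b,\ell)\in\cE} w(b,\ell')\le 1+3\sum_{b\ne b_0}w(b,\ell_0)<1+3=4=O(1).
\]

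The main obstacle is bookkeeping when several links in $\cE$ share a beeper, so that the sum over links overcounts a beeper's weight at $\ell'$. If the statement counts links rather than beepers, a beeper $b$ with $k$ links in $\cE$ contributes $k\,w(b,\ell')$. I would handle this as follows. Each of those $k$ listeners $\ell_j$ satisfies $d(b,\ell_j)\le d_{\min}/\sigma$. In the application to Theorem~\ref{thm:lower} one may restrict $\cE$ to a matching, which is presumably how the lemma is used. Otherwise, apply the same triangle-inequality transfer separately to each of the $k$ listeners $\ell_j$, which yields a bound of $O(1)$ per listener class. I expect the intended proof restricts to distinct beepers, so I would first check whether the theorem's use of the lemma needs only that case.
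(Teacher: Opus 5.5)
Your argument is correct when the links of $\cE$ have distinct beepers, and it takes a genuinely different and shorter route than the paper.

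The paper anchors at the \emph{listener} $\ell''$ of $\cE$ nearest to $\ell'$, at distance $2r$, and splits $\cE$ by whether $d(b,\ell')\le r$. For the distant beepers, $d(b,\ell'')\le d(b,\ell')+2r\le 3d(b,\ell')$ transfers their weight at $\ell'$ to the interference at $\ell''$, giving a bound below $6$. The close beepers need a separate packing argument: it uses $d(b,b'')\le 2r\le 2d(b'',\ell'')$ and dominance at a close link to show there are fewer than $4$ of them. The total is below $10$.

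You anchor instead at the \emph{beeper} $b_0$ of $\cE$ nearest to $\ell'$. Then $d(\ell',b_0)\le d(\ell',b)$ holds by choice, so no close case arises. The leftover term $d(b_0,\ell_0)$ is absorbed because dominance forces $w(b_0,\ell_0)\ge\sigma\ge 1$, i.e.\ $d(b_0,\ell_0)=d_{\min}\le d(b,\ell')$. The single estimate $w(b,\ell')\le 3\,w(b,\ell_0)$ then gives the better constant $4$. Your chain $b\to\ell'\to b_0\to\ell_0$ also uses only beeper--listener distances, whereas the paper needs the ambient metric on beeper pairs and listener pairs as well. The paper's two-case split is not needed in this setting.

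One correction to your last paragraph: the fallback of handling each listener class separately cannot give $O(1)$. With shared beepers, the statement as written, which sums over links, is false. Put a beeper $b^\ast$ at the centre of a circle of radius $d_{\min}$ carrying listeners $\ell_1,\ldots,\ell_m,\ell'$, place the other $m$ beepers far away, and let $\sigma=1$. If only $b^\ast$ beeps, its beep is dominant at every $\ell_j$. Taking $\cE=\{(b^\ast,\ell_j)\}_{j\le m}$ and $(b',\ell')=(b^\ast,\ell')$ gives $\sum_{(b,\ell)\in\cE}w(b,\ell')=m\,w(b^\ast,\ell')=m$.

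So the distinct-beeper restriction is necessary, not merely convenient. It is exactly what Theorem~\ref{thm:lower} uses, since there $\cE\subseteq E_1$, a perfect matching. The paper's own proof also relies on it implicitly, when it bounds the other links of $\cE_{close}$ at $\ell''$ via dominance of $b''$. State it as a hypothesis and drop the fallback.
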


\remove{
\begin{proof}{(Sketch):}
To prove the claim we split the set of links $\cE$ into \emph{distant} and \emph{close} with respect to listener $\ell'$.
For the set $\cE_{close}$ of close links, we upper bound the \weight of their beepers on $\ell'$ simply by $|\cE_{close}|$ (since each link $(b,\ell)\in E$ has \weight 
$w(b,\ell)=d_{\min}/d(b,\ell)\leq 1$), and we bound $|\cE_{close}|$ by a constant using a geometric argument and the fact that a beep
is dominant in each link in $\cE_{close}$. 
For the set $\cE_{dist}$ of distant links, we bound the \weight of their beepers on $\ell'$, by their \weight on a listener at shortest distance from $\ell'$. We bound the latter by a constant using a geometric argument and the fact that a beep
is dominant in each link in $\cE$.
The details~follow. 

Let $2r$ be the shortest distance from any listener in $\cE$ to listener $\ell'$.
Let $\cE_{close}=\{(b,\ell)\in \cE| d(b,\ell')\leq r\}$, that is the set of links $(b,\ell) \in \cE$ whose beeper $b$ is within distance $r$ from listener $\ell'$. 
We bound first the \weight of close beepers on $\ell'$ as 
\begin{align}
\sum_{(b,\ell)\in~\cE_{close}} w(b,\ell') 
= \sum_{(b,\ell)\in~\cE_{close}} \frac{d_{\min}}{d(b,\ell')} 
\leq |\cE_{close}|
\mam{\ < 4}
\ .
\label{eq:aggonvshort}
\end{align}
The latter inequality can be proved using a geometric argument (whose boundary case is illustrated in 
Figure~\ref{fig:folding}),
and the conditions for 
dominance of a beep.

We now bound the \weight of beepers in $\cE_{dist}=\cE\setminus \cE_{close}$ on listener $\ell'$ \mam{as follows.} 
\begin{align}
    \sum_{(b,\ell)\in~\cE_{dist}} w(b,\ell') = \sum_{(b,\ell)\in~\cE_{dist}} \frac{d_{\min}}{d(b,\ell')}
    \ < 6
    \ ,
    \label{eq:outsideshort}
\end{align}
where the latter inequality can be also proved using a geometric argument (whose boundary case is illustrated in 
Figure~\ref{fig:folding2}),
and the conditions for 
dominance of a beep.
Combining the bounds in 
Equations~\ref{eq:aggonvshort} and~\ref{eq:outsideshort}, 
the claim follows.
\end{proof}
}


\begin{figure}[t!]
  \centering
  \includegraphics[width=.6\linewidth]{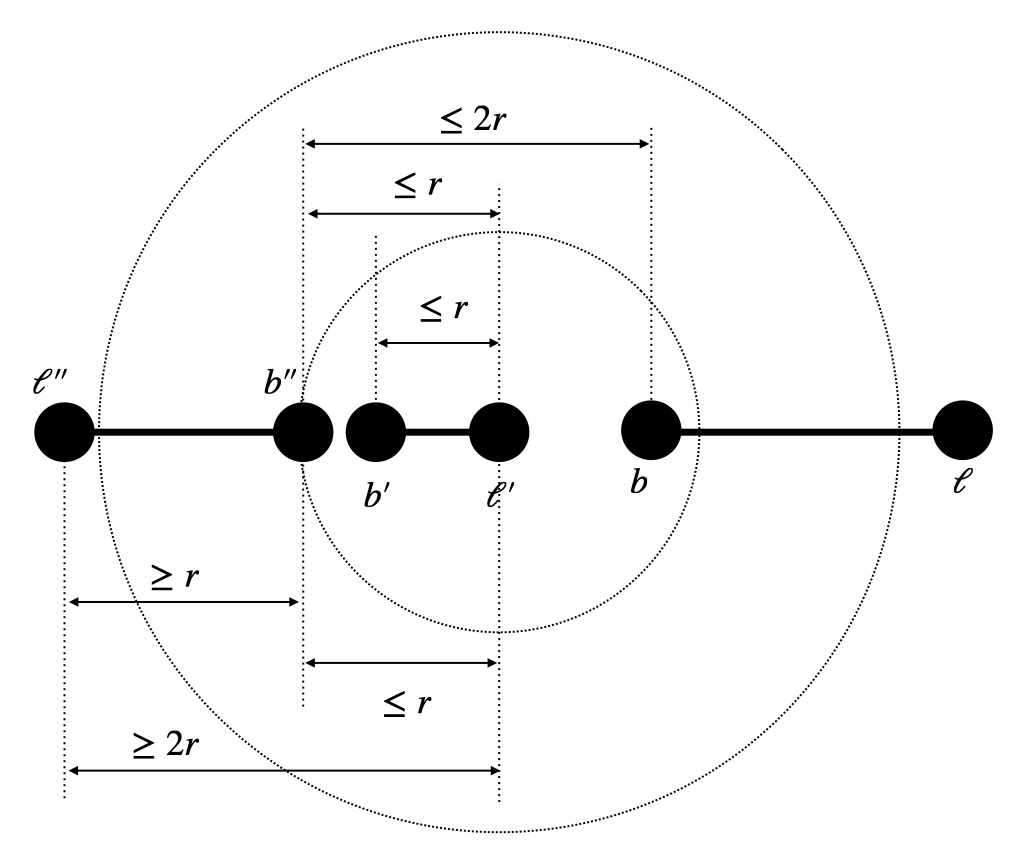}
  \caption{  Distance bound $d(b,\ell'')\leq 3d(b'',\ell'')$, for some link $(b'',\ell'')\in \cE_{close}$, which yields $\sum_{(b,\ell)\in\cE_{close}}w(b,\ell')<4$ using that a beep is 
\miguel{dominant in}
link $(b'',\ell'')$.
  }
  \label{fig:folding}
\end{figure}

\begin{proof}
To prove the claim we split the set of links $\cE$ into \defn{distant} and \defn{close} with respect to listener $\ell'$.
For the set $\cE_{close}$ of close links, we upper bound the \weight of their beepers on $\ell'$ simply by $|\cE_{close}|$ 
(since each link \miguel{$(b,\ell)\in E$} has \weight 
\miguel{$w(b,\ell)=d_{\min}/d(b,\ell)\leq 1$}),
and we bound $|\cE_{close}|$ by a constant using a geometric argument and the fact that a beep
\miguel{is dominant in each}
 link in $\cE_{close}$. 
For the set $\cE_{dist}$ of distant links, we bound the \weight of their beepers on $\ell'$, by their \weight on a listener at shortest distance from $\ell'$. We bound the latter by a constant using a geometric argument and the fact that a beep
\miguel{is dominant in each}
 link in $\cE$.
The details~follow.

Let $2r$ be the shortest distance from any listener in $\cE$ to listener $\ell'$.
Let $\cE_{close}=\{(b,\ell)\in \cE| d(b,\ell')\leq r\}$, that is the set of links $(b,\ell) \in \cE$ whose beeper $b$ is within distance $r$ from listener $\ell'$. 

We bound first the \weight of close beepers on $\ell'$ as 
\begin{align}
\sum_{(b,\ell)\in~\cE_{close}} w(b,\ell') 
= \sum_{(b,\ell)\in~\cE_{close}} \frac{d_{\min}}{d(b,\ell')} 
\leq |\cE_{close}|
\ .
\label{eq:aggonv}
\end{align}
We now bound $|\cE_{close}|$ as follows. 
Fix a link $(b'',\ell'')\in~\cE_{close}$.
The aggregated \weight of the other links in $\cE_{close}$ on listener $\ell''$ is
\begin{align}
\sum_{(b,\ell)\in~\cE_{close}\setminus\{(b'',\ell'')\}} w(b,\ell'') &= \sum_{(b,\ell)\in ~\cE_{close}\setminus\{(b'',\ell'')\}} \frac{d_{\min}}{d(b,\ell'')}
\ . 
\label{eq:agg}
\end{align}
The equality in the latter is due to the graph being complete bipartite. 
The boundary case of the following geometric argument is illustrated in Figure~\ref{fig:folding}.

\begin{figure}[t!]
  \centering
  \includegraphics[width=.6\linewidth]{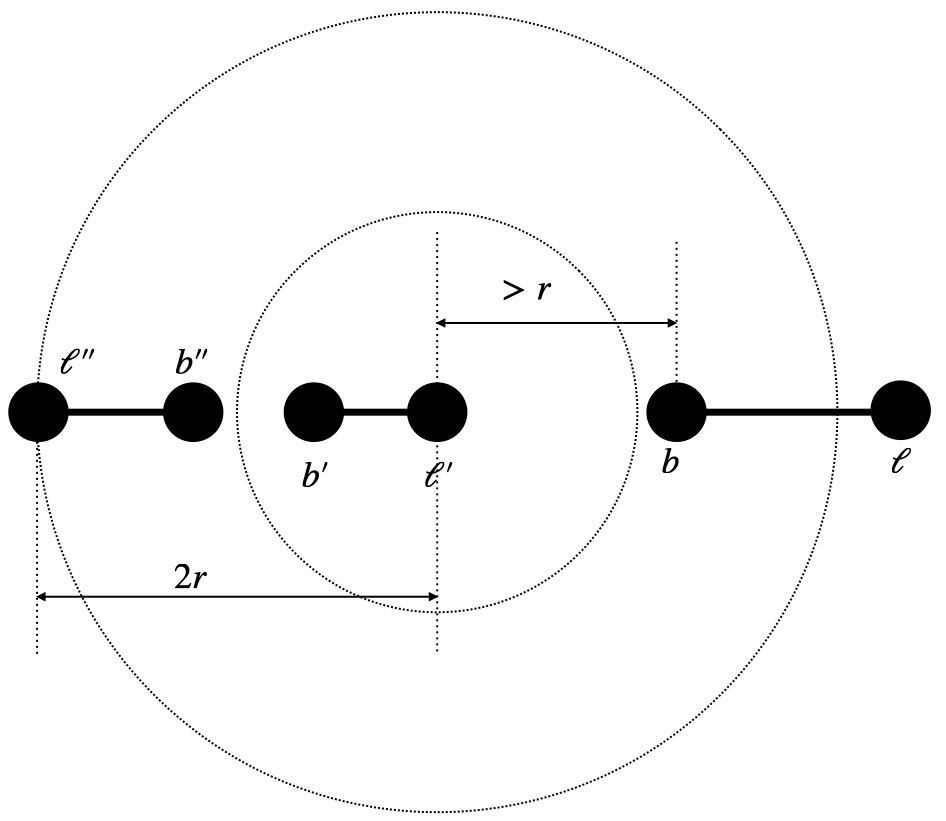}
  \caption{Distance bound $d(b,\ell'')\leq 3d(b,\ell')$, for some link $(b'',\ell'')\in \cE$ whose listener is at shortest distance $2r$ of $\ell'$, which yields $\sum_{(b,\ell)\in~\cE_{dist}} w(b,\ell')< 6$ using that a beep is 
\miguel{dominant in}
link $(b'',\ell'')$.
}
  \label{fig:folding2}
\end{figure}

\remove{
\begin{figure}[htbp]
\begin{minipage}{.5\textwidth}
  \centering
  \includegraphics[width=.9\linewidth]{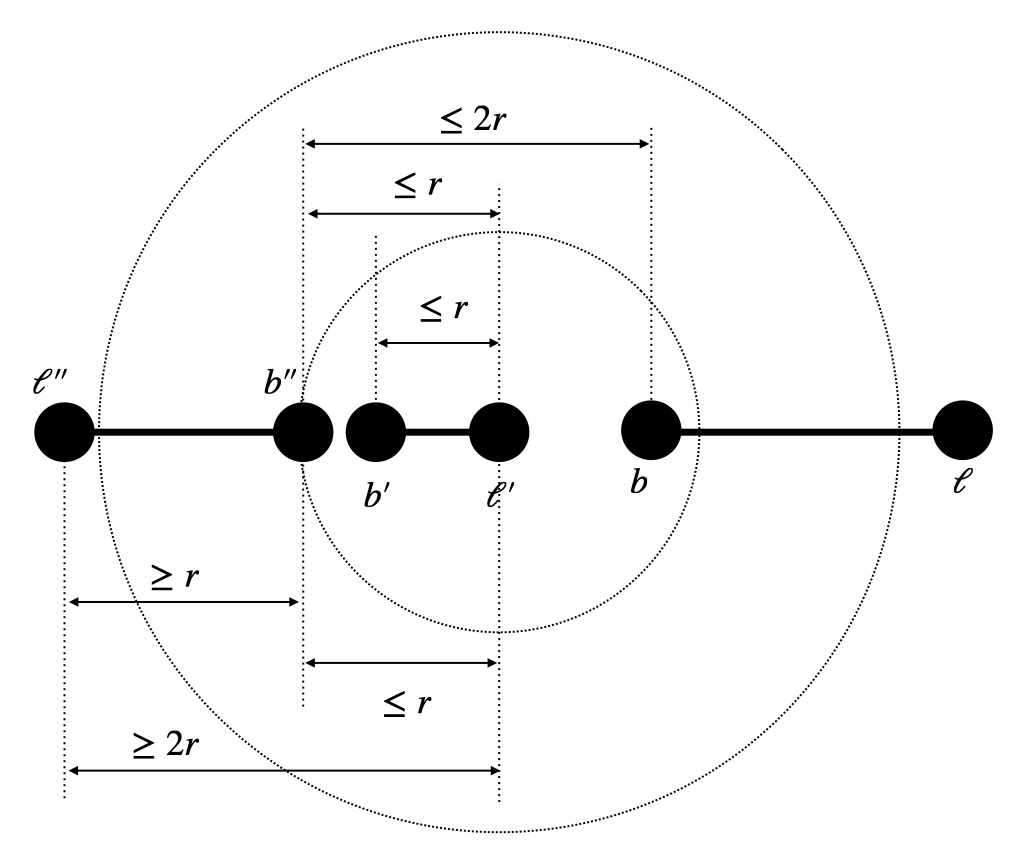}\\
    Distance bound $d(b,\ell'')\leq 3d(b'',\ell'')$, for some link $(b'',\ell'')\in \cE_{close}$, which yields $\sum_{(b,\ell)\in\cE_{close}}w(b,\ell')<4$ using that a beep is 
\miguel{dominant in}
link $(b'',\ell'')$.
  \caption{something}
  \label{fig:folding}
\end{minipage}%
\begin{minipage}{.5\textwidth}
  \centering
  \includegraphics[width=.9\linewidth]{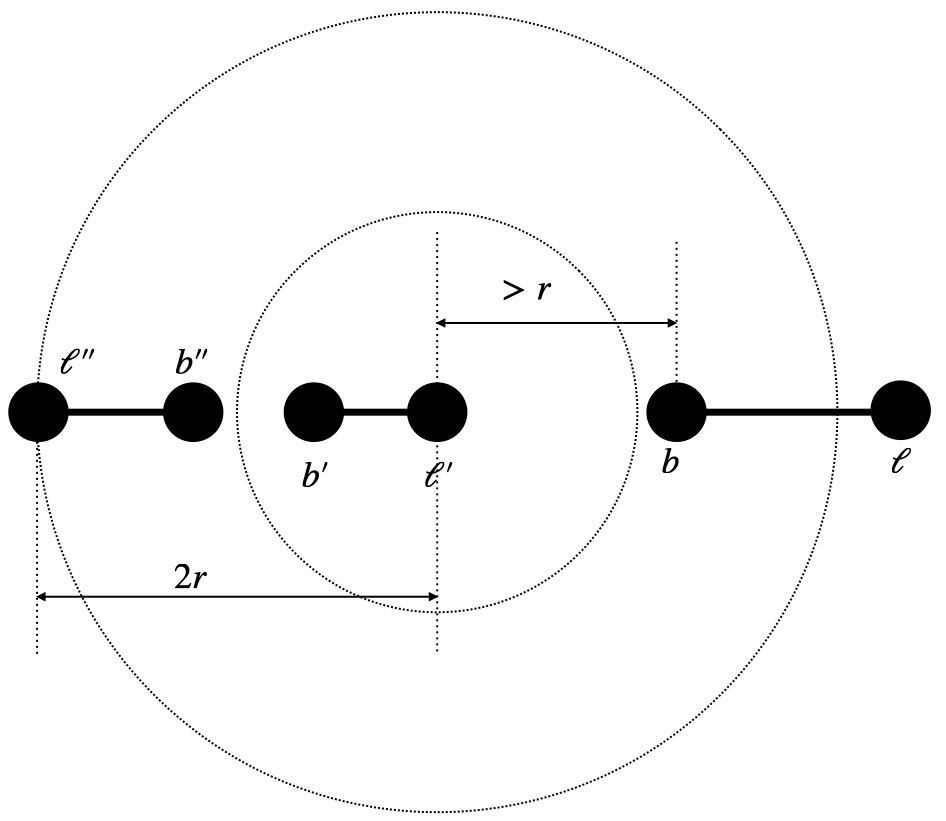}\\
Distance bound $d(b,\ell'')\leq 3d(b,\ell')$, for some link $(b'',\ell'')\in \cE$ whose listener is at shortest distance $2r$ of $\ell'$, which yields $\sum_{(b,\ell)\in~\cE_{dist}} w(b,\ell')< 6$ using that a beep is 
\miguel{dominant in}
link $(b'',\ell'')$.
  \label{fig:folding2}
\end{minipage}%
\remove{
\begin{subfigure}[htbp]{0.48\textwidth}
\centering
\includegraphics[scale=0.2]{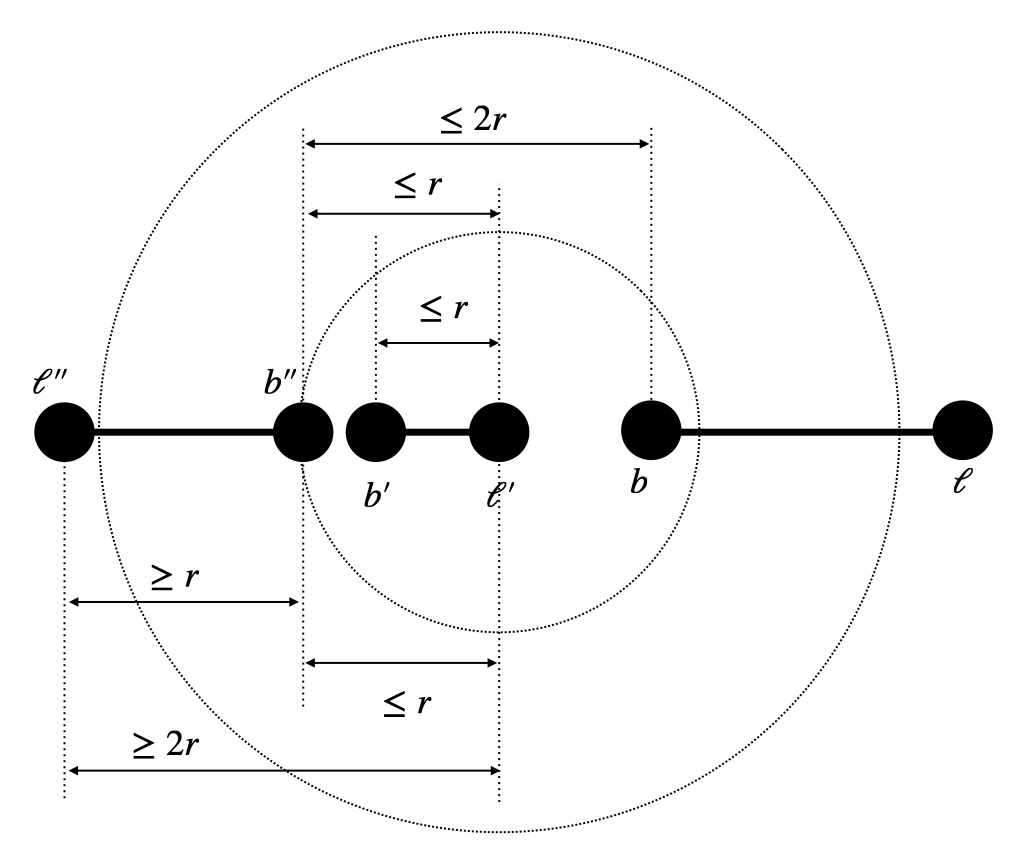}
\caption{
Distance bound $d(b,\ell'')\leq 3d(b'',\ell'')$, for some link $(b'',\ell'')\in \cE_{close}$, which yields $\sum_{(b,\ell)\in\cE_{close}}w(b,\ell')<4$ using that a beep is 
\miguel{dominant in}
link $(b'',\ell'')$.
}
\label{fig:folding}
\end{subfigure}
\hspace{.3in}
\begin{subfigure}[htbp]{0.48\textwidth}
\centering
\includegraphics[scale=0.2]{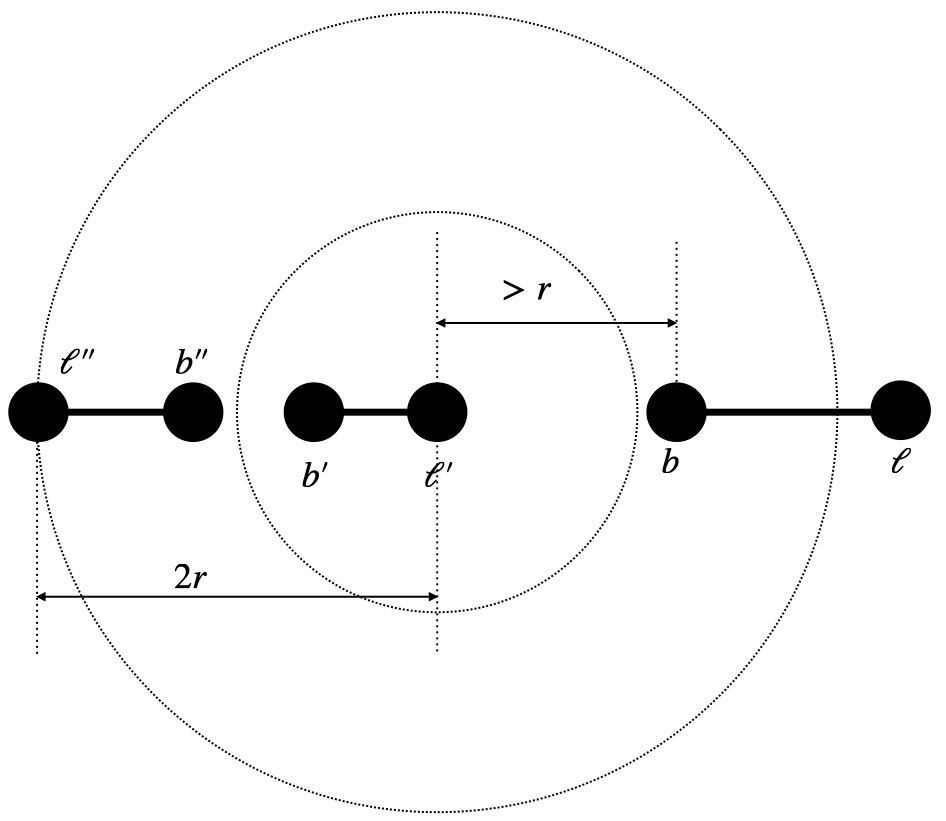}
\caption{Distance bound $d(b,\ell'')\leq 3d(b,\ell')$, for some link $(b'',\ell'')\in \cE$ whose listener is at shortest distance $2r$ of $\ell'$, which yields $\sum_{(b,\ell)\in~\cE_{dist}} w(b,\ell')< 6$ using that a beep is 
\miguel{dominant in}
link $(b'',\ell'')$.}
\label{fig:folding2}
\end{subfigure}
}
\caption{Illustration of distance bounds.}
\label{fig:foldings}
\end{figure}
}%

Given that $b''$ is within distance at most $r$ of $\ell'$ (because $(b'',\ell'')\in~\cE_{close}$), and $\ell''$ is within distance at least $2r$ from $\ell'$ (because $2r$ is the shortest distance from any listener in $\cE$ to $\ell'$), the length of $(b'',\ell'')$ is at least $r$  (i.e. $r\leq d(b'',\ell'')$).
Also, given that $b$ and $b''$ are both within distance $r$ of $\ell'$ (because $(b,\ell)\in~\cE_{close}$ and $(b'',\ell'') \in~\cE_{close}$), we have that $b$ is within distance $2r$ of $b''$ (i.e. $d(b,b'')\leq 2r$).
Combining and using the triangle inequality, we have that 
\[
d(b,\ell'')\leq d(b,b'')+d(b'',\ell'')\leq 2r+d(b'',\ell'')\leq 3d(b'',\ell'') \ .
\]
Replacing in Equation~\ref{eq:agg},
\begin{align}
\sum_{(b,\ell)\in~\cE_{close}\setminus\{(b'',\ell'')\}} w(b,\ell'') 
&\geq \frac{|\cE_{close}|-1}{3}~\frac{d_{\min}}{d(b'',\ell'')} 
= \frac{|\cE_{close}|-1}{3}~w(b'',\ell'')
\ . 
\label{eq:aggonE}
\end{align}
On the other hand, given that 
a beep is 
\miguel{dominant in}
each link in $\cE$, in particular 
a beep is 
\miguel{dominant in}
link $(b'',\ell'')$. Therefore, using the conditions for 
\miguel{dominance}
we know that
\[
\sum_{(b,\ell)\in~\cE_{close}\setminus\{(b'',\ell'')\}} w(b,\ell'') < w(b'',\ell'') \ .
\]
Replacing the latter in Equation~\ref{eq:aggonE}, it is
\mam{$(|\cE_{close}|-1)w(b'',\ell'')/3 < w(b'',\ell'')$, that is,}
$|\cE_{close}| < 4$. Replacing in Equation~\ref{eq:aggonv}, we get
\begin{align}
\sum_{(b,\ell)\in~\cE_{close}} w(b,\ell') < 4.\label{eq:boundin}
\end{align}
We now bound the \weight of beepers in $\cE_{dist}=\cE\setminus \cE_{close}$ on listener $\ell'$. We have that
\begin{align}
    \sum_{(b,\ell)\in~\cE_{dist}} w(b,\ell') = \sum_{(b,\ell)\in~\cE_{dist}} \frac{d_{\min}}{d(b,\ell')}.\label{eq:outside}
\end{align}
The equality in the latter is due to the graph being complete \dk{bipartite}.
The boundary case of the following geometric argument is illustrated in Figure~\ref{fig:folding2}.

Let $(b'',\ell'')\in \cE$ be a link such that $\ell''$ is a listener at shortest distance $2r$ from listener $\ell'$. 
By triangle inequality, for any $(b,\ell)\in~\cE_{dist}$, it is $d(b,\ell'')\leq d(b,\ell')+d(\ell',\ell'') = d(b,\ell')+2r$.
Given that, for any $(b,\ell)\in~\cE_{dist}$, it is $r\leq d(b,\ell')$, we have that $d(b,\ell'')\leq 3d(b,\ell')$. 
Replacing $d(b,\ell')\geq d(b,\ell'')/3$ in Equation~\ref{eq:outside}, we get
\begin{eqnarray*}
    \sum_{(b,\ell)\in~\cE_{dist}} w(b,\ell') 
    &\leq& \sum_{(b,\ell)\in~\cE_{dist}} \frac{3d_{\min}}{d(b,\ell'')}
    \ \ = \ \ 3 \sum_{(b,\ell)\in~\cE_{dist}} w(b,\ell'')
    \\
    &\leq& 3 \left(w(b'',\ell'')+\sum_{(b,\ell)\in~\cE_{dist}\setminus\{(b'',\ell'')\}} w(b,\ell'')\right)
    \ .
\end{eqnarray*}
%
Given that 
a beep is dominant in
each link in $\cE$, in particular it is 
dominant in
$(b'',\ell'')$. Hence, by the conditions to have a 
dominant beep in
link $(b'',\ell'')$, it is $\sum_{(b,\ell)\in~\cE_{dist}\setminus\{(b'',\ell'')\}} w(b,\ell'')
< 1$. 
On the other hand, it is $w(b'',\ell'')\leq 1$ because 
for the input WBN
used in this proof all 
$w(\cdot,\cdot)$ are at most $1$. Replacing in the latter we get 
\begin{align}
\sum_{(b,\ell)\in~\cE_{dist}} w(b,\ell')< 
6
\ .
\label{eq:boundout}
\end{align}
%
Combining the bounds in 
Eqs.~\ref{eq:boundin} and~\ref{eq:boundout} the claim follows.
\end{proof}


\begin{figure}[t!]
\centering
\includegraphics[scale=0.29]{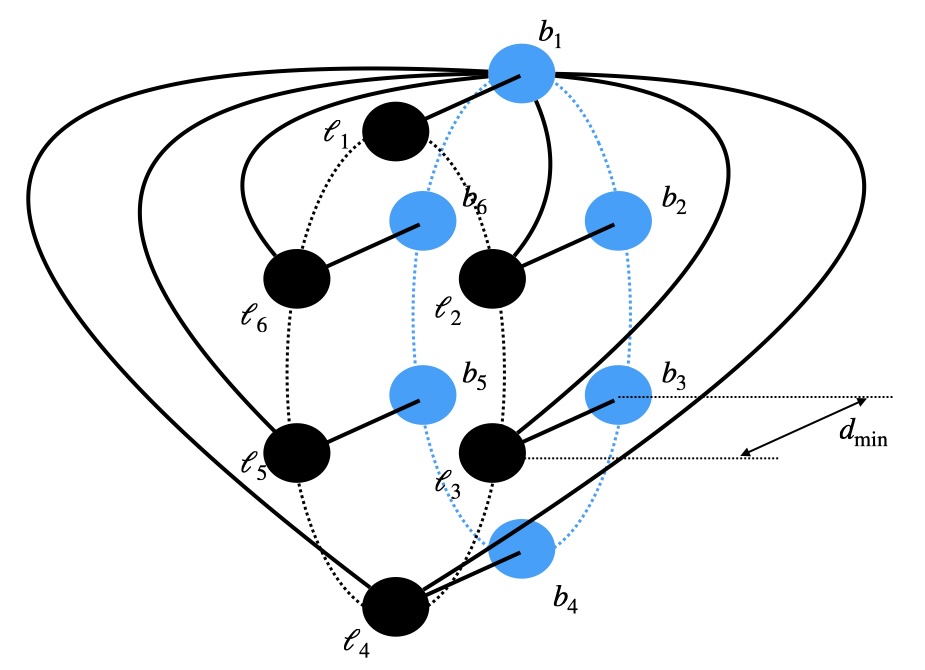}
\caption{Example of adversarial network for $n=6$ for the proof of Theorem~\ref{thm:lower}.}
\label{fig:wheels}
\end{figure}

\begin{theorem}
\label{thm:lower}
There exists a \mm{WBN} and \weight function such that, to solve the \mm{\probfull} problem, $\Omega(I_\sigma)$ time slots are required with probability $1$, where $I_\sigma$ is the 
\mwd.
\end{theorem}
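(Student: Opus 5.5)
We will prove the lower bound for the simpler OBLB problem, which suffices: a dominant beep from $x$ at $y$ is exactly the event in which $y$ can attribute a heard beep to $x$ alone, so any \probshort algorithm induces an OBLB solution in no more slots. The plan is to build an adversarial WBN in which Lemma~\ref{lemma:agginf} caps the number of heavy links that can simultaneously carry a dominant beep in any single slot. This cap holds deterministically, for every possible set of beepers, so the resulting bound holds with probability $1$.

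\textbf{Construction.} Following Figure~\ref{fig:wheels}, take a complete bipartite graph $(B,L,E)$ with $|B|=|L|=n/2$, embedded in the plane. Place $n/2$ short ``spokes'' $(b_i,\ell_i)$ of length $d_{\min}$, with $\ell_i$ at angle $2\pi i/(n/2)$ on a circle of radius $R$ around the origin and $b_i$ just outside $\ell_i$. Use weights $w(b,\ell)=d_{\min}/d(b,\ell)$, scaled by $\sigma$, so that exactly the spokes have weight $\ge\sigma$ and are the heavy links. (Scaling by $\sigma$ only multiplies the constants in Lemma~\ref{lemma:agginf}, since its dominance threshold becomes a constant multiple of the link weight.) We then compute $I_\sigma$ for this network. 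For a spoke $(b_j,\ell_j)$, the interference is $\sum_{i\neq j} d_{\min}/d(b_i,\ell_j)\approx \sum_{i\ne j} d_{\min}/(2R\sin(\pi|i-j|/(n/2)))$, which is $\Theta((d_{\min}/R)\, n\log n)$. Choosing $R$ such as $R=\Theta(d_{\min}\log n)$ gives $I_\sigma=\Theta(n)$ while keeping the links sufficiently separated. Any choice making $I_\sigma=\Theta(n)$ works, because the target is to show that $\Omega(n)$ slots are needed.

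\textbf{Per-slot cap.} Fix a slot and let $\cE$ be the set of heavy links with a dominant beep in it. Applying Lemma~\ref{lemma:agginf} to any $(b',\ell')\in\cE$ with the set $\cE\setminus\{(b',\ell')\}$ gives $\sum_{(b,\ell)\in\cE\setminus\{(b',\ell')\}} w(b,\ell')=O(1)$. Summing over all $(b',\ell')\in\cE$ yields $\sum_{\ell'}\sum_{b} w(b,\ell') = O(|\cE|)$. We lower bound the same double sum using the ring geometry: the links of $\cE$ occupy $|\cE|$ of the $n/2$ angular positions, and the pairwise weight sum over any $k$ positions is minimized when they are evenly spread. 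With even spreading, each listener collects about $(d_{\min}/R)(n/k)\cdot k\log k/n$, which tends to give only a constant. Pure weight counting is therefore too weak if the positions are spread out.

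\textbf{Main obstacle.} The main obstacle is exactly this: we need the cap $|\cE|=O(n/I_\sigma)=O(1)$ per slot, not just a weight inequality. The fix I would try first is to choose the geometry so that every pair of listeners interacts with weight $\Omega(1/I_\sigma\cdot n)$, i.e.\ all cross weights are comparable, $w(b_i,\ell_j)\ge c/n\cdot I_\sigma$, for instance by making $R$ small relative to $d_{\min}$ times $n$ so that distances vary by at most a constant factor. The lemma's bound $\sum_{b}w(b,\ell')=O(1)$ then forces $|\cE|\cdot \Theta(I_\sigma/n)=O(1)$, i.e.\ $|\cE|=O(n/I_\sigma)$. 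Since there are $n/2$ heavy links, each needing at least one dominant beep, at least $(n/2)/O(n/I_\sigma)=\Omega(I_\sigma)$ slots are required, for every execution.
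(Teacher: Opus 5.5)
Your final argument (the ``fix'') does prove the statement, but by a different route from the paper, and only once the geometry is pinned down. What you need is that every cross weight is $\Omega(I_\sigma/n)$. Taking ``$R$ small relative to $d_{\min}$ times $n$'' does not give this; you need $R=O(d_{\min})$. For example, put $\ell_i$ on a circle of radius $d_{\min}$ and $b_i$ radially outward at radius $2d_{\min}$. Then all cross distances lie in $(d_{\min},3d_{\min}]$, all cross weights lie in $[1/3,1)$ (with $\sigma=1$), and $I_\sigma=\Theta(n)$. Your estimate for $R=\Theta(d_{\min}\log n)$ also forgets that no weight exceeds $1$; that ring actually has $I_\sigma=\Theta(n\log\log n/\log n)$, not $\Theta(n)$.

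In the near-uniform regime Lemma~\ref{lemma:agginf} is not needed. For $(b',\ell')\in\cE$, the dominance condition at $\ell'$ already says the other beepers weigh less than $1$ there. So at most two other nodes of $B$ beep, hence $|\cE|\le 3$, and at least $n/6$ slots are required. Indeed, the existential statement is cheap by this route: weight $1$ on a perfect matching and $1/2$ on all other bipartite pairs already gives it, with no geometry at all.

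The paper never caps $|\cE|$ per slot. It fixes one listener $\ell_1$ and charges each heavy link $(b,\ell)\ne(b_1,\ell_1)$, with amount $w(b,\ell_1)$, to the first slot in which it has a dominant beep. Lemma~\ref{lemma:agginf} is stated for $(b',\ell')\notin\cE$ precisely so that it applies at $\ell_1$ in slots where $(b_1,\ell_1)$ is not itself dominant. It bounds the charge per slot by $O(1)$, and the total charge is $\sum_{i\ne 1}w(b_i,\ell_1)=I_\sigma$, so $I_\sigma=O(T)$.

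This sidesteps exactly the obstacle you hit. It also works for any placement meeting the lemma's hypotheses, including your original ring, where weights decay with distance and $\omega(1)$ far-apart heavy links can be dominant at once.

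So your route is more elementary but covers only near-uniform instances. In the plane, comparable cross weights force all nodes into a region of diameter $O(d_{\min})$, hence $I_\sigma=\Theta(n)$. The paper's charging argument gives the bound for genuinely geometric inverse-distance instances across a range of values of $I_\sigma$.

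Finally, your one-line reduction from \probshort to OBLB is informal: information about $x$ could in principle reach $y$ through relays or through sub-threshold combinations on light links. The paper makes the same unproved step, so this is not a divergence from it.
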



\begin{proof}
We prove the claim showing an adversarial \mm{WBN} where even One-beep Local Broadcast in a subset of links requires the claimed running time. 
Specifically, consider a complete bipartite \mm{WBN} $(B,L,E)$ as required by Lemma~\ref{lemma:agginf}.
Let the nodes be labeled as $B=\{b_1,b_2,\dots,b_n\}$ and $L=\{\ell_1,\ell_2,\dots,\ell_n\}$, for some $n>1$.
Nodes are placed in space so that,
for some $d_{\min}>0$ 
and for all $i,j\in [n]$,
if $i\neq j$ it is $d(b_i,\ell_j)>d_{\min}$, 
or otherwise it is $d(b_i,\ell_j)=d_{\min}$,
and every beeper (resp. listener) has the same distribution of \weight among its outgoing (resp. incoming) links
(see a deployment example in Figure~\ref{fig:wheels}).

\remove{
\begin{figure}[htbp]
\centering
\includegraphics[scale=0.2]{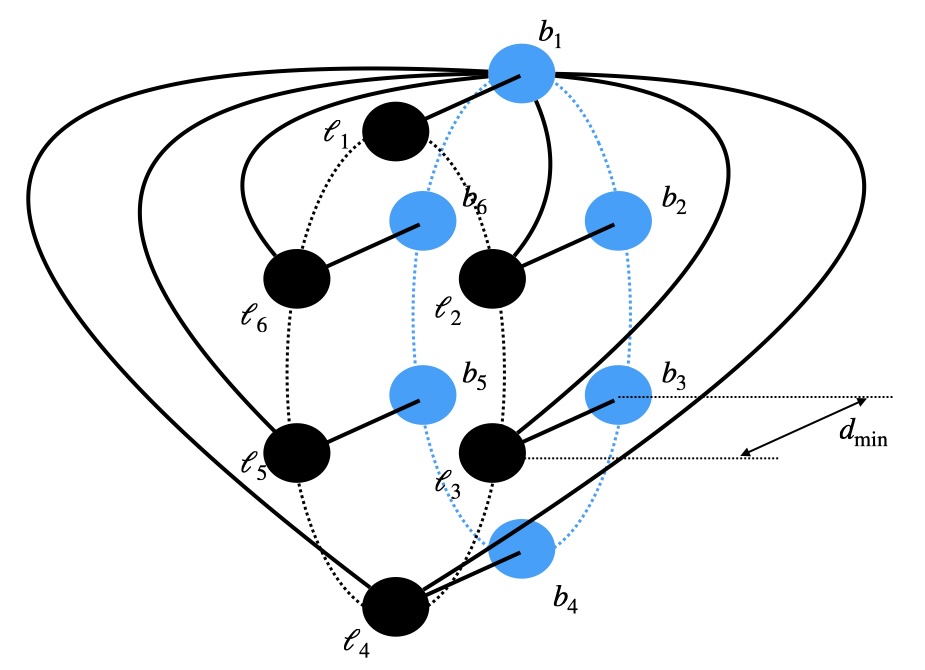}
\caption{Example of adversarial network for $n=6$.}
\label{fig:wheels}
\end{figure}
}

Under the above definition the \weight threshold is $1$ and the set of \heavy links is $E_1=\{(b_i,\ell_i)|i\in [n]\}$.
Recall 
that
%
\mgl{$I_\sigma = \max_{(x,y) \in E_1} \sum_{(z,y)\in E:z\neq x} w(z,y)$.}
%
%
\mgl{
Given that all listeners have the same \weight distribution among their incoming links, any link in $E_1$ maximizes $I_\sigma$, for instance, $(b_1,\ell_1)$. Then,
}
\remove{
\begin{align*}
I_\sigma 
&= 
\frac{1}{n}\sum_{(b,\ell)\in E_1} \sum_{(b',\ell)\in E} w(b',\ell)
= \frac{1}{n}~n \sum_{(b,\ell)\in E_1} w(b,\ell_1)
= 1 + \sum_{(b,\ell)\in E_1\setminus\{(b_1,\ell_1)\}} w(b,\ell_1)
\ .
\end{align*}
}
\mgl{
\begin{align*}
I_\sigma 
&= 
\sum_{(b,\ell)\in E_1\setminus \{(b_1,\ell_1)\}} w(b,\ell_1)
\ .
\end{align*}
}
%
%
Let $T(E_1)$ be the number of time slots needed to solve One-beep Local Broadcast on $E_1$ by some protocol~$\cP$.  
Applying Lemma~\ref{lemma:agginf} to each time slot of the execution of $\cP$, we have that for link $(b_1,\ell_1)\in E$ it is 
$
\sum_{(b,\ell)\in E_1\setminus\{(b_1,\ell_1)\}} w(b,\ell_1)\in O(T(E_1))
$.
The latter holds regardless of whether $\cP$ uses randomization or not. Thus, replacing, we get
$I_\sigma\in O(T(E_1))$ \mam{with probability~$1$,} and the claim follows.
\end{proof}

\section{General Emulation}
\label{sec:general-emulation}

Suppose we are given a problem specification $\cP$ and its solution $\cA$, both in the classic (unweighted) beeping networks model. We can emulate $\cA$, 
executed on
an (unweighted) beeping network $G'$, in any weighted beeping network $G\mgl{=(V,E)}$ satisfying $\overline{G}_\sigma = G'$, where $\overline{G}_\sigma$ is the unweighted version of $G_\sigma\mgl{=(V,E_\sigma)}$.
By ``emulation'' we mean that outputs of the corresponding nodes in both executions are the same. We call the emulating algorithm $\cA^*$, and its pseudo-code is in \emph{Algorithm~\ref{alg:emulator}} \mgl{(all line numbers in this section refer to the pseudocode in Algorithm~\ref{alg:emulator})}; it works on any WBN and for any $\sigma\ge 1$, with an integer parameter $c\ge 2$ that linearly scales the number of steps but exponentially reduces the error probability. More~details~follow.


\smallskip

\parhead{Histories, actions and outcomes}
A history vector $\cH_{v,\cA}$, indexed by steps and with values in $\{beep,0,1\}$, is implicitly built locally by a node $v$ running an algorithm $\cA$ on an underlying (unweighted) beeping network $G'$: for every integer index $t\ge 1$ for which it is defined, where $beep$ denotes that $v$ beeped in step $t$ while $1,0$ mean that $v$ was listening and heard a beep or no beep in step $t$, respectively. 
Such vector uniquely characterizes an execution of $\cA$ on the BN $G'$ (assuming random bits are already drawn), because the outcome of each step of node $v$ could be only one of the three cases described above.
By $\cA(\cH_{v,\cA}[0,\ldots,t])$ we denote the action of algorithm $\cA$ on the (partial) history vector $\cH_{v,\cA}[0,\ldots,t]$ at node $v$ after $t$ rounds, which could be either $beep$ or \mgl{$listen$} or $stop$ or not defined (which occurs continuously after the first action $stop$).\footnote{%
\dk{This may be further relaxed or extended in case of dynamic algorithms or a sleeping/wake-up models.}}
For the whole history vector $\cH_{v,\cA}$ at node $v$, the final output of the algorithm $\cA$ at node $v$ is denoted $Out(\cA,\cH_{v,\cA})$.\footnote{%
The analogous notation $\cH_{v,\cA^*}, \cA^*(\cH_{v,\cA^*}[0,\ldots,t]), Out(\cA^*,\cH_{v,\cA^*})$ could also be used for any algorithm $\cA^*$ that runs on a weighted beeping network $G$. However, in this paper we do not use this notation in the context of weighted beeping networks.
}

\smallskip

\parhead{Quasi-histories} 
The algorithm $\cA^*$, when run at a node $v$ in a weighted beeping network~$G$, incrementally builds and uses a quasi-history vector $\cH^*_{v,\cA^*}[t]\in \{n\}\times \{beep,0,1\}^* = \{n\}\times \bigcup_{i\ge 1} \{beep,0,1\}^i$, 
which is appended every \dk{$2c\log n$} \rounds of $\cA^*$ \footnote{%
$c\ge 2$ is an integer input parameter and, for the sake of avoiding rounding, we assume that $\log n$~is~an~integer. \dk{It could also suffice if the algorithm gets a (polynomial) upper bound on $n$ as part~of~the~input.}}
(see lines~\ref{alg-item:emulation-update} - \ref{line:set-beep} )
 and, as we will show later, resembles the actual history vector $\cH_{v,\cA}$ of the emulated algorithm $\cA$ as it would be produced when run at node $v$ of the unweighted version of graph~$G_\sigma$, denoted $\overline{G}_\sigma$.
\dk{The inputs of the emulated algorithm $\cA$ and the emulator $\cA^*$ are the same, and since the inputs could be modeled as the beginning values of the history vectors, the emulator $\cA^*$ sets it in lines~\ref{line:initial} and \ref{line:index-0}. 
}

\smallskip

\parhead{Emulating a single step -- distinguishing a beep from a no-beep}
In order to build quasi-histories, nodes have to distinguish a beep from a no-beep in a single emulated step of algorithm $\cA$.
\dk{
\mgl{For that purpose, each node $v$ running $\cA^*$ builds a sequence $\alpha_v$ of bits, first choosing bits locally and independently at random, and then encoding each of the bits obtained into two bits, see line~\ref{line:alpha}.} The whole sequence has length $2c\log n$, where $c$ is a constant parameter controlling the probability of success \mgl{of the emulation}. Each node uses its sequence $\alpha_v$ to determine when to beep, provided its quasi-history indicates beeping, in a ``for'' loop (see line~\ref{line:for-loop}). After that, listening nodes 
check if they received a majority of beeps or not -- if yes, they record a beep in their quasi history (reflecting that the node would likely receive a beep in the corresponding execution of $\cA$ on $\overline{G}_\sigma$), and recording no-beep otherwise. 
We will show in the analysis that this procedure on such sequences $\alpha_v$ guarantee the desired distinction between beep and no-beep as in the original algorithm $\cA$, while the overhead (i.e., the cost of distinguishing in a single original step) is $2c\log n$ in algorithm $\cA^*$.}


\begin{algorithm}[t!]

\caption{Emulator $\cA^*$ of a classic BN algorithm $\cA$, for each node $v$.
\label{alg:emulator}}
\DontPrintSemicolon
\SetKw{In}{Input}
\SetKw{Out}{Output}
\In 
constant $c\ge 2$;
~initial history $\cH_{v,\cA}[0]$;
~BN Algorithm given as function $\cA: \{beep,0,1\}^* \to \{beep,listen,stop\}$
\label{line:initial}
\;
$index\gets 0$, \ $\cH^*_{v,\cA^*}[0]\gets \cH_{v,\cA}[0]$
\label{line:index-0}
\;
\While{$\cA(\cH^*_{v,\cA^*})\ne stop$\label{line:loop-while}}{
$\alpha_v \gets$ node $v$ locally chooses a random binary sequence of length $c\log n$, such that each position is $1$ independently with probability $\frac{1}{6I_\sigma}$, and then (going through the sequence) it substitutes each $1$ by pair $10$ and each $0$ by pair $01$; thus,~$|\alpha_v|= 2c\log n$\label{line:alpha}\;
    $index \gets index + 1$, $\beta_v[1\dots |\alpha_v|] \gets$ empty vector\label{line:next-t}\;
    \For {each $i=1,\ldots,|\alpha_v|$\label{line:for-loop}}{
        \lIf{$\cA(\cH^*_{v,\cA^*}) = beep$ and $\alpha_v[i]=1$} {
        node $v$ beeps
        and sets $\beta_v[i]\gets 1$\label{line:beep-and-record}}
        \Else{ 
        node $v$ listens and
        \leIf{a beep is heard}{$\beta_v[i]\gets 1$}{$\beta_v[i]\gets 0$}
        }
    }
    \If{$\cA(\cH^*_{v,\cA^*})= listen$\label{alg-item:emulation-update}}{
        \lIf{at least half of the positions in $\beta_v$ are $1$'s}{$\cH^*_{v,\cA^*}[index]\gets 1$}
        \lElse{$\cH^*_{v,\cA^*}[index]\gets 0$}
    } 
    \lElse{$\cH^*_{v,\cA^*}[index]\gets \cH_{v,\cA}[index]$\label{line:set-beep}}
}
\Out $Out(\cA,\cH^*_{v,\cA^*})$ and $\cH^*_{v,\cA^*}$\;
\end{algorithm}

\remove{

\begin{enumerate}
\item[0.]
$t\gets 0$, \ $\cH^*_{v,\cA^*}[0]\gets \cH_{v,\cA}[0]$
\item 
$\alpha_v$ is a random binary sequence of length ???, selected locally by each node $v$, in which each one was substituted by a pair $10$ and each zero by the pair $01$.
\item 
{\bf While} $\cA(\cH^*_{v,\cA^*})\ne stop$ :
\label{line:loop-while}
\begin{enumerate}
\item 
$t\gets t+1$, $\beta_v$ is (re)set to empty vector
\label{line:next-t}
\item 
\label{line:for-loop}
{\bf For each} $i=1,\ldots,|\alpha_v|$ :
\begin{enumerate}
\item 
\label{line:beep-and-record}
{\bf If} $\cA(\cH^*_{v,\cA^*})= beep$ 
and $\alpha_v[i]=1$ {\bf then} $beep$ and set $\beta_v[i]\gets 1$
\\
{\bf Else} stay idle and record heard beep by $\beta_v[i]\gets 1$ and no beep by $\beta_v[i]\gets 0$
\end{enumerate}
\item 
\label{alg-item:emulation-update}
{\bf If} $\cA(\cH^*_{v,\cA^*}) = idle$ {\bf then} 
\begin{enumerate}
\item 
{\bf If} at least half of the positions in $\beta_v$ are $1$'s {\bf then}  
$\cH^*_{v,\cA^*}[t]\gets 1$ 
\\
{\bf Else} $\cH^*_{\cA^*}[t]\gets 0$
\end{enumerate}
{\bf Else} $\cH^*_{v,\cA^*}[t]\gets \cH_\cA(t)$
\end{enumerate}
\item[OUTPUT] $Out(\cA,\cH^*_{v,\cA^*})$ and $\cH^*_{v,\cA^*}$
\end{enumerate}

}

\subsection{Analysis}


The following theorem shows correctness of the emulation algorithm and its performance.
\mgl{Recall that we use $G'$ to denote a BN graph, $G=(V,E)$ to denote a WBN graph, $G_\sigma=(V,E_\sigma)$ to denote the subgraph of $G$ where $E_\sigma\subseteq E$ is the set of links with weight at least $\sigma$, and $\overline{G}_\sigma$ to denote the unweighted version of $G_\sigma$.}

\begin{theorem}
\label{thm:emulation}
Suppose we are given a specification of problem $\cP$ and an algorithm $\cA$ that 
\begin{itemize}
\item
correctly satisfies this specification in any (classic) BN $G'$ with no isolated nodes, and 
\item
terminates in $\uptau_\cA(n)=O(\poly(n))$ steps, 
\end{itemize}
with probability at least $1-n^{-c'}$, where $n$ is the number of nodes in $G'$ and $c'\ge 1$.

Then, for any given $\sigma\ge 1$, algorithm $\cA^*$ (Algorithm~\ref{alg:emulator}) with $\cA$ and $c$ as input
\begin{itemize}
\item    
satisfies the specification of problem $\cP$, and
\item 
(each node) terminates in $\uptau_{\cA^*}(n) = O(\uptau_\cA(n) \log n)$ 
\end{itemize}
\mglast{with probability at least $1-n^{-c'}-\uptau_\cA \cdot n^{1-c\log(4/e)}$} 
in any WBN $G$ such that $\overline{G}_\sigma = G'$.
\end{theorem}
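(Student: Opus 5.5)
The plan is to couple the execution of $\cA^*$ on $G$ with an execution of $\cA$ on $G'=\overline{G}_\sigma$ that uses the same random bits of $\cA$. I will then show by induction on the emulated step $index$ that, whp, $\cH^*_{v,\cA^*}[0,\ldots,index]=\cH_{v,\cA}[0,\ldots,index]$ for every node $v$.

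\textbf{Base case and beep entries.} The base case is lines~\ref{line:initial}--\ref{line:index-0}. If the common prefix agrees, all nodes compute the same action $\cA(\cH^*_{v,\cA^*})=\cA(\cH_{v,\cA})$. Hence the set $S$ of nodes that beep in emulated step $index$ equals the set of beepers of $\cA$ in that step. Beeping nodes copy the correct entry (line~\ref{line:set-beep}), so only listening nodes can err.

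\textbf{Single-step analysis for a listener $v$.} Two cases arise.
\begin{itemize}
\item Some heavy neighbor $u\in S$ of $v$ exists, so in $G'$ node $v$ hears $1$. Every $\alpha_u$ contains exactly $c\log n$ ones, because each coded pair is $10$ or $01$. In each such slot $v$ receives weight at least $w(u,v)\ge\sigma\ge1$ and therefore hears a beep. So $\beta_v$ has at least half ones, and $v$ records $1$ deterministically.
\item No heavy neighbor of $v$ beeps, so in $G'$ node $v$ hears $0$. Only light neighbors in $S$ beep. Because $G'$ has no isolated nodes, $v$ is the receiver of some heavy link $(x,v)$. By the definition of $I_\sigma$, the total weight of all light neighbors of $v$ is at most $I_\sigma$.
\end{itemize}

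In the second case, for each pair of slots I look at the slot in which a beeper is active only when its random bit is $1$ (probability $\frac{1}{6I_\sigma}$). The expected incoming weight in that slot is at most $I_\sigma/(6I_\sigma)=1/6$. By Markov's inequality, $v$ hears a beep there with probability at most $1/6$, and this happens independently across the $c\log n$ pairs. A Chernoff bound on the number of such pairs that produce a heard beep then gives a failure probability of at most $(e/4)^{c\log n}=n^{-c\log(4/e)}$. This is the exponent appearing in the statement.

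\textbf{Main obstacle.} The hard step is exactly this second case. The complementary slot of each pair is occupied by every beeper whose bit is $0$, which is almost all of them. There the accumulated light weight may exceed $1$, so that slot may well be heard. I therefore need the majority rule to be driven by the rare slots. Concretely, I must check that $v$ reaches ``at least half'' only if a constant fraction of the rare slots are heard, using the precise alignment of the $10/01$ encoding. If the threshold as written does not separate the two cases cleanly, I would first try to reinterpret the count as counting heard beeps in the first slot of each pair. In that count, a lone heavy beeper is still heard whenever its bit is $1$. The separation must then be argued through the pattern rather than the raw count; this is the part of the argument I am least sure of.

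\textbf{Union bound and running time.} Given the per-step bound, I take a union bound over the $n$ nodes and the $\uptau_\cA$ emulated steps. This yields failure probability at most $\uptau_\cA n\cdot n^{-c\log(4/e)}$. On the complementary event, the induction shows the histories coincide, so $\cA^*$ stops exactly when $\cA$ does (line~\ref{line:loop-while}) and outputs $Out(\cA,\cH_{v,\cA})$. Adding the probability $n^{-c'}$ that $\cA$ itself fails or runs too long, the output satisfies $\cP$ with the claimed probability. Each emulated step costs $2c\log n$ rounds, which gives $\uptau_{\cA^*}=O(\uptau_\cA\log n)$.
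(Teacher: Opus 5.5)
Your skeleton is the paper's (Lemma~\ref{lem:emulation-histories} plus the theorem's proof). It couples through the same random bits of $\cA$, inducts over emulated steps, and copies beep entries. It settles the heavy-neighbor case deterministically by the exact $10/01$ balance, takes a union bound over $n$ nodes and $\uptau_\cA$ steps, and charges $2c\log n$ rounds per emulated step. The gap is exactly the one you flag, and it is genuine. When no heavy neighbor of the listener $v$ beeps, your Markov/Chernoff estimate controls only the odd slot of each pair. In the even slot every emulated beeper whose bit is $0$ beeps, so $v$ receives about $(1-\frac{1}{6I_\sigma})$ times the full light weight, which can be as large as $I_\sigma$. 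The $c\log n$ even slots can then reach the threshold $|\alpha_v|/2$ on their own. Your proposed repair breaks the other case. If only first slots are counted, a lone heavy beeper contributes $\mathrm{Bin}(c\log n,\frac{1}{6I_\sigma})$ heard slots, typically none when $I_\sigma\gg\log n$. A light-only first slot, meanwhile, can be heard with probability up to $1/6$, so no threshold separates the two cases.

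The paper does not close this step either. Its proof asserts $\mathbf{E}(\alpha_z[i])=\frac{1}{6I_\sigma}$ for every $i$, which holds only for odd $i$; for even $i$ it is $1-\frac{1}{6I_\sigma}$. The assertion is also incompatible with the paper's own heavy-case fact that every $\alpha$ has exactly half of its positions equal to $1$. In fact the obstacle is fatal for Algorithm~\ref{alg:emulator} as written, and here is a counterexample.
Take $\sigma=1$, and give $v$ a heavy neighbor $x$ with $w(x,v)=1$ and light neighbors $z_1,z_2,z_3$ with $w(z_i,v)=1/2$. Give each $z_i$ its own heavy partner $y_i$, and pad with disjoint heavy pairs to make $n$ large. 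Emulate a one-step $\cA$ in which exactly $z_1,z_2,z_3$ beep and listeners output what they hear. In $\overline{G}_\sigma$, node $v$ hears $0$. In $G$, in every pair of slots the three $z_i$ are split between the two slots, so one slot carries weight at least $1$. Hence $\beta_v$ has at least $c\log n$ ones, and $v$ records $1$ with probability $1$, whatever the random bits.

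Nor can a decoding ``through the pattern'' help with these $\alpha_v$. Suppose the heavy neighbor's $c\log n$ bits are all $0$, which has probability $(1-\frac{1}{6I_\sigma})^{c\log n}$, near $1$ when $I_\sigma\gg\log n$. Suppose also that many light beepers already fill every even slot. Then $\beta_v$ is the same whether or not the heavy neighbor beeps. So the step you are missing cannot be supplied: the emulator itself, both its encoding and its decision rule, has to change.
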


The main technical result is to show that for same fixed random source of $\cA$, the history of $\cA$ and the produced quasi-history of $\cA^*$ executed in parallel on $\overline{G}_\sigma$ and on $G$, are the same with high probability. 

\begin{lemma}
\label{lem:emulation-histories}
Consider any WBN graph $G$ of $n$ nodes, and assume that $G_\sigma$ has no isolated node. 
Consider parallel executions of $\cA$ on $\overline{G}_\sigma$ and $\cA^*$ on $G$, with the same fixed values of random bits whenever applying function $\cA(\cdot)$ on the same (quasi-)history string.
Let $\cH_{v,\cA},\cH^*_{v,\cA^*}$ be the history vectors produced at each node $v$ in both these executions, respectively, and let $\tau$ 
\mglast{be the length of} the longest of these vectors \footnote{\mgl{The length of the suffix of undefined steps could be different between history and quasi-history.}}.
Then, the probability that for every node $v$, $\cH^*_{v,\cA^*}=\cH_{v,\cA}$, 
\mglast{is at least $1-\tau \cdot n^{1-c\log(4/e)}$,} 
where the randomness comes from selection of sequences $\alpha_v$ by $\cA^*$.
\end{lemma}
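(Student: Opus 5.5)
Induction on the emulated step index $t$, together with a union bound over steps and nodes. Suppose the quasi-histories agree with the true histories up to index $t-1$ at every node. Then, since both executions apply $\cA$ to identical strings with identical random bits, every node takes the same action (beep, listen or stop) in emulated step $t$ of $\cA^*$ as in step $t$ of $\cA$ on $\overline{G}_\sigma$. For beeping nodes, line~\ref{line:set-beep} copies the correct entry. So it suffices to show that each listening node $v$ records the correct bit with failure probability at most $n^{-c\log(4/e)}$. A union bound over the $n$ nodes and the at most $\tau$ indices then gives the claimed $1-\tau\cdot n^{1-c\log(4/e)}$.

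Fix a listening node $v$ in step $t$ and let $S$ be the set of nodes beeping in that step. By construction every $\alpha_u$ has exactly $c\log n$ ones, one in each pair of positions $(2k-1,2k)$.

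\emph{Case (a): $v$ has no heavy neighbor in $S$.} In $\overline{G}_\sigma$, $v$ hears silence, so we must show that fewer than $c\log n$ of the $2c\log n$ positions of $\beta_v$ are $1$. Every beeper $u\in S$ has $w(u,v)<\sigma$. I would bound the number of heard positions via the total interference: in a given position the aggregated weight at $v$ has expectation at most $I_\sigma/(6I_\sigma)$-ish. This uses that $v$ is an endpoint of some heavy link, which holds because $G_\sigma$ has no isolated node, so its total incoming weight, excluding one heavy neighbor, is at most $I_\sigma$. Within each pair, exactly one position carries the pattern symbol, so a position is heard only if the accumulated weight reaches $1$. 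Markov's inequality gives that a single pair contains a heard beep with probability at most some constant below $1/2$. Pairs are independent because the $\alpha_u$ bits are drawn independently per pair. A Chernoff bound of the form $(e/4)^{c\log n}$ on "at least half the positions are 1" then gives the $n^{-c\log(4/e)}$ term. I expect the constant $\log(4/e)$ to come from applying $\Pr[X\ge (1+\delta)\mu]\le (e^\delta/(1+\delta)^{1+\delta})^\mu$ with $\mu$ chosen so that the base becomes $e/4$.

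\emph{Case (b): some heavy neighbor $u\in S$.} In $\overline{G}_\sigma$, $v$ hears a beep. Since $w(u,v)\ge\sigma\ge1$, $v$ hears a beep in every position where $\alpha_u[i]=1$. That is exactly one position out of each pair, so at least $c\log n = |\beta_v|/2$ ones appear deterministically, and the majority rule records $1$ with probability $1$.

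The main obstacle is case (a). The paper's $\alpha_v$ makes a node beep in exactly one of every two slots regardless of the biased coin, so the per-slot beep probability is not $\frac{1}{6I_\sigma}$. I would therefore first re-read line~\ref{line:alpha} and line~\ref{line:beep-and-record}: a node beeps only when its pattern bit is $1$, and the $1$ appears in the first slot of a pair with probability $\frac{1}{6I_\sigma}$ and in the second otherwise. The analysis should therefore look at the odd positions, where the expected interference is at most $I_\sigma\cdot\frac{1}{6I_\sigma}=1/6$, so each is heard with probability at most $1/6$. The even positions may all be heard, giving $c\log n$ ones. A strict-majority threshold, or a more careful count, is then needed to separate the two cases, and getting this threshold and the Chernoff constant to match $\log(4/e)$ is the delicate step.
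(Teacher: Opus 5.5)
Your skeleton is the paper's: induction on the emulated index, identical actions forced by identical (quasi-)histories and shared random bits, and a union bound over the $n$ nodes and $\tau$ indices. Your case~(b) is exactly its sub-case $\cH_{v,\cA}[t+1]=1$.

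The gap is case~(a). You leave it open, and neither remedy you suggest can close it. A strict-majority rule breaks case~(b), since a lone heavy beeper with no interference yields exactly $c\log n$ ones. No other threshold on the count works either. Take $\sigma=1$ and a listening node $v$ whose heavy neighbour also listens. Let three non-heavy nodes $z_1,z_2,z_3$ beep in the emulated step, with $w(z_j,v)=\frac12$; give each its own heavy partner elsewhere, so $G_\sigma$ has no isolated node. In the pair $(2k-1,2k)$, suppose $B$ of the $z_j$ have their $1$ in the odd slot. Then $v$ hears the odd slot iff $B\ge 2$ and the even slot iff $B\le 1$. Hence exactly one slot per pair is heard, and $\beta_v$ has exactly $c\log n$ ones, the same count a lone heavy beeper produces. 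So $v$ records $1$ while in $\overline{G}_\sigma$ it hears silence, for every realization of the $\alpha$'s. For $\sigma>1$, a single non-heavy neighbour with $w(z,v)\in[1,\sigma)$ already does this.

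The paper does not supply the missing step; its argument for this sub-case is your first sketch. It asserts $\mathbf{E}(\alpha_z[i])=\frac{1}{6I_\sigma}$ for every $i$ ``by line~\ref{line:alpha}'' and derives $\mathbf{E}(X_i)\le\frac16$. It then declares the $Y_i$ independent and applies the Dillencourt--Goodrich--Mitzenmacher bound $\Pr(Y>R)<2^{-xR}$ for $R\ge(2^{x}e-1)\mu$. It takes $x=\log(4/e)$, so $2^{x}e-1=3$, and $R=|\alpha_v|/2\ge 3\mu$; this is where $\log(4/e)$ comes from. Your objection to this step is correct. After the substitution $1\mapsto 10$, $0\mapsto 01$, even positions are $1$ with probability $1-\frac{1}{6I_\sigma}$. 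Moreover, the two positions of a pair are complementary, hence dependent. So the lemma is false for Algorithm~\ref{alg:emulator} as written, and no sharper threshold or Chernoff constant will rescue your proof.

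Nor can a different encoding keep an $O(\log n)$ per-step overhead. Consider the complete bipartite WBN with $w(b_i,\ell_i)=1$ and all other $b$--$\ell$ weights $0.9$. Here $\overline{G}_\sigma$ is a perfect matching, on which Local Broadcast with $\cM=\log n$ takes $O(\log n)$ BN rounds. In each WBN slot, however, the listeners jointly observe only ``nobody hears'', ``only $\ell_j$ hears'' or ``every listening $\ell$ hears'', i.e.\ $O(\log n)$ bits. So $\Omega(n)=\Omega(I_\sigma)$ slots are needed there. A correct statement must let the overhead depend on $I_\sigma$.
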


\begin{proof}
The lemma follows directly from the following invariant event (and the union bound applied to this event over all $n$ nodes), which we will prove
by induction on integer index $t\ge 0$: 
for every node $v$, either $\cH^*_{v,\cA^*}[0,\ldots,t] = \cH_{v,\cA}[0,\ldots,t]$ or both $\cH^*_{v,\cA^*}[t], \cH_{v,\cA}[t]$ are not defined, 
\mglast{with probability at least $1-t \cdot n^{1-c\log(4/e)}$.}

$\cH^*_{v,\cA^*}[0] = \cH_{v,\cA}[0]$ holds with probability $1$, independently on the graphs $G,\overline{G}_\sigma$, because $\cH_{v,\cA}[0]$ is the initial (input) state of $\cA$ and algorithm $\cA^*$ sets $\cH^*_{v,\cA^*}[0] \gets \cH_{v,\cA}[0]$. 

Suppose that the invariant holds for some $t\ge 0$; we now prove that it holds also for $t+1$.
Observe that if the ``While'' loop (line~\ref{line:loop-while}) of the algorithm continues to the next value of $index$, i.e., from $index =t$ to $index =t+1$ (line~\ref{line:next-t}), it means that $\cH_{v,\cA}[t+1]$ is well-defined -- otherwise, by correctness of algorithm $\cA$, it would decide to stop for an earlier value of $t$, so as the ``While'' loop in line~\ref{line:loop-while}.
Correspondingly, $\cH^*_{v,\cA^*}[t+1]$ can be defined only inside the ``While'' loop for $t+1$, specifically in 
line~\ref{alg-item:emulation-update}. 
Thus, either both, $\cH^*_{v,\cA^*}[t+1]$ and $ \cH_{v,\cA}[t+1]$, are well-defined (which is equivalent to satisfying condition of the ``While'' loop in line~\ref{line:loop-while} and progressing to value $t+1$ in line~\ref{line:next-t}), or both are not defined.

In the latter case, they are also not defined for all subsequent indices $t+2,t+3,\ldots$, as this is the property of $\cH_{v,\cA}$ followed from correctness of $\cA$ while $\cH^*_{v,\cA^*}$ simply stops being defined by algorithm $\cA^*$.

Consider now the former case of both $\cH^*_{v,\cA^*}[t+1], \cH_{v,\cA}[t+1]$ being defined, in which we have to prove that they are equal. There are three possible sub-cases:

\smallskip

\noindent
\underline{Sub-case $\cH_{v,\cA}[t+1]=beep$.}
It may happen only if $\cA(\cH_{v,\cA})= beep$.
By invariant for $t$, which we assumed that holds in our inductive proof, we get that this sub-case occurs if $\cA(\cH^*_{v,\cA^*}) = beep$. It follows that in the ``Else'' part of line~\ref{alg-item:emulation-update} of the algorithm $\cA^*$ for $t+1$, node $v$ sets that $\cH^*_{v,\cA^*}$ 
to $\cH_{v,\cA}$ at the considered position $t+1$, as intended.

\smallskip

\noindent
\underline{Sub-case $\cH_{v,\cA}[t+1]=1$.}
It may happen if 
$\cA(\cH_{v,\cA})= \mgl{listen}$
and there is a neighbor $u$ of $v$ in $\overline{G}_\sigma$ such that $\cA(\cH_{u,\cA}) = beep$.
By invariant for $t$, which we assumed that holds in our inductive proof, we get that this sub-case occurs if 
$\cA(\cH^*_{v,\cA^*}) = \mgl{listen}$
and there is a neighbor $u$ of $v$ in $\overline{G}_\sigma$ such that $\cA(\cH^*_{u,\cA^*}) = beep$.

If there is a neighbor $u$ of $v$ in $\overline{G}_\sigma$ such that $\cA(\cH^*_{u,\cA^*}) = beep$, then node $u$ beeps according to its sequence $\alpha_u$ when executing the ``For'' loop (line~\ref{line:for-loop}) for $t\gets t+1$. Since $u$ and $v$ are neighbors in $\overline{G}_\sigma$, the weight between them is at least $\sigma$, therefore $v$ hears beeps in all steps when $u$ beeps, in particular, during the abovementioned execution of the ``For'' loop (line~\ref{line:for-loop}) by node $u$.
Each sequence $\alpha$, in particular $\alpha_u$, has exactly half positions set to $1$, therefore $u$ beeps during half of the steps $i$ of the considered execution of ``For'' loop. Hence, node $v$ hears beeps in these steps and sets $\beta_v[i]$ to $1$ in those steps $i$ of the considered ``For'' loop execution (under ``then'' or ``Else'' of line~\ref{line:beep-and-record}, depending on $v$ beeping or not in step $i$ of the ``For'' loop). Consequently, the condition in the subsequent line~\ref{alg-item:emulation-update} is satisfied at node $v$, and it sets the value of $\cH^*_{v,\cA^*}$ to $1$ at the current position index $t+1$, which is the same as in $\cH_{v,\cA}$ in the current sub-case.

\smallskip

\noindent
\underline{Sub-case $\cH_{v,\cA}[t+1]=0$.}
It may happen if 
$\cA(\cH_{v,\cA}) = \mgl{listen}$
and there is {\it no} neighbor $u$ of $v$ in $\overline{G}_\sigma$ such that $\cA(\cH_{u,\cA}) = beep$.
By invariant for $t$, which we assumed that holds in our inductive proof, we get that this sub-case occurs if 
$\cA(\cH^*_{v,\cA^*}) = \mgl{listen}$
and there is no neighbor $u$ of $v$ in $\overline{G}_\sigma$ such that $\cA(\cH^*_{u,\cA^*}) = beep$.

If there is no neighbor $u$ of $v$ in $\overline{G}_\sigma$ such that $\cA(\cH^*_{u,\cA^*}) = beep$, then none of the neighboring nodes of $v$ beeps 
when executing the ``For'' loop (line~\ref{line:for-loop}) for $t\gets t+1$. 
Therefore, $v$ may hear a beep in a step $i$ of this execution of the ``For'' loop only if 
\[
X_i = \sum_{z\notin \{v\}\cup N_{\overline{G}_\sigma}(v)} \alpha_z[i] \cdot w(z,v) \ge 1
\ .
\]

\remove{
The expected value of $X_i$ is upper bounded by 
\[
\sum_{z\ne v} \mathbf{E}(\alpha_z[i]) \cdot w(z,v) 
=
\sum_{z\ne v}  \frac{1}{6I_\sigma}\cdot w(z,v) 
=
 \frac{1}{6I_\sigma}\cdot \sum_{z\ne v}  w(z,v) 
\le 
\frac{1}{6} 
\ ,
\]
where the last inequality follows from the definition of $I_\sigma$.
By Chernoff bound applied to the sum of $X_i$, over $1\le i\le 2c\log n$, 
%
\mgl{
we get that 
$Pr(X\geq c\log n) \leq n^{-(c+0.5)}$,
where $X=\sum_{i=1}^{|\alpha_v|} X_i$.
Then, by union bound, the probability of \emph{some} node hearing beeps more than half of the iterations of the loop in line~\ref{line:for-loop} in \emph{any} of the iterations of the loop in line~\ref{line:loop-while} is at most $\tau n^{-c+0.5}$. Thus, the claim follows.
}

\mglast{\vspace{.1in}----------------------- [[REPLACE THE LAST PARAGRAPH WITH THE FOLLOWING FOR ARXIV]]}
}

\dk{Recall that $\mathbf{E}(\alpha_z[i]) =  \frac{1}{6I_\sigma}$, by line~\ref{line:alpha} of Algorithm~\ref{alg:emulator}.}
Then, the expected value of $X_i$ is upper bounded by 
\[
\sum_{z\notin \{v\}\cup N_{\overline{G}_\sigma}(v)} \mathbf{E}(\alpha_z[i]) \cdot w(z,v) 
=
\sum_{z\notin \{v\}\cup N_{\overline{G}_\sigma}(v)}  \frac{1}{6I_\sigma}\cdot w(z,v) 
=
 \frac{1}{6I_\sigma}\cdot \sum_{z\notin \{v\}\cup N_{\overline{G}_\sigma}(v)}  w(z,v) 
\le 
\frac{1}{6} 
\ ,
\]
 %
%
where the last inequality follows from the definition of $I_\sigma$.

Define now a random variable $Y_i$ indicating whether $X_i\geq 1$ or not. That is, $Y_i=1$ if $X_i\geq 1$, or else $Y_i=0$. It follows that $\mathbf{E}(Y_i)\leq \mathbf{E}(X_i)\leq 1/6$. Let $Y=\sum_{i=1}^{|\alpha_v|} Y_i$ and $\mu = \mathbf{E}(Y)$. Then, by linearity of expectation it is 
$\mu \leq |\alpha_v|/6$.
\mglast{Notice that, for each node $v$, the variables $X_i$ (hence, $Y_i$) are independent because each node chooses each bit independently.\footnote{The dependencies among nodes and iterations are folded later in the proof in the union bound.}
Then, by the following Chernoff bound (see Thm. 7 in~\cite{DILLENCOURT2025106516}), 
$$Pr(Y > R) < 2^{-xR} \ , \ \ \ \textrm{ for } x>0, \ \mu>0, \textrm{ and } R\geq (2^{x}e-1) \mu\ ,$$
after fixing $x = \log (4/e)>0$ and $R=|\alpha_v|/2 \geq 3 \mu$,
the probability that in more than half of the 
$|\alpha_v|$ \rounds node $v$ hears a beep is
\begin{align*}
Pr(Y > |\alpha_v|/2) &< 2^{-\log (4/e) |\alpha_v|/2}
= 2^{- c \log (4/e) \log n}
= n^{- c \log (4/e)}.
\end{align*}
Then, by union bound, the probability of \emph{some} node hearing beeps more than half of the iterations of the loop in line~\ref{line:for-loop} in \emph{any} of the iterations of the loop in line~\ref{line:loop-while} is at most $n\cdot\tau\cdot n^{- c \log (4/e)} = \tau n^{1-c \log (4/e)}$. Thus, the claim follows.
}

%
%
\end{proof}

\begin{proof}[Proof of Theorem~\ref{thm:emulation}]
Let $\cA$ satisfy the assumptions of the theorem. Fix $\sigma \ge 1$.
Consider two executions: an execution of algorithm $\cA$ on a BN $G'$ of $n$ nodes, none of them being isolated, and an execution of algorithm $\cA^*$ on a WBN $G$ such that $\overline{G}_\sigma=G'$.
By Lemma~\ref{lem:emulation-histories}, for any fix source of randomness used in the function $\cA(\cdot)$, the history vector $\cH_{v,\cA}$ associated with the former execution and the quasi-history vector $\cH^*_{v,\cA^*}$ outputted in the latter are identical 
\mglast{with probability at least $1-\uptau_\cA n^{1-c\log(4/e)}$,} 
for any node $v$ in $\overline{G}_\sigma=G'$ (and thus, also in $G$, as the sets of nodes in all the three graphs are identical). 

We can now consider random sources that lead to satisfying property $\cP$ by algorithm $\cA$ on a given BN $G'$ -- \dk{by the theorem assumption, they occur with probability at least $1-n^{-c'}$.} The same sources guarantee property $\cP$ in the execution of $\cA^*$ on any corresponding WBN $G$ (i.e., such that $\overline{G}_\sigma=G'$), 
\mglast{minus the error probability $\uptau_\cA n^{1-c\log(4/e)}$} 
coming from Lemma~\ref{lem:emulation-histories}, which is caused by the randomness needed for generating $\alpha_v$ in algorithm $\cA^*$.
\dk{Hence, the emulation properties hold with probability at least $1-n^{-c'}-\uptau_\cA \cdot n^{1-c\log(4/e)}$.}

In particular, the number of iterations of the ``while'' loop in 
\dk{such executions}
is the same as the number of steps until termination in the former, with that probability, and is $\uptau_\cA(n)$. Consequently, 
\dk{such executions stop within}
$\uptau_\cA(n) \cdot 2c\log n = O(\uptau_\cA(n) \log n)$ \dk{beeping rounds}.
%
\end{proof}
\section{Additional Motivation and Other Related Work}
\label{sec:motivrelated}

From a theoretical perspective, we considered one of the simplest models of ad hoc environment with communication by beeping, WBN. We used a minimum set of model assumptions, but still we took into account weighted interactions and threshold function to separate hearing a beep from silence. Thus, our results could be transformed to more complex models and real datasets of ad hoc environments. Such transformations are, however, not straightforward and require further studies, depending on the complexity of the considered variant of the communication model. 
For instance, when applying our algorithm to the bee colony environment, one needs to take into account that bees communicate using complex signals called ``waggle dance''; it could be encoded into a binary sequence, see~\cite{dong2023social}, however a future study  
should take into account that not every binary sequence represents a valid 
waggle dance 
(i.e., recognized~by~bees).

\label{sec:bacteria}

Our WBN model can also be seen as an abstraction of chemical communication mechanisms used by bacteria. Specifically, in \emph{quorum sensing}~\cite{moreno2023quorum} bacteria release/sense some \emph{autoinducers} ($\equiv$ beeps) to the intercellular space. Upon reaching a threshold concentration ($\equiv$ \weight threshold), autoinducers trigger cascades of \emph{signal transduction}  which regulate processes such as biofilm formation, virulence, competence and sporulation~\cite{miller2001quorum, fuqua1994quorum, waters2005quorum}. Cooperation among bacteria as a multi-cellular (multiagent) system has been here for billions of years. A connection between WBN and bacteria colonies may lead to a better understanding of their collective intelligence, as well as interactions with surrounding cells of other organisms.
For instance, next step could be to extend the WBN model to different types of beeps in order to model even more complex signal transduction processes (see~\cite{galperin2004bacterial}).

\subsection{Other Related Work}
\label{sec:other-related}
\mgl{ 
\textbf{BN Research Landscape Overview.}
%
In the baseline BN model~\cite{cornejo2010deploying} nodes communicate in discrete, synchronous rounds by choosing one of two actions: beep (emit a one-bit blast of energy), or listen (detect the status of the channel).
The channel applies a logical OR operation to all beeps, that is, a listening node cannot count how many neighbors beeped; it simply senses a `1' (at least one neighbor beeped) or a `0' (complete silence). Furthermore, a node that beeps receives no feedback and cannot tell if its neighbors were also beeping at the same time.

To map the landscape of theoretical computer science research in this area, beeping models can be classified into five major families based on how they handle wake-up schedules, feedback, knowledge of parameters, noise, and network structural scale.

\begin{itemize}

\item \underline{Wake-Up and Synchronization Paradigms.}
Because many real-world sensor or biological networks do not activate simultaneously, the manner in which nodes wake up heavily impacts time complexity.

\begin{itemize}

\item Global Synchronous Wake-Up: All nodes wake up and execute the first round of the algorithm simultaneously, sharing a unified global clock~\cite{davies2023optimal,DBLP:conf/esa/GarncarekKKM25}. This is the cleanest theoretical variant and is primarily used when assessing the pure communication bottleneck of the OR channel.

\item Adversarial (Asynchronous) Wake-Up: Nodes are dormant until a central adversary decides to wake them up~\cite{feldmann2020time,dufoulon2020can}. In this harsh setting, computing basic structures like a Maximal Independent Set (MIS) in sub-polynomial time is mathematically impossible without additional assumptions~\cite{afek2011biological,hounkanli2020global}.

\item Wake-on-Beep: A dormant node is automatically shocked awake if any of its active neighbors emit a beep~\cite{afek2011biological}. This relaxation bridges the gap between asynchronous deployment and synchronous execution, allowing nodes to locally propagate a ``wake-up wave'' and drastically reducing the time required to build network structures down to poly-logarithmic time~\cite{afek2011biological,forster2014deterministic,dufoulon2018beeping,afek2013beeping}.

\end{itemize}

\item \underline{Feedback and Collision Detection Variations.}
The original, standard beeping model features no sender feedback—if a node emits a pulse, its receiver hardware is occupied, rendering it entirely deaf to what happened around it during that round~\cite{afek2011biological}. Variants relax this to gauge the power of local awareness.

\begin{itemize}
\item Standard No-Feedback Model: Beeping nodes hear nothing. Listening nodes experience standard multi-sender collisions natively as a simple `1` (beep detected)~\cite{DBLP:conf/esa/GarncarekKKM25,cornejo2010deploying,davies2023optimal}.
\item Sender-Side Collision Detection: An augmented model where a node that choose to beep is still capable of detecting whether at least one other neighbor beeped concurrently~\cite{afek2011biological,afek2013beeping}. This hardware upgrade significantly speeds up BN algorithms. For example, solving an MIS problem drops from $O(\log^3 n)$ rounds down to $O(\log^2 n)$ rounds under a wake-on-beep envelope~\cite{afek2011biological}.
\end{itemize}

\item \underline{Knowledge of Network Scale.}
Theoretical models diverge based on knowledge of network characteristics a node holds prior to deployment.

\begin{itemize}
\item Zero Knowledge: Nodes possess no information regarding the global network size $n$, the maximum node degree $\Delta$, or the specific local topology~\cite{afek2011biological,afek2013beeping}. They must algorithmically obtain those parameters if needed.
\item Bounded Parameter Models: Nodes are pre-programmed with an upper bound on $n$ and/or $\Delta$~\cite{DBLP:conf/esa/GarncarekKKM25,davies2023optimal}. Knowing an upper bound on $n$ gives the nodes a fundamental yardstick to scale down random coin-flipping probabilities during node-selection algorithms.
\end{itemize}

\item \underline{Noisy Channels vs. Clean Channels.}
As the field matured, researchers shifted focus from ideal math constructs to channels plagued by disruption, categorizing models by how faults propagate.

\begin{itemize}
\item Noiseless Beeping Model: Communication proceeds without disruption. A listener hears a beep if and only if a neighbor emitted one~\cite{cornejo2010deploying,davies2023optimal,DBLP:conf/esa/GarncarekKKM25}.
\item Independent (Local) Noisy Model: The signal received by each listening node (beep or silence) is independently flipped uniformly at random with a fixed probability~\cite{ASHKENAZI2022104925}. 
This models physical receiver defects. Remarkably, recent results prove that simulating standard message-passing algorithms over an independent noisy model requires zero asymptotic round overhead compared to a noiseless channel~\cite{davies2023optimal}.
\item Correlated (Global) Noisy Model: This model~\cite{efremenko2020noisy,davies2023optimal,gilbert2017symmetry} evaluates noise on the channel itself. When a bit flips due to noise, every single node in the network receives the exact same corrupted bit, preserving a uniform transcript among the entities but distorting the true intent of the sender. Correcting for correlated noise requires a substantial, unavoidable performance overhead.
\end{itemize}

\item \underline{Topological Scale: Single-Hop vs. Multi-Hop.}
Finally, models are split by the geometry of their interference zones.

\begin{itemize}
\item Single-Hop: The underlying network topology is a complete graph ($K_n$) where a beep from any single party is instantly heard by every listening node~\cite{efremenko2020noisy,davies2023optimal,DBLP:conf/esa/GarncarekKKM25}. In a single-hop setting, non-adaptive beeping schedules map directly to classical information-theory concepts like superimposed codes~\cite{DBLP:conf/esa/GarncarekKKM25}.

\item Multi-Hop: The network is modeled as an arbitrary undirected graph $G=(V,E)$. Beeps only reach immediate neighbors, creating intricate spatial channel conflicts where nodes must actively schedule rounds to avoid stepping on neighbors deeper in the topology~\cite{davies2023optimal,DBLP:conf/esa/GarncarekKKM25,forster2014deterministic,gilbert2017symmetry}. Recent breakthroughs show that any standard deterministic CONGEST algorithm can be simulated in any arbitrary multi-hop BN with only a deterministic multiplicative overhead of just $O(\Delta^2 \text{polylog } n)$ rounds~\cite{DBLP:conf/esa/GarncarekKKM25}.
\end{itemize}

\end{itemize}

\textbf{\probfull and Related Problems.}
}
Solving \probshort is crucial for enabling effective cooperation and scalability,
\mgl{for instance,}
in complex and large-scale 
\mgl{Multi-agent Systems (MAS)}
where agents only have partial observations of the environment. By focusing on local interactions, agents can achieve coordination and collective goals without needing a full, centralized view, which is often infeasible due to computational limits or communication constraints~\cite{oroojlooy2023review}. 
\probshort allows agents to maintain neighborhood cognitive consistency, leading to improved system-level cooperation by ensuring neighboring agents' internal representations about their local environment are aligned~\cite{mao2020neighborhood}. This decentralized approach is particularly effective in Multi-Agent Reinforcement Learning (MARL) tasks, where agents can leverage local information sharing to mitigate issues like non-stationarity and partial observability.

In Neighborhood Cognitive Consistency for instance, agents maintain coordinated policies based on joint observations within their local group, which is sufficient for effective system-level cooperation in large-scale settings like packet routing and Google Football player control~\cite{mao2020neighborhood}. This local focus provides a scalable alternative to centralized training approaches, where a single node handles all agents and suffers from performance degradation as the number of agents increases~\cite{malucellineighboring}. Furthermore, in tasks like Swarm Control and other cooperative settings, \probshort can be used to define intrinsic rewards based on the prediction error of aggregated neighboring features -- a strategy called Neighborhood-curiosity-based Exploration -- which motivates agents to explore new states and discover effective cooperative partnerships~\cite{yang2024neighborhood}. Additionally, training on alternating, large subsets of agents (neighborhoods), a technique known as Large Neighborhood Search, significantly improves training efficiency and speeds up convergence in complex cooperative MARL tasks like the StarCraft Multi-Agent Challenge~\cite{chen2024marl}.
Another example is decentralized training, where \probshort can be used to average or aggregate model parameters or stochastic gradients among neighboring nodes in the communication network. Instead of communicating with a central server, each device computes its local update and then uses the \probshort subroutine to exchange and combine this information only with its one-hop neighbors.
The repeated application of this neighborhood averaging step helps the locally trained models across all devices to converge toward a consensus (a shared ``global'' model) in a distributed manner, similar to a Distributed Parallel Stochastic Gradient~Descent.


From a technical perspective, the closest related work is on link scheduling in wireless networks. The similarities include: ad hoc network structure and potential impact of one node communication to the other. However, the problem of link scheduling is focused on realization of specific known links (typically, {\em one} link per node), unlike in \mm{\probfull} where the information needs to be propagated via {\em all} (unknown) links with sufficiently large \weight. Also, the wireless communication is focused on avoiding interference, while in \mm{WBN} we want to ``strengthen'' the information beeped via links of large \weight.

The closest wireless Link Scheduling under interference related works~\cite{HWaff,FKVaff,FKVaffJournal,KVaff,Kaff,asgeirsson2017universal,HWsurvey2019}
correspond to the Signal-to-Interference-and-Noise-Ration (SINR) interference model. 
They use a few different characteristics, called affectance, and their impact to link scheduling performance, see e.g.,~\cite{HWaff}.
The {\em generalized affectance} model was introduced and used only in the context of one-hop
communication, more specifically, to Link Scheduling in~\cite{Kaff,KVaff}. 
In~\cite{KMR_fomc14,KowalskiMZjournal}, the one-hop affectance characteristic was generalized, called the \emph{maximum average tree-layer affectance},
to be applicable to multi-hop communication tasks
such as broadcast, together with another characteristic, called the \emph{maximum path affectance}.
 In a more recent summary~\cite{HWsurvey2019} of work in Wireless Networks over the last decade theoretical studies of \LS are overviewed.




%
%


\section{Experimental Evaluation}
\label{sec:experimental}

To evaluate experimentally the performance of our algorithm we have considered 
real-world social networks (extracted from nature) on top of which nano-devices could be deployed, and to complement those we also considered ad-hoc infrastructures, which could be deployed for instance for 3D scanning and learning simultaneously.

\mamr{All input datasets used in our simulations are either publicly available in~\cite{nr}, or in our anonymous private repository in~\href{https://anonymous.4open.science/r/OBLB-A8F8/}{https://anonymous.4open.science/r/OBLB-A8F8/}, where we also include all code and output files. The latter repository will be \mmr{de-anonymized} 
upon publication of this work.} 


For natural environments, we studied performance for \mmr{four} datasets: 
a population of birds~\cite{firth2015experimental}, 
a colony of ants~\cite{mersch2013ant}, 
a population of dolphins~\cite{gazda2015importance},
\mmr{and a mouse gene regulatory network derived from analyzing gene expression profiles~\cite{bansal2007infer}. 
For the latter we extracted various subgraphs (as detailed in Table~\ref{table:datasets}) 
to process the original dataset ($|E|\approx 14.5M$).}
These inputs are representative of different network sizes (in nodes and links) and weights.

Link weights in those datasets correspond to the probability or frequency of interaction for each pair of individuals. Conceptually, these weights match our definition of \weight, because an individual $a$ that meets more frequently another individual $b$ ``weights'' more on $b$. 

\mmr{These natural datasets observe an exponential distribution of weights, biased towards small probabilities/frequencies (as in many natural interactions).} Given that the threshold for \heavy links is $\sigma\geq 1$, and that the distributions are biased towards small values, we normalized all weights to a range $[0,10]$. 

\mamr{The real-world} datasets were downloaded from the public repository in~\cite{nr} 
at 
\begin{itemize}
\item Birds: \href{https://networkrepository.com/aves-wildbird-network.php}{https://networkrepository.com/aves-wildbird-network.php}
\item Ants: \href{https://networkrepository.com/insecta-ant-colony6-day04.php}{https://networkrepository.com/insecta-ant-colony6-day04.php}
\item Dolphins: \href{https://networkrepository.com/mammalia-dolphin-florida-forage.php}{https://networkrepository.com/mammalia-dolphin-florida-forage.php}.
\mmr{\item Genes: \href{https://networkrepository.com/bio-mouse-gene.php}{https://networkrepository.com/bio-mouse-gene.php}.}
\end{itemize}

\remove{
\begin{table}[h!]
\centering
\begin{tabular}{l|c|c|c|c||c|c}
&&&&&\multicolumn{2}{c}{simulation parameters}\\
dataset&$|V|$&original $|E|$&original weights&normalized to range&$\sigma$&$w_{limit}$\\
\hline
birds~\cite{firth2015experimental,nr}&$202$&$11900$&half-weight index in $[0,1]$&$[0,10]$&$5,4.8,4.6,\dots,1$&$5,6,\dots,10$\\
ants~\cite{mersch2013ant,nr}&$164$&$10300$&interaction count in $[1,97]$&$[0,10]$&$3,2.8,2.6,\dots,1$&$4,5,\dots,10$\\
dolphins~\cite{gazda2015importance,nr}&$190$&$1100$&interaction count in $[1,7]$&$[0,10]$&$5,4.8,4.6,\dots,1$&$6,7,\dots,10$\\
grid&$216$&$1078$&exp. dist. in $[0,1]$&$[0,2^i], i=0,1,\dots,7$&$1.5,1.3,1.1$&none\\
\end{tabular}
\caption{Networks evaluated. Half-weight index $=$ probability that two individuals are observed together given that one has been seen. Interaction count $=$ number of times in spatial proximity within the time span of data collection.}
\label{table:datasets}
\end{table}
}

\begin{table}[h!]
\centering
\begin{tabular}{l|c|c|c|c}
&&&&\\
dataset&$|V|$&$|E|$&original weights&\begin{tabular}{c}normalized\\ to range\end{tabular}\\
\hline
\hline
birds:~\cite{firth2015experimental}&$202$&$11900$&half-weight index in $[0,1]$&$[0,10]$\\
\hline
ants:~\cite{mersch2013ant}&$164$&$10300$&interaction count in $[1,97]$&$[0,10]$\\
\hline
dolphins:~\cite{gazda2015importance}&$190$&$1100$&interaction count in $[1,7]$&$[0,10]$\\
\hline
\mmr{genes:~\cite{bansal2007infer}}&
$1000$&
$90881$&
probabilistic interaction&
$[0,10]$\\
&$2000$&
$258909$&
&\\
&$3000$&
$432239$&
&\\
&$4000$&
$563033$&
&\\
&$5000$&
$675129$&
&\\
\hline
\mmr{synthetic scale-free network}&$1000$&$3994$&exp. dist. in $[0,1]$&$[0,10]$\\
&$2000$&
$7994$&
&\\
&$3000$&
$11994$&
&\\
&$4000$&
$15994$&
&\\
&$5000$&
$19994$&
&\\
\hline
synthetic 3D grid&$216$&$1078$&exp. dist. in $[0,1]$&\begin{tabular}{c}$[0,2^i],$\\ $i=[0,7]$\end{tabular}\\
\end{tabular}
\caption{Networks evaluated. Half-weight index $=$ probability that two individuals are observed together given that one has been seen. Interaction count $=$ number of times in spatial proximity within the time span of data collection.}
\label{table:datasets}
\end{table}

\begin{table}[h!]
\centering
\begin{tabular}{l|c|c}
dataset&$\sigma$&$w_{limit}$\\
\hline
\hline
birds:~\cite{firth2015experimental}&$5,4.8,4.6,\dots,1$&$5,6,\dots,10$\\
ants:~\cite{mersch2013ant}&$3,2.8,2.6,\dots,1$&$4,5,\dots,10$\\
dolphins:~\cite{gazda2015importance}&$5,4.8,4.6,\dots,1$&$6,7,\dots,10$\\
\mmr{genes:~\cite{bansal2007infer}}&$5,4,\dots,1$&$10$\\
\mmr{synthetic scale-free network}&$1.5,1.3,1.1$&$10$\\
synthetic 3D grid&$1.5,1.3,1.1$&none\\
\end{tabular}
\caption{Simulation parameters.}
\label{table:parameters}
\end{table}

To complement our real-world scenarios, we also evaluated two ad-hoc \mmr{networks: a scale-free network and a} 3D-topology. \mmr{To generate the scale-free network we used the preferential attachment model of~\cite{barabasi1999emergence}, obtaining various input graphs with $|V|=1000, 2000, 3000, 4000, 5000$ and $|E|=3994, 7994, 11994, 15994, 19994$ respectively.
For the second we created} 
a 3D-grid of $6\times 6\times 6$ vertices placing a node in each intersection, which yields $|V|=216$ and $|E|=1078$.
Weights were defined following an exponential distribution similar to the nature datasets. Specifically, for \mmr{in the 3D grid each weight was drawn} uniformly at random a number in $(i,i+0.1]$ with probability $1/2^i$, for each $i=0,0.1,\dots,0.9$; \mmr{and for the scale-free network we used a random exponential distribution with rate parameter 5.} 

The main characteristics of the networks evaluated and the parameter values chosen are listed in \mam{Tables~\ref{table:datasets} and~\ref{table:parameters} respectively.} 

\subsection{How to \mgl{vary} $I_\sigma$}

\remove{
For the datasets evaluated, \mwd $I_\sigma$ is unknown. 
Computing $I_\sigma$ exactly could be time consuming.
Our algorithm can be extended to handle the issue implementing an exponential search of $I_\sigma$ (doubling estimates), as explained in Sec.~\ref{sec:extensions}. 
Nevertheless, to focus on the algorithmic dependency on $I_\sigma$ and $E_{\sigma}$, we developed experimentally an estimation method.
Namely,
for each input dataset, we estimated $I_\sigma$  randomly sampling the set of \heavy links in size and content. To improve accuracy, we repeated this estimate $200$ times, keeping the largest $I_\sigma$ obtained. 
An appropriate number of times to repeat the calculation, so that the chosen value ($200$) yields an accurate estimation, was experimentally determined trying increasing numbers until convergence is reached. As seen in Figure~\ref{fig:repetitions} \mmr{for some of our inputs}, the estimate converges rapidly even with a much smaller number of repetitions~\footnote{Notice that our goal in this work is not to provide a provable, and/or decentralized, estimator for the \mwd - which is an interesting open question on its own.}.

\begin{figure}[htbp]
\centering
\includegraphics[width=0.49\columnwidth]{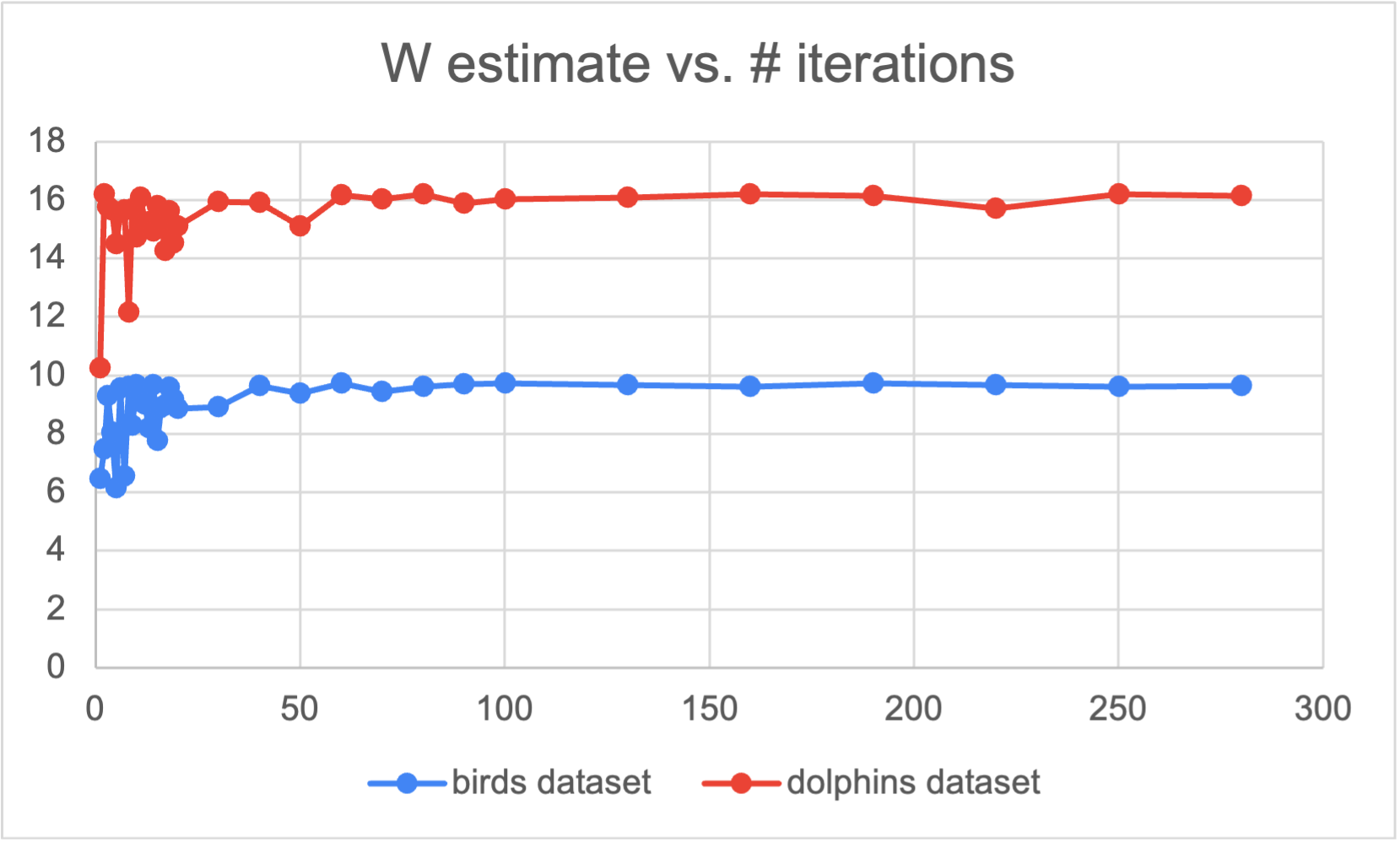}
\hfill
\includegraphics[width=0.49\columnwidth]{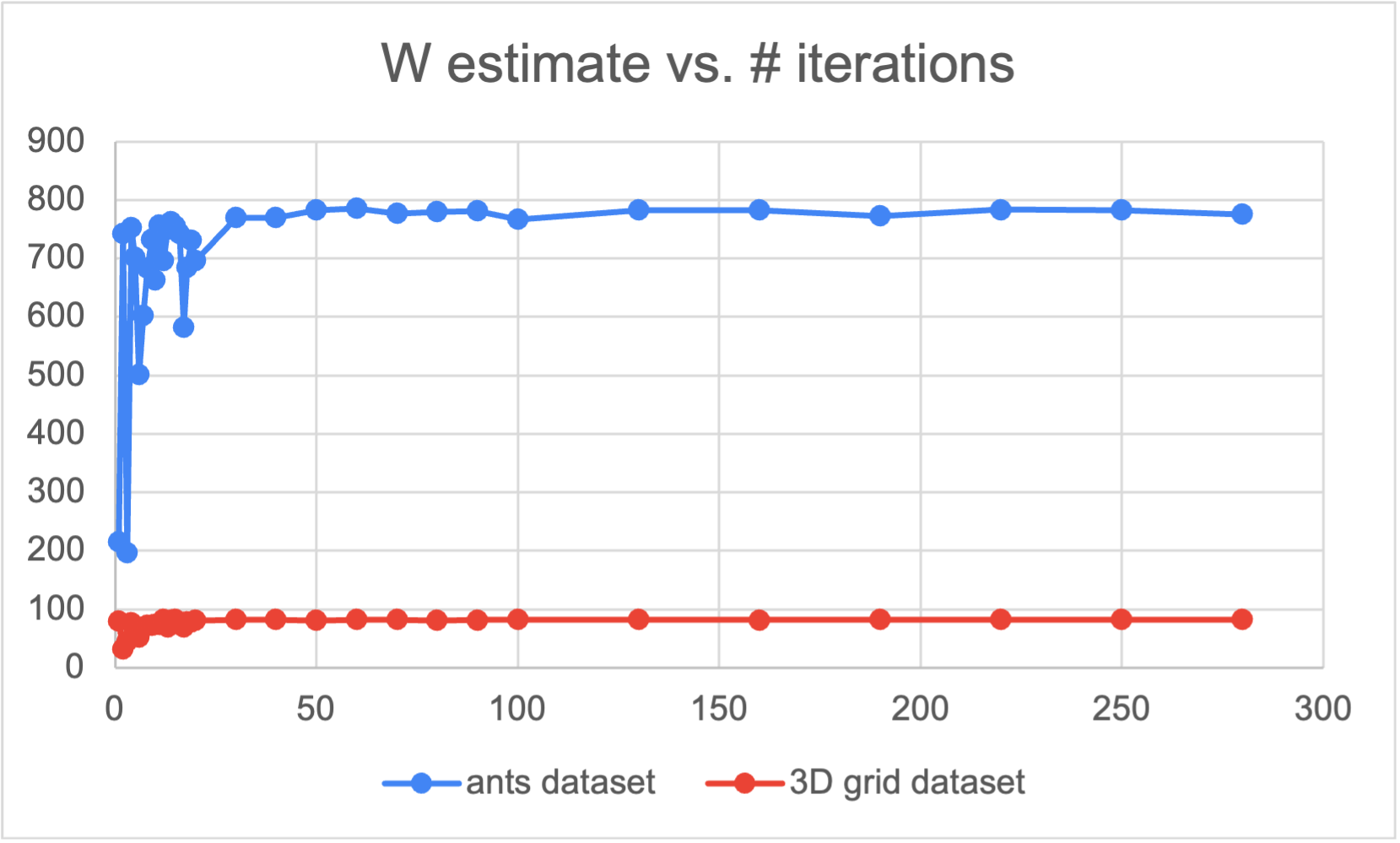}
\caption{\mwd estimate versus number of repetitions of the estimation for all datasets studied.}
\label{fig:repetitions}
\end{figure}
}

A significant challenge is how to obtain various inputs with different $I_\sigma$ values (based on the same dataset) so that we can evaluate the dependency of time on $I_\sigma$.  
\mmr{For the genes 
dataset, we addressed this challenge using decreasing values of $\sigma$ from an initial larger value ($\sigma_{limit}\leq 10$) down to $1$. A changing $\sigma$ results in a changing $|E_{\sigma}|$ (the number of \heavy links), which possibly yields a changing $I_\sigma$ (see the definition of $I_\sigma$ in Section~2.4).
For the birds, ants, and dolphins datasets, we additionally used} subsets of links of the input graph, adding only the links with weight up to some limit $w_{limit}$, from some $w_{limit} \geq \sigma_{limit}$ up to the maximum weight $10$. The latter handles the case where the distribution of weights is such that the end points of links with large weight have also large weighted degree (e.g. ants dataset). Then, starting with a large $\sigma_{limit}$ (hence, small $|E_{\sigma}|$) would not change significantly $I_\sigma$, even for larger $|E_{\sigma}|$.
For the synthetic \mmr{datasets (i.e., 3D-grid and scale-free networks)} on the other hand, we induced the increasing values of $I_\sigma$ adding a multiplicative factor of \mmr{$1,2,4,\dots,f$ to the weights, where $f=128$ for the 3D grid and $f=8$ for the scale-free networks,} for $\sigma=1.5,1.3,$ and $1.1$.
\mmr{All those ranges were determined experimentally to attain enough variety of values of $I_\sigma$.}

For the values of $I_\sigma$ obtained, we evaluated the performance of our \emph{One-beep Local Broadcast (OBLB)} algorithm as defined in Section~3.
Evaluating OBLB is enough for the purpose of evaluating the dependency of our \probfull algorithm on $I_\sigma$, because the overhead of \probfull with respect to OBLB is only logarithmic on $n$ (for messages of logarithmic size). 

\subsection{Baseline algorithms}

To the best of our knowledge, there is no other One-beep Local Broadcast (or \probfull) algorithm that runs in the beeping model. Thus, for experimental comparison with other solutions, we considered an algorithm in which nodes start execution with random delays in $[n]$, and then beep with a period equal to their own ID, which we call the \emph{Periodic} algorithm. 
We tested both algorithms on the same inputs and parameter combinations. The results are shown in Figures~\ref{fig:birdsantsplots},~\ref{fig:dolphinsgenesplots}, and~\ref{fig:synthetic}. 

One may ask why not using a simple round-robin schedule which would solve the problem in time $n$. The shortcoming of such approach of course is that for a massive network the running time would be very slow, which is intuitively unnecessary for a local computation problem such as \probfull. Moreover, a round-robin algorithm would require the nodes to know the global time and $n$. Nevertheless, the question that follows is how OBLB's running time scales with $n$ in practice. To answer that question we evaluated OBLB experimentally on a \mamr{synthetic} 3D-grid as defined above but with growing sides $6,7,\dots, 15$, thus $n=6^3,7^3,\dots,15^3$. Weights were defined with an exponential distribution on $[0,10]$ biased to small values as before. 
To maintain $I_\sigma$ and $|E_{\sigma}|$ ``stable'' under a growing $n$ we set $\sigma=1$ initially, increasing it by $0.5$ each time that we increased $n$. The result was that during the whole simulation $I_\sigma$ and $|E_{\sigma}|$ oscillated but stayed within the ranges $[10,22]$ and $[402,692]$ respectively, with $n$ growing monotonically from $216$ to $3375$. The results are shown in Figure~\ref{fig:gridnvariable}. 

To measure running time we counted \rounds of communication until 
a beep was 
\miguel{dominant}
in all \heavy links.
All executions were repeated $20$ times for each parameter combination.
The seed for all random samples was $31277847$.

\subsection{Discussion of results}
\label{sec:expdisc}
The results of our experiments show that performance of our OBLB in practice is similar or better than the theoretical bound, whereas the Periodic algorithm is worse than the same bound. We also show that OBLB running time is independent of $n$. We have illustrated our results in box-and-whisker charts to show the statistical behavior~\footnote{Although it is not random, the theoretical bound of $4I_\sigma\log|E_{\sigma}|$ sometimes varies for fixed $I_\sigma$ when $|E_{\sigma}|$ changes.} (recall that each execution was repeated $20$ times). \mmr{Notice that, in comparison with network size, our datasets comprise a variety of $I_\sigma$ values, from comparable with $|V|$ (ants) to very small (all others for original weights).}

The charts for the birds dataset in 
Figure~\ref{fig:birdsantsplots}
show on the left that, as $I_\sigma$ grows, the performance of OBLB gets much better than $4I_\sigma\log|E_{\sigma}|$ (with the upper quartile of the former smaller than the lower quartile of the latter), whereas on the right we can see that the Periodic algorithm performs similar or worse than the same function. 
Similar observations apply to the dolphins dataset in 
Figure~\ref{fig:birdsantsplots},
where in fact the difference in favor of OBLB is larger. 
\mmr{For the ants and mouse genes datasets, we can see in 
Figure~\ref{fig:dolphinsgenesplots}
much better performance of OBLB (left) than the theoretical $4I_\sigma\log|E_{\sigma}|$ time, whereas the Periodic algorithm (right) took more than $30000$ \rounds for most parameter combinations.}

\begin{figure}[h]
\centering
\includegraphics[width=0.49\columnwidth]{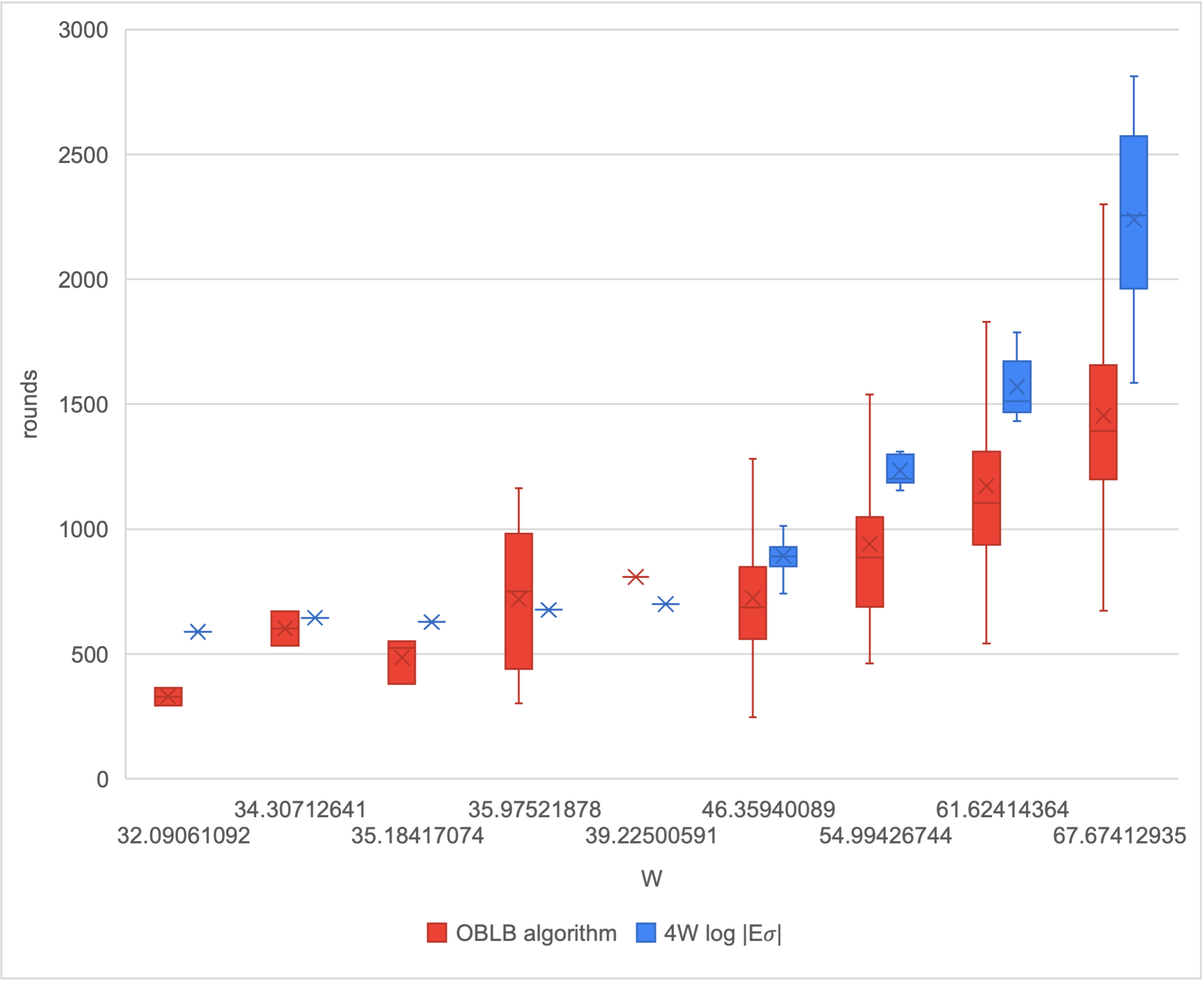}
\hfill
\includegraphics[width=0.49\columnwidth]{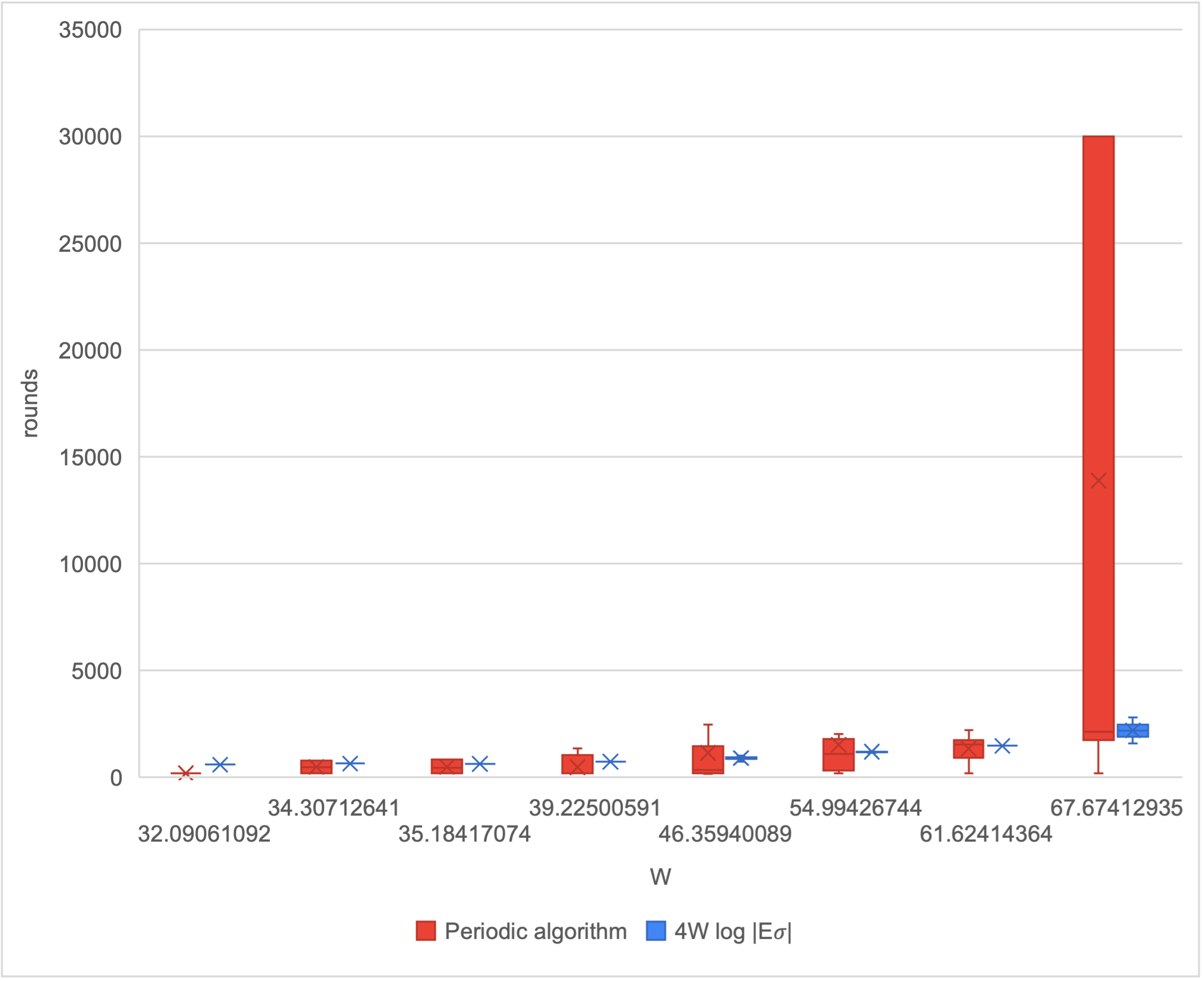}
\label{subfig:birds}
\newline
\newline
\includegraphics[width=0.49\columnwidth]{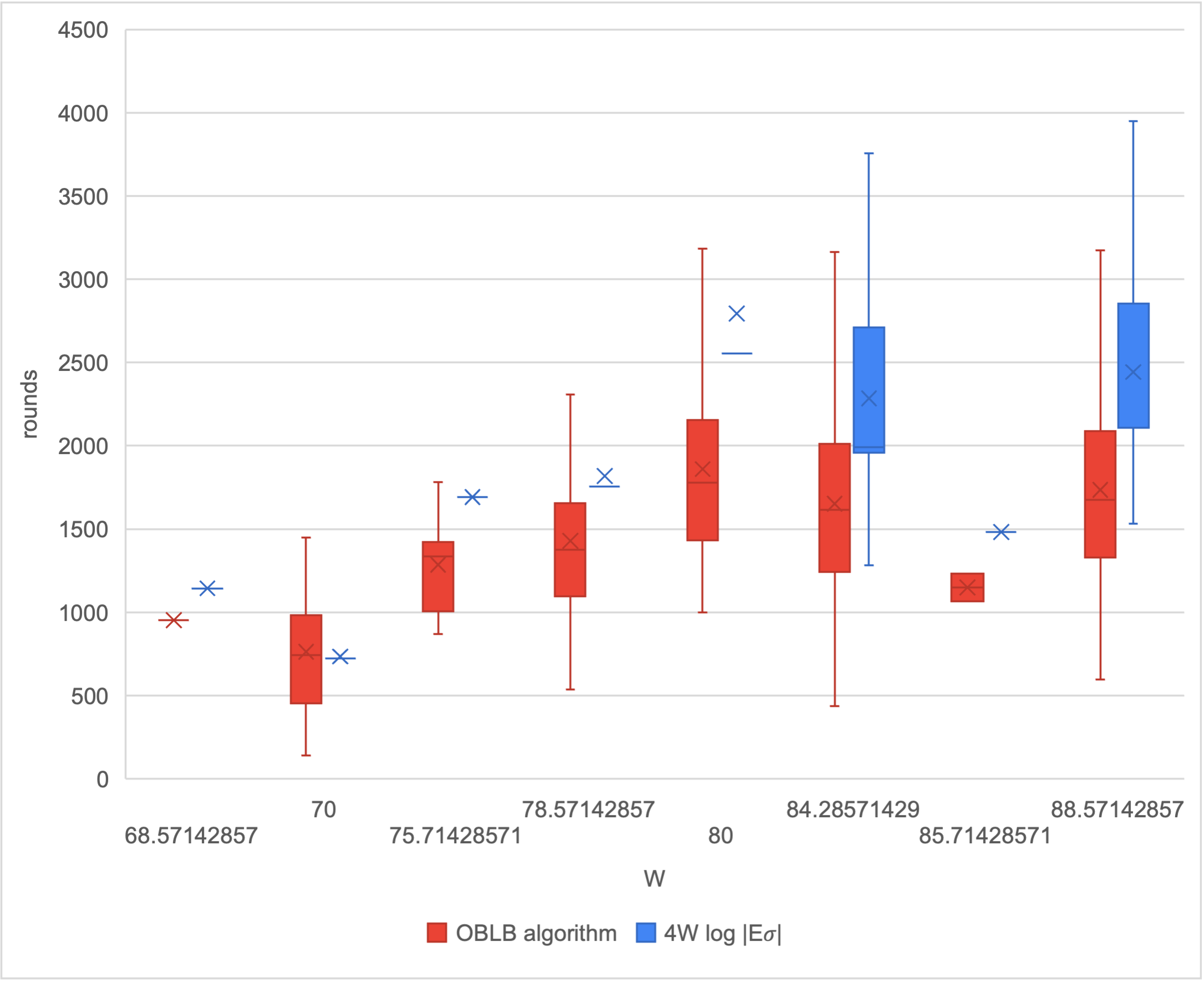}
\hfill
\includegraphics[width=0.49\columnwidth]{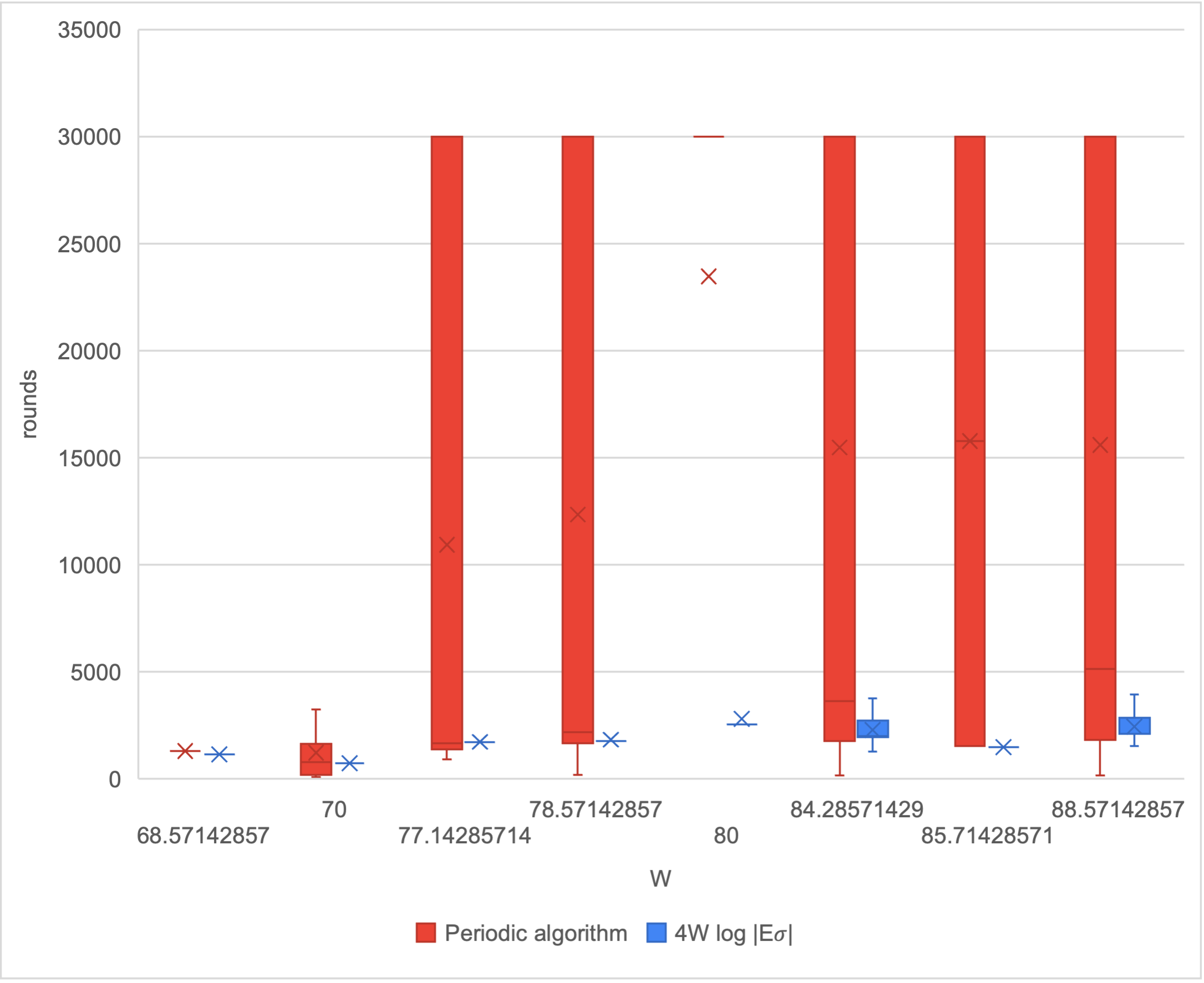}
\label{subfig:dolphins}
\caption{
Experimental results on nature-extracted datasets
in comparison with OBLB theoretical running time of $4I_\sigma\log|E_{\sigma}|$.
Left: OBLB Algorithm, right: Periodic algorithm.
\miguel{Above: Birds population dataset, below: Dolphins population dataset.}
Statistical behavior over $20$ executions for each parameter combination.
All executions capped at $30000$ \rounds. That is, datapoints shown at $30000$ \rounds in fact indicate a running time of $\geq 30000$ \rounds. 
}
\label{fig:birdsantsplots}
\end{figure}

\begin{figure}[ht]
\centering
\includegraphics[width=0.49\columnwidth]{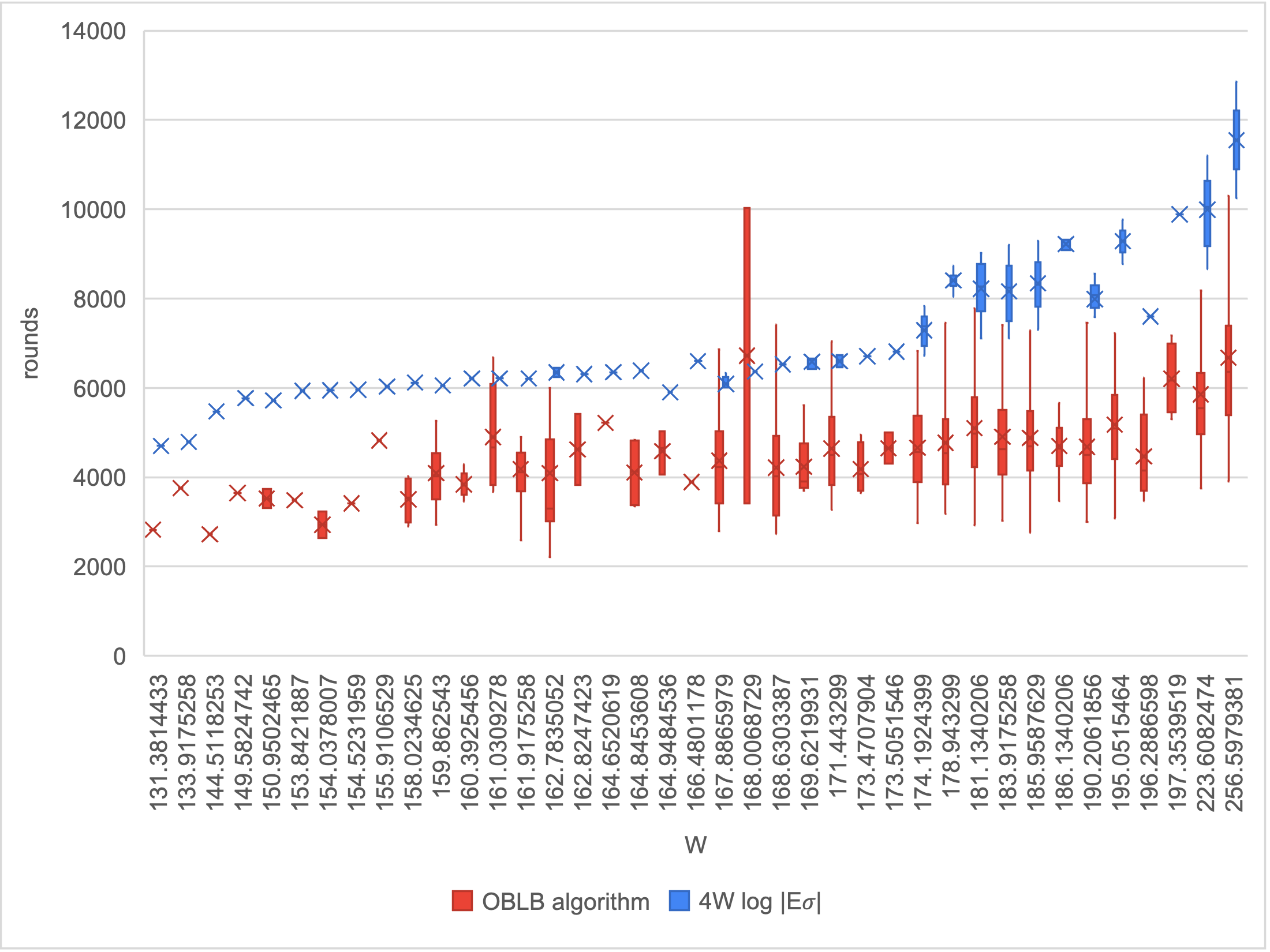}
\hfill
\includegraphics[width=0.49\columnwidth]{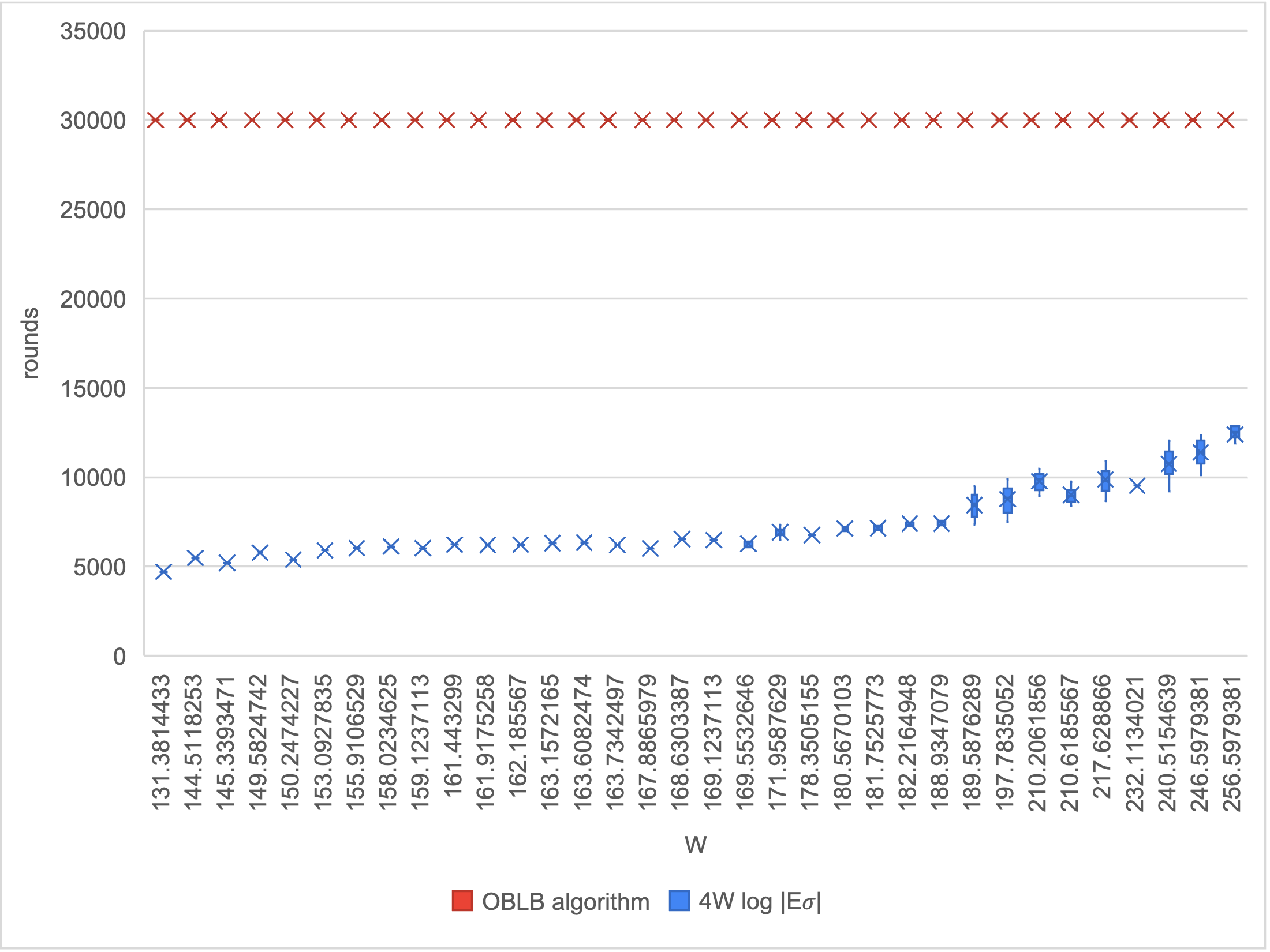}
\label{subfig:ants}
\newline
\newline
\includegraphics[width=0.49\columnwidth]{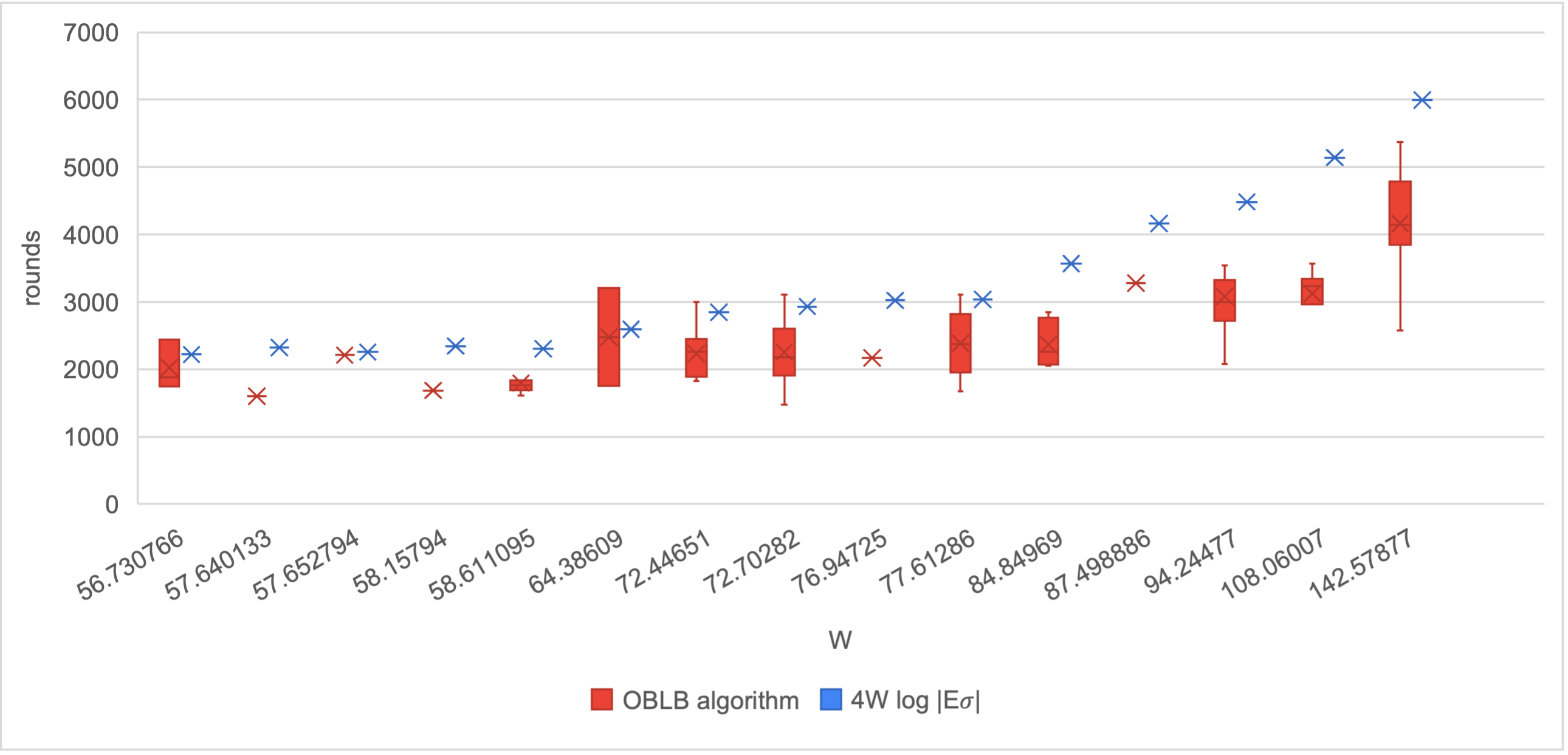}
\hfill
\includegraphics[width=0.49\columnwidth]{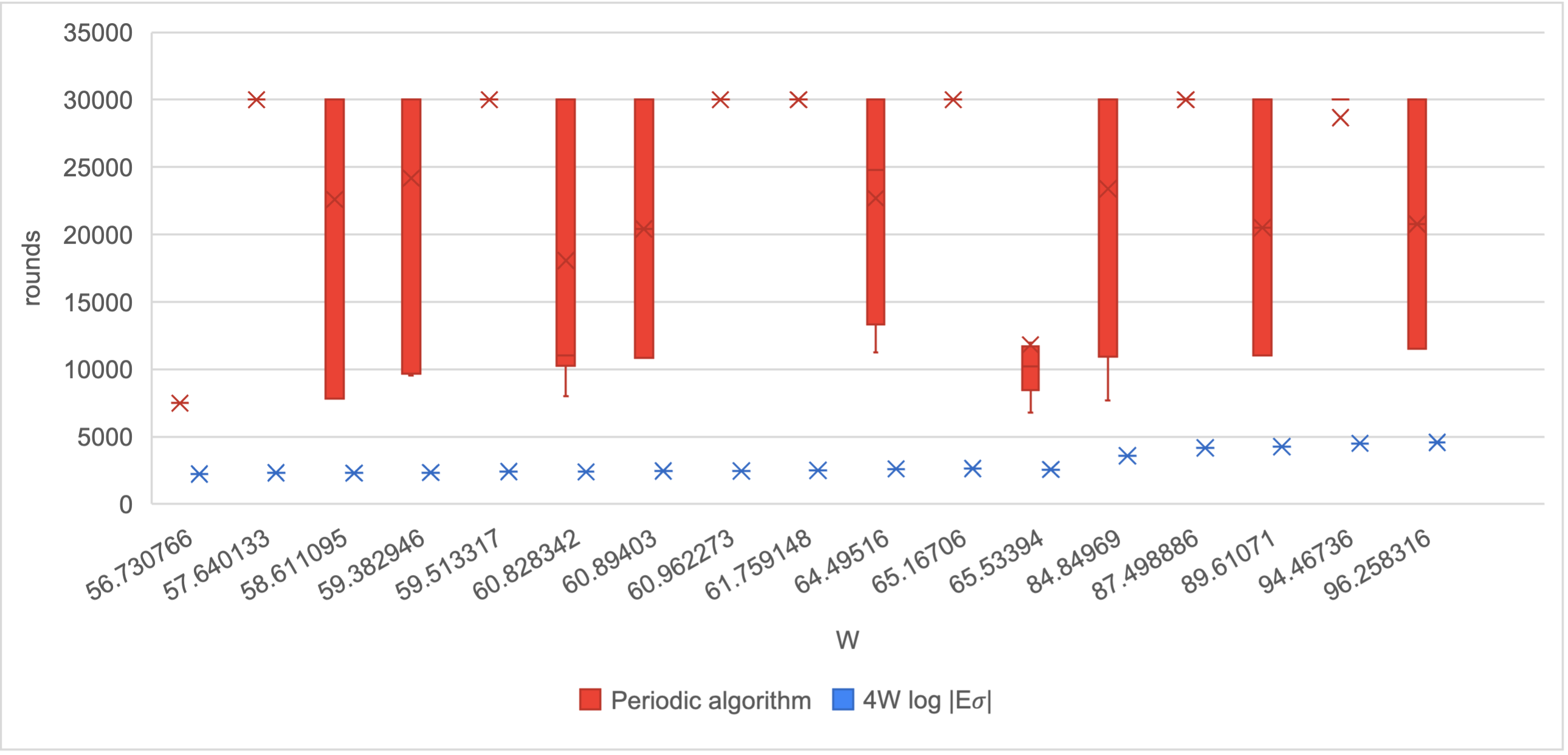}
\label{subfig:genes}
\caption{
Experimental results on nature-extracted datasets
in comparison with OBLB theoretical running time of $4I_\sigma\log|E_{\sigma}|$.
Left: OBLB Algorithm, right: Periodic algorithm.
Above: Ant colony dataset, below: Mouse genes dataset ($|V|=1000, |E|=90881$ shown, larger graphs produced similar results).
Statistical behavior over $20$ executions for each parameter combination.
All executions capped at $30000$ \rounds. That is, datapoints shown at $30000$ \rounds in fact indicate a running time of $\geq 30000$ \rounds. 
}
\label{fig:dolphinsgenesplots}
\end{figure}

\mmr{The performance on synthesized topologies, scale-free networks and 3D grids, can be seen in Figure~\ref{fig:synthetic},} where the same observations apply: the running time of OBLB is almost always below $4I_\sigma\log|E_{\sigma}|$ (with a larger difference as $I_\sigma$ grows), while the running time of the Periodic algorithm is above (many times above $30000$ \rounds, when we stopped the simulation). 

For the evaluation of the dependency of OBLB on $n$, we can see in Figure~\ref{fig:gridnvariable} that indeed it is independent (as the theoretical analysis showed). That is, for $n$ growing monotonically from $216$ to $3375$ the average running time of OBLB oscillated within a range of $[330,580]$. That oscillation is due to the oscillation of $I_\sigma$ and $|E_{\sigma}|$, which can be seen in the plot of the theoretical $4I_\sigma\log |E_{\sigma}|$ included for comparison in the same chart.

\begin{figure}[htbp]
\centering
\includegraphics[width=0.49\columnwidth]{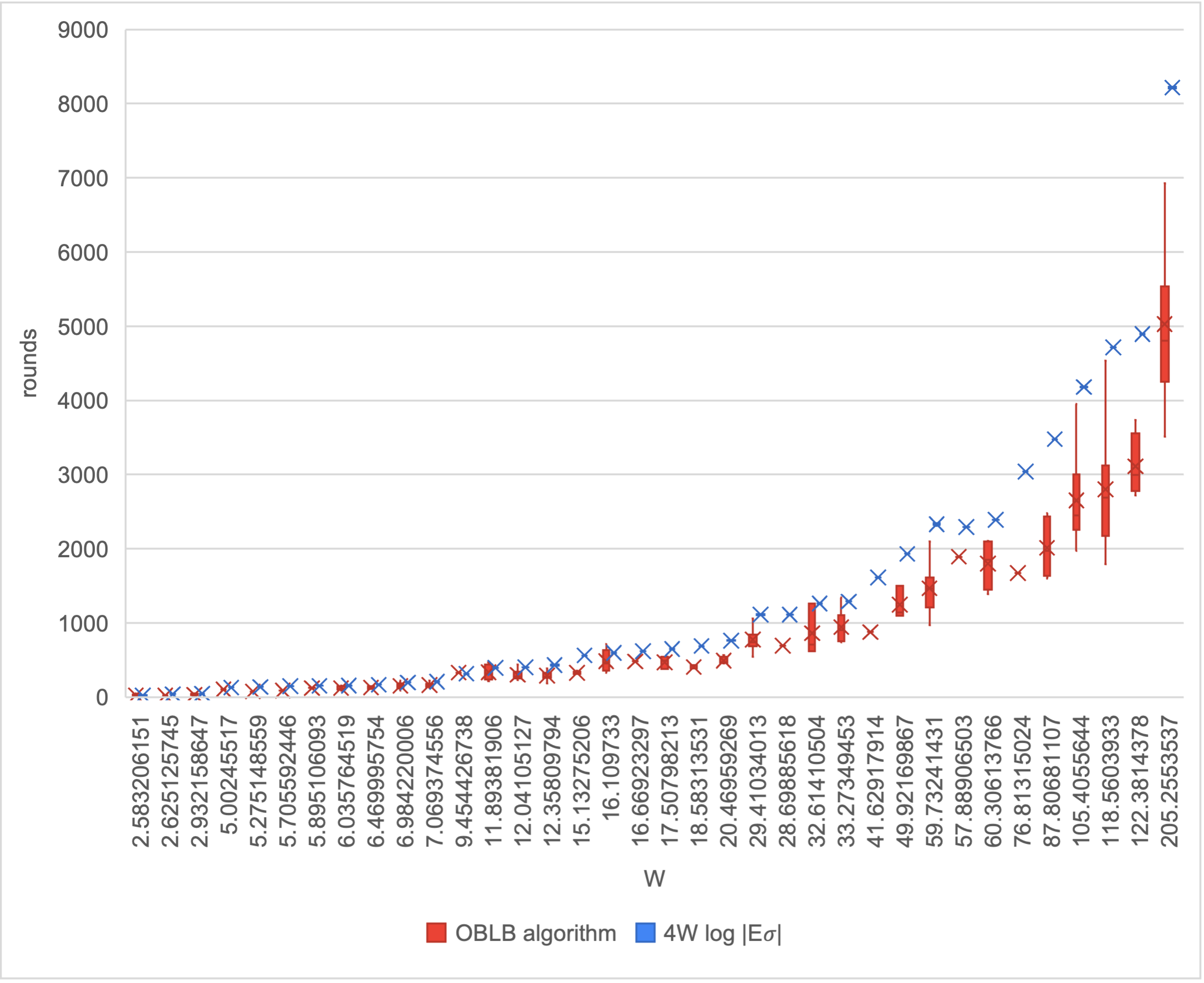}
\hfill
\includegraphics[width=0.49\columnwidth]{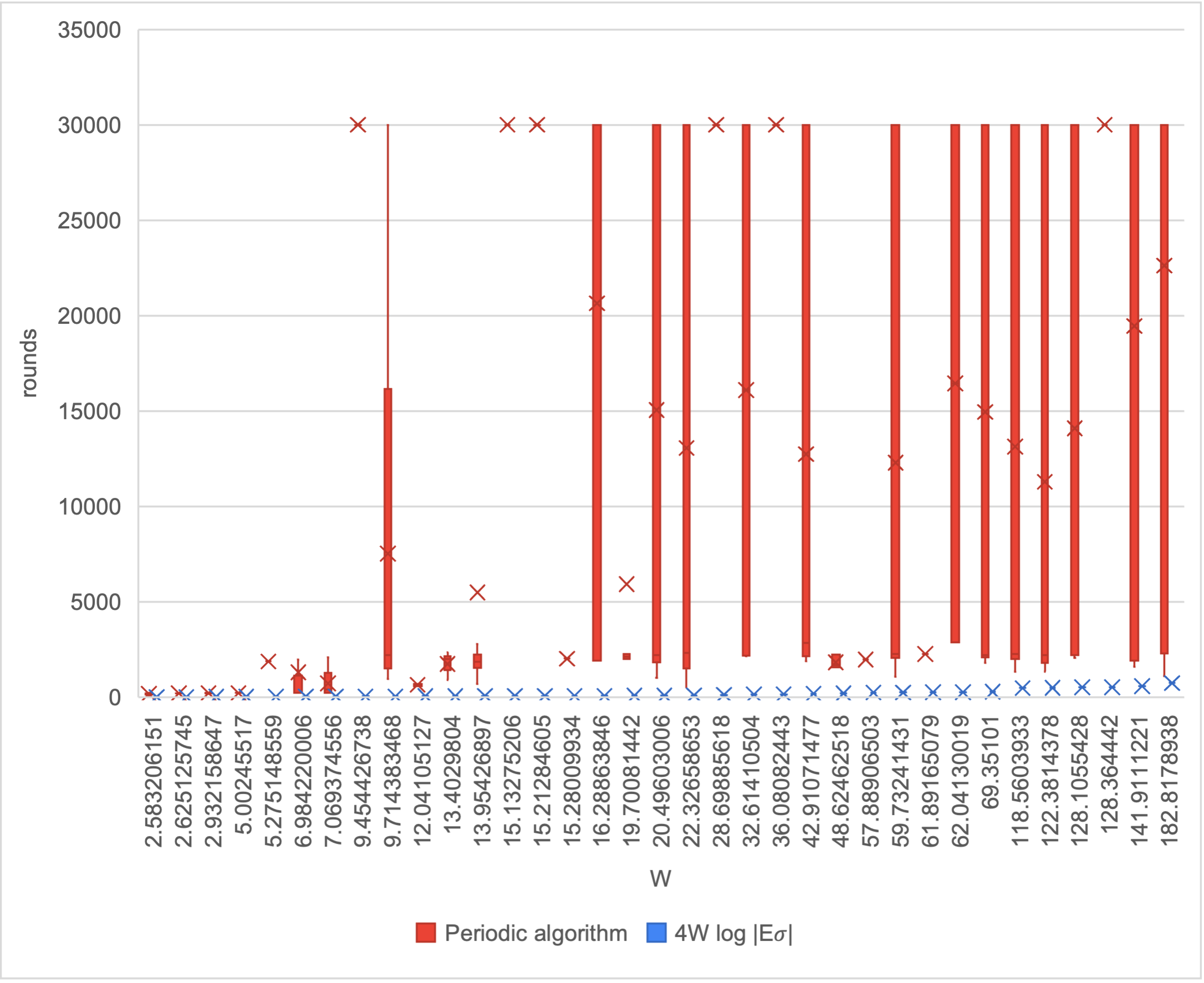}
\label{subfig:grid}
\includegraphics[width=0.49\columnwidth]{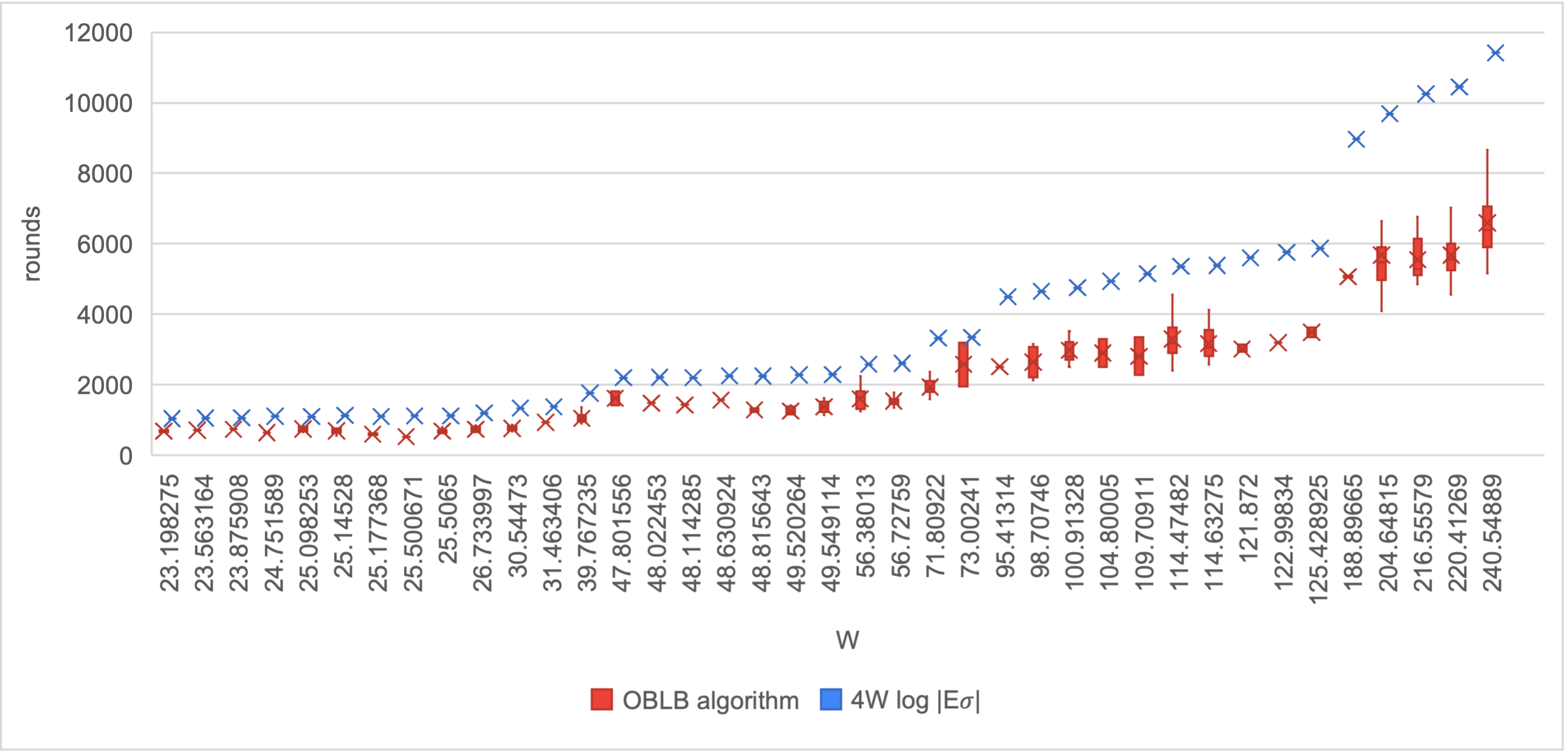}
\hfill
\includegraphics[width=0.49\columnwidth]{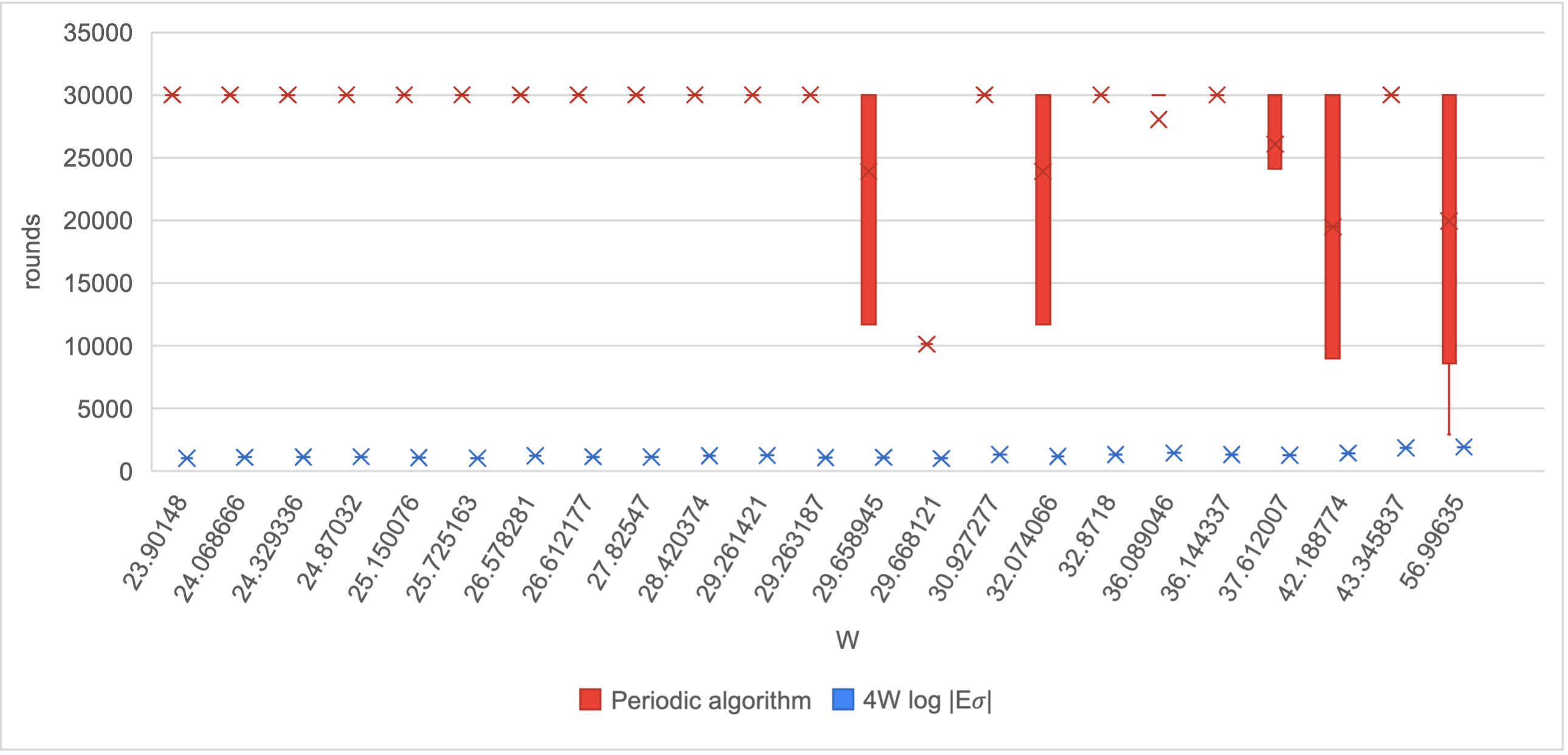}
\label{subfig:scalefree}
\caption{
Experimental results on synthetic topologies in comparison with OBLB theoretical running time of $4I_\sigma\log|E_{\sigma}|$.
Left: OBLB Algorithm, right: Periodic algorithm.
Above: 3D grid dataset, below: Scale-free network dataset ($|V|=1000$, $|E|=3994$ shown, larger graphs produced similar results).
Statistical behavior over $20$ executions for each parameter combination.
All executions capped at $30000$ \rounds. That is, datapoints shown at $30000$ \rounds in fact indicate a running time of $\geq 30000$ \rounds.}
\label{fig:synthetic}
\end{figure}

\begin{figure}[htbp]
\centering
\includegraphics[width=0.6\columnwidth]{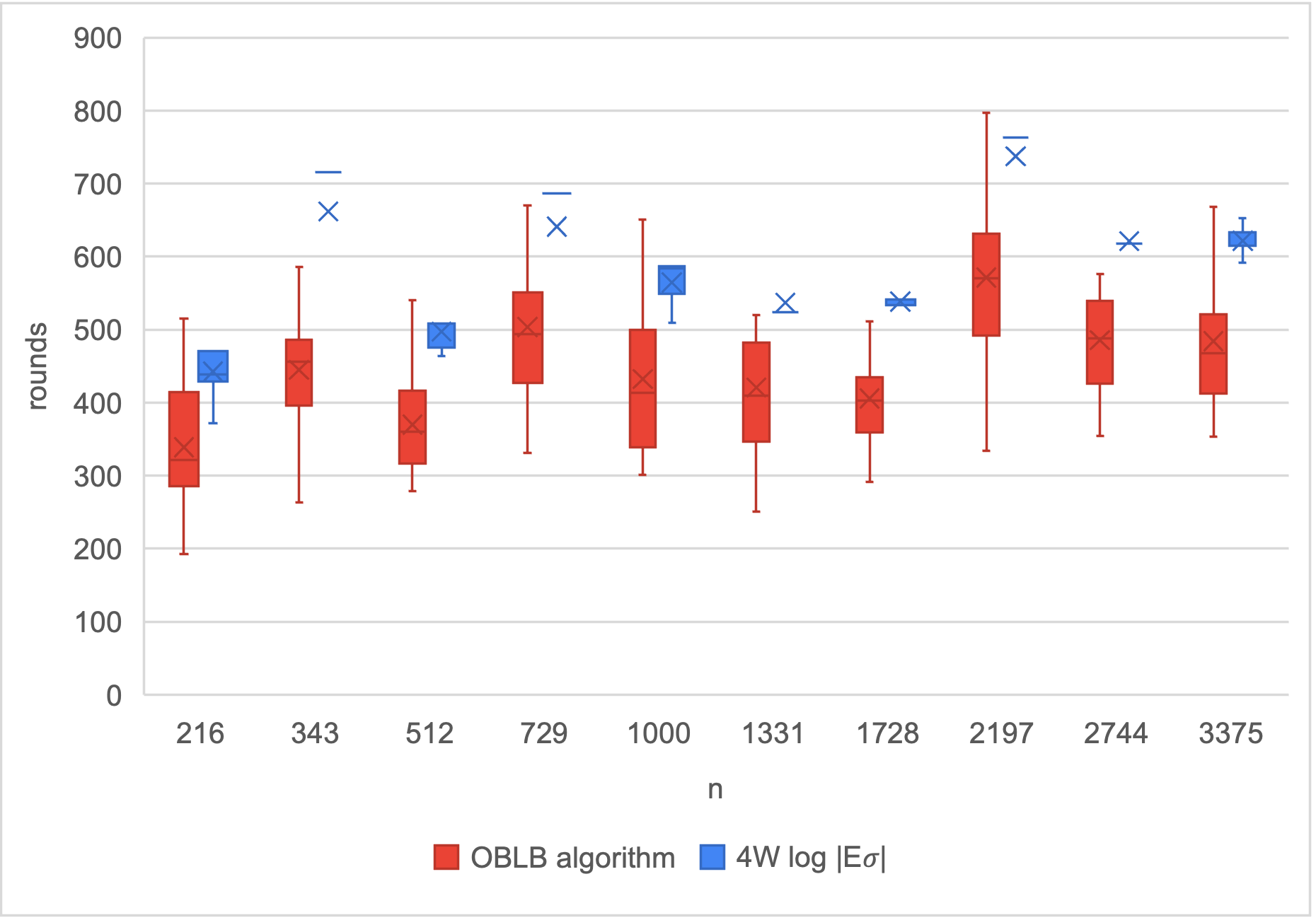}
\caption{
Experimental results on a \mamr{synthetic} 3D-grid topology showing that OBLB running time is independent of $n$. The OBLB theoretical running time of $4I_\sigma\log|E_{\sigma}|$ is included to observe that the oscillation of OBLB running time is due to the oscillation of $I_\sigma$ and $|E_{\sigma}|$.
Statistical behavior over $20$ executions for each parameter combination.
}
\label{fig:gridnvariable}
\end{figure}


\subsection{Reproducibility of Experiments}
The \mmr{nature-extracted} datasets are publicly available in the repositories cited above.
All synthesized datasets are included in an anonymous private repository at \href{https://anonymous.4open.science/r/OBLB-A8F8/}{https://anonymous.4open.science/r/OBLB-A8F8/} and will be made publicly available upon publication of the paper. 

The simulators of OBLB and Periodic algorithms were coded in Java language, compiled in the Java SE Runtime Environment (build 1.8.0\_121-b13), and executed in a Java HotSpot 64-Bit Server VM (build 25.121-b13, mixed~mode).
All code is included in an anonymous private repository at \href{https://anonymous.4open.science/r/OBLB-A8F8/}{https://anonymous.4open.science/r/OBLB-A8F8/} and will be made publicly available upon publication of the paper. 
\section{Discussion and Open Directions}
\label{sec:discussion}

We showed, using formal analysis, 
that the complexity of \probshort in WBN networks
is nearly-linearly proportional to the newly introduced characteristic -- \mwd.
Such characteristic is global, and its range may span from a constant to the number $n$ of all nodes, depending on the weighted network (even if neighborhoods are small).
Our lower bound shows that executing \probshort may be costly even in networks with small degree (but with high \mwd). On the other hand, for networks with small \mwd (which was also the case in most of the considered nature-driven and synthetic datasets,
we give a provably-efficient nearly-optimal algorithmic solution.
Our theoretical bounds achieve a gap of at most $(\cM+\log n)\log n$ between the upper bound (for messages of length $\cM$) and lower bounds (for single-bit messages). That means that our upper bound is nearly-optimal for systems with short messages $\cM\in O(\log n)$, yielding a gap of $O(\log^2 n)$. That is, our algorithm is well scalable.  
%
Our experimental analysis
confirming our theoretical results 
on diverse nature-driven and synthetic datasets. 
Our algorithm~also~outruns~other~competitors. 
%
\mgl{In this work, we also presented an emulator of any polynomial algorithm designed for a BN that runs on a WBN. Our emulator succeeds whp introducing only a $O(\log n)$ time overhead over the original BN algorithm. Thus, the wealth of BN algorithms in the literature can be applied to WBNs with only logarithmic overhead.} 

Interesting open problems include: the study of
more complex nature-inspired models in the WBN framework (including binary encoding of more complex model features), 
more complex learning and computation problems, the relation between the \weight threshold $\sigma$ and the \mwd $I_\sigma$. 
\mam{We conjecture that our general lower
bound could be improved for specific graph topologies, e.g., for (nearly-) uniform degree graphs.}

\bibliographystyle{plain} 
\bibliography{references,references2,references5}

@inproceedings{discversion,
  author    = {Dariusz R. Kowalski and
               Miguel A. Mosteiro},
  title     = {Weighted Beeping Networks},
  booktitle = {Proc. of the 40th International Symposium on Distributed Computing (DISC)},
  series       = {LIPIcs},
  publisher    = {Schloss Dagstuhl - Leibniz-Zentrum f{\"{u}}r Informatik},
  note      = {To appear.},
  year      = {2026}
}

@article{DILLENCOURT2025106516,
title = {Leveraging parameterized Chernoff bounds for simplified algorithm analyses},
journal = {Information Processing Letters},
volume = {187},
pages = {106516},
year = {2025},
issn = {0020-0190},
doi = {https://doi.org/10.1016/j.ipl.2024.106516},
url = {https://www.sciencedirect.com/science/article/pii/S0020019024000462},
author = {Michael Dillencourt and Michael T. Goodrich and Michael Mitzenmacher}
}

@inproceedings{dufoulon2020can,
  title={Can Uncoordinated Beeps tell Stories?},
  author={Dufoulon, Fabien and Burman, Janna and Beauquier, Joffroy},
  booktitle={Proceedings of the 39th Symposium on Principles of Distributed Computing},
  pages={408--417},
  year={2020}
}

@inproceedings{feldmann2020time,
  title={Time-and space-optimal discrete clock synchronization in the beeping model},
  author={Feldmann, Michael and Khazraei, Ardalan and Scheideler, Christian},
  booktitle={Proceedings of the 32nd ACM Symposium on Parallelism in Algorithms and Architectures},
  pages={223--233},
  year={2020}
}

@inproceedings{kleinberg2000small,
  title={The small-world phenomenon: An algorithmic perspective},
  author={Kleinberg, Jon},
  booktitle={Proceedings of the thirty-second annual ACM symposium on Theory of computing},
  pages={163--170},
  year={2000}
}

@inproceedings{garncarek2025deterministic,
  author       = {Pawel Garncarek and
                  Tomasz Jurdzinski and
                  Dariusz R. Kowalski and
                  Shay Kutten and
                  Miguel A. Mosteiro},
  IGNOREeditor       = {Ho{-}Lin Chen and
                  Wing{-}Kai Hon and
                  Meng{-}Tsung Tsai},
  title        = {Deterministic Local Problems in Radio Networks: On the Impact of Local
                  Domination and a Bit of Advice},
  booktitle    = {36th International Symposium on Algorithms and Computation, {ISAAC}
                  2025, Tainan, Taiwan, December 7-10, 2025},
  series       = {LIPIcs},
  IGNOREvolume       = {359},
  IGNOREpages        = {34:1--34:20},
  publisher    = {Schloss Dagstuhl - Leibniz-Zentrum f{\"{u}}r Informatik},
  year         = {2025},
  url          = {https://doi.org/10.4230/LIPIcs.ISAAC.2025.34},
  IGNOREdoi          = {10.4230/LIPICS.ISAAC.2025.34},
  bibsource    = {dblp computer science bibliography, https://dblp.org}
}

@inproceedings{lynch2015truly,
  title={A (truly) local broadcast layer for unreliable radio networks},
  author={Lynch, Nancy and Newport, Calvin},
  booktitle={Proceedings of the 2015 ACM Symposium on Principles of Distributed Computing (PODC'15)},
  pages={109--118},
  year={2015}
}

@article{Hay,
	author = {J.F. Hayes},
	title = {An adaptive technique for local distribution},
	journal = {IEEE Communications Magazine},
	issue_date = {March 1981},
	volume = {19},
	number = {2},
	year = {1981},
}

@article{ASHKENAZI2022104925,
title = {Noisy beeping networks},
journal = {Information and Computation},
volume = {289},
pages = {104925},
year = {2022},
issn = {0890-5401},
doi = {https://doi.org/10.1016/j.ic.2022.104925},
url = {https://www.sciencedirect.com/science/article/pii/S0890540122000785},
author = {Yagel Ashkenazi and Ran Gelles and Amir Leshem}
}

@inproceedings{gilbert2017symmetry,
  title={Symmetry Breaking with Noisy Processes},
  author={Gilbert, Seth and Newport, Calvin},
  booktitle={Proceedings of the ACM Symposium on Principles of Distributed Computing},
  pages={273--282},
  year={2017}
}

@article{giakkoupis2026self,
  title={Self-stabilizing MIS computation in the beeping model},
  author={Giakkoupis, George and Turau, Volker and Ziccardi, Isabella},
  journal={Distributed Computing},
  volume={39},
  number={1},
  pages={2},
  year={2026},
  publisher={Springer}
}

@inproceedings{efremenko2020noisy,
  title={Noisy beeps},
  author={Efremenko, Klim and Kol, Gillat and Saxena, Raghuvansh R},
  booktitle={Proceedings of the 39th Symposium on Principles of Distributed Computing},
  pages={418--427},
  year={2020}
}

@inproceedings{DBLP:conf/esa/GarncarekKKM25,
  author       = {Pawel Garncarek and
                  Dariusz R. Kowalski and
                  Shay Kutten and
                  Miguel A. Mosteiro},
  editor       = {Anne Benoit and
                  Haim Kaplan and
                  Sebastian Wild and
                  Grzegorz Herman},
  title        = {Beeping Deterministic {CONGEST} Algorithms in Graphs},
  booktitle    = {33rd Annual European Symposium on Algorithms, {ESA} 2025, Warsaw,
                  Poland, September 15-17, 2025},
  series       = {LIPIcs},
  pages        = {20:1--20:17},
  publisher    = {Schloss Dagstuhl - Leibniz-Zentrum f{\"{u}}r Informatik},
  year         = {2025},
  url          = {https://doi.org/10.4230/LIPIcs.ESA.2025.20},
  doi          = {10.4230/LIPICS.ESA.2025.20},
  bibsource    = {dblp computer science bibliography, https://dblp.org}
}

@article{oroojlooy2023review,
  title={A review of cooperative multi-agent deep reinforcement learning},
  author={Oroojlooy, Afshin and Hajinezhad, Davood},
  journal={Applied Intelligence},
  volume={53},
  number={11},
  pages={13677--13722},
  year={2023},
  publisher={Springer}
}

@inproceedings{mao2020neighborhood,
  title={Neighborhood cognition consistent multi-agent reinforcement learning},
  author={Mao, Hangyu and Liu, Wulong and Hao, Jianye and Luo, Jun and Li, Dong and Zhang, Zhengchao and Wang, Jun and Xiao, Zhen},
  booktitle={Proceedings of the AAAI conference on artificial intelligence},
  volume={34(5)},
  pages={7219--7226},
  year={2020}
}

@mastersthesis{malucellineighboring,
  title={Neighboring-based Strategies for Multi-Agent Reinforcement Learning},
  author={Malucelli, Nicol{\`o}},
school={Universita di Bologna},
year={2024}
}

@article{yang2024neighborhood,
  title={Neighborhood curiosity-based exploration in multi-agent reinforcement learning},
  author={Yang, Shike and He, Ziming and Li, Jingchen and Shi, Haobin and Ji, Qingbing and Hwang, Kao-Shing and Li, Xianshan},
  journal={IEEE Transactions on Cognitive and Developmental Systems},
  year={2024},
  publisher={IEEE}
}

@article{chen2024marl,
  title={MARL-LNS: Cooperative Multi-agent Reinforcement Learning via Large Neighborhoods Search},
  author={Chen, Weizhe and Koenig, Sven and Dilkina, Bistra},
  journal={arXiv preprint arXiv:2404.03101},
  year={2024}
}

@article{barabasi1999emergence,
  title={Emergence of scaling in random networks},
  author={Barab{\'a}si, Albert-L{\'a}szl{\'o} and Albert, R{\'e}ka},
  journal={science},
  volume={286},
  number={5439},
  pages={509--512},
  year={1999},
  publisher={American Association for the Advancement of Science}
}

@article{bansal2007infer,
     title={How to infer gene networks from expression profiles},
     author={Bansal, Mukesh and Belcastro, Vincenzo and Ambesi-Impiombato, Alberto and Di Bernardo, Diego},
     journal={Molecular systems biology},
     volume={3},
     number={1},
     year={2007},
     publisher={Wiley Online Library}
}

@inproceedings{davies2023optimal,
  title={Optimal message-passing with noisy beeps},
  author={Davies, Peter},
  booktitle={Proceedings of the 2023 ACM Symposium on Principles of Distributed Computing},
  pages={300--309},
  year={2023}
}

@inproceedings{ba2024cautiously,
  title={Cautiously-Optimistic Knowledge Sharing for Cooperative Multi-Agent Reinforcement Learning},
  author={Ba, Yanwen and Liu, Xuan and Chen, Xinning and Wang, Hao and Xu, Yang and Li, Kenli and Zhang, Shigeng},
  booktitle={Proceedings of the AAAI Conference on Artificial Intelligence},
  volume={38},
  number={16},
  pages={17299--17307},
  year={2024}
}

@inproceedings{chen2024rgmcomm,
  title={RGMComm: Return Gap Minimization via Discrete Communications in Multi-Agent Reinforcement Learning},
  author={Chen, Jingdi and Lan, Tian and Joe-Wong, Carlee},
  booktitle={Proceedings of the AAAI Conference on Artificial Intelligence},
  volume={38},
  number={16},
  pages={17327--17336},
  year={2024}
}

@inproceedings{du2024situation,
  title={Situation-Dependent Causal Influence-Based Cooperative Multi-Agent Reinforcement Learning},
  author={Du, Xiao and Ye, Yutong and Zhang, Pengyu and Yang, Yaning and Chen, Mingsong and Wang, Ting},
  booktitle={Proceedings of the AAAI Conference on Artificial Intelligence},
  volume={38},
  number={16},
  pages={17362--17370},
  year={2024}
}

@inproceedings{li2024optimistic,
  title={Optimistic Value Instructors for Cooperative Multi-Agent Reinforcement Learning},
  author={Li, Chao and Zhang, Yupeng and Wang, Jianqi and Hu, Yujing and Dong, Shaokang and Li, Wenbin and Lv, Tangjie and Fan, Changjie and Gao, Yang},
  booktitle={Proceedings of the AAAI Conference on Artificial Intelligence},
  volume={38},
  number={16},
  pages={17453--17460},
  year={2024}
}

@inproceedings{liu2024imagine,
  title={Imagine, Initialize, and Explore: An Effective Exploration Method in Multi-Agent Reinforcement Learning},
  author={Liu, Zeyang and Wan, Lipeng and Yang, Xinrui and Chen, Zhuoran and Chen, Xingyu and Lan, Xuguang},
  booktitle={Proceedings of the AAAI Conference on Artificial Intelligence},
  volume={38},
  number={16},
  pages={17487--17495},
  year={2024}
}

@string{isaac = {Intl. Symp. on Algorithms and Computation}}

@string{lncs = {Lecture Notes in Computer Science}}

@string{svb = {Springer-Verlag, Berlin}}

@inproceedings{HWaff,
	Author = {Magn{\'u}s M. Halld{\'o}rsson and Roger Wattenhofer},
	Booktitle = {Proc. of the 36th International Colloquium on Automata, Languages and Programming (ICALP)},
	Pages = {525--536},
	Title = {Wireless Communication is in APX},
	Year = {2009}}

@inproceedings{asgeirsson2017universal,
  title={Universal framework for wireless scheduling problems},
  author={{\'A}sgeirsson, Eyj{\'o}lfur I and Halld{\'o}rsson, Magn{\'u}s M and Tonoyan, Tigran},
  booktitle={44th International Colloquium on Automata, Languages, and Programming (ICALP 2017)},
  year={2017},
  organization={Schloss Dagstuhl-Leibniz-Zentrum fuer Informatik}
}

@article{dong2023social,
  title={Social signal learning of the waggle dance in honey bees},
  author={Dong, Shihao and Lin, Tao and Nieh, James C and Tan, Ken},
  journal={Science},
  volume={379},
  number={6636},
  pages={1015--1018},
  year={2023},
  publisher={American Association for the Advancement of Science}
}

@book{HWsurvey2019,
author="Halld{\'o}rsson, Magn{\'u}s M.
and Wattenhofer, Roger",
editor="Steffen, Bernhard
and Woeginger, Gerhard",
title="Wireless Network Algorithmics",
bookTitle="Computing and Software Science: State of the Art and Perspectives",
year="2019",
publisher="Springer International Publishing",
address="Cham",
pages="141--160",
isbn="978-3-319-91908-9",
NooPdoi="10.1007/978-3-319-91908-9_9",
NooPurl="https://doi.org/10.1007/978-3-319-91908-9_9"
}

@article{liu2024review,
  title={A Review of Graph Neural Networks in Epidemic Modeling},
  author={Liu, Zewen and Wan, Guancheng and Prakash, B Aditya and Lau, Max SY and Jin, Wei},
  journal={arXiv preprint arXiv:2403.19852},
  year={2024}
}

@article{FKVaffJournal,
author="Fangh{\"a}nel, Alexander
and Kesselheim, Thomas
and V{\"o}cking, Berthold",
title = "Improved algorithms for latency minimization in wireless networks",
journal = "Theoretical Computer Science",
volume = "412",
number = "24",
pages = "2657 - 2667",
year = "2011"
}

@inproceedings{FKVaff,
author="Fangh{\"a}nel, Alexander
and Kesselheim, Thomas
and V{\"o}cking, Berthold",
title="Improved Algorithms for Latency Minimization in Wireless Networks",
bookTitle="36th International Colloquium on Automata, Languages and Programming (ICALP)",
year="2009",
publisher="Springer Berlin Heidelberg",
pages="447--458"
}

@inproceedings{Kaff,
	Author = {Thomas Kesselheim},
	Booktitle = {Proc. of the 31st Annual ACM SIGACT-SIGOPS Symposium on Principles of Distributed Computing (PODC)},
	Pages = {281--290},
	Title = {Dynamic packet scheduling in wireless networks},
	Year = {2012}}

@inproceedings{KVaff,
	Author = {Thomas Kesselheim and Berthold V{\"o}cking},
	Booktitle = {Proc. of the 24th International Symposium on Distributed Computing (DISC)},
	Pages = {163--178},
	Publisher = svb,
	Series = lncs,
	Title = {Distributed Contention Resolution in Wireless Networks},
	Volume = 6343,
	Year = {2010}}

@inproceedings{KMR_fomc14,
  author    = {Dariusz R. Kowalski and
               Miguel A. Mosteiro and
               Tevin Rouse},
  title     = {Dynamic multiple-message broadcast: bounding throughput in the affectance
               model},
  booktitle = {10th {ACM} International Workshop on Foundations of Mobile Computing,
               {FOMC} 2014, Philadelphia, PA, USA, August 11, 2014},
  pages     = {39--46},
  year      = {2014}
}

@article{KowalskiMZjournal,
  author       = {Dariusz R. Kowalski and
                  Miguel A. Mosteiro and
                  Kevin Zaki},
  title        = {Dynamic Multiple-Message Broadcast: Bounding Throughput in the Affectance
                  Model},
  journal      = {Theory Comput. Syst.},
  volume       = {67},
  number       = {4},
  pages        = {825--854},
  year         = {2023},
  url          = {https://doi.org/10.1007/s00224-023-10131-1},
  doi          = {10.1007/S00224-023-10131-1},
  bibsource    = {dblp computer science bibliography, https://dblp.org}
}

@inproceedings{DeMarcoK25,
  author       = {Gianluca {De Marco} and
                  Dariusz R. Kowalski},
  OLDeditor       = {Keren Censor{-}Hillel and
                  Fabrizio Grandoni and
                  Jo{\"{e}}l Ouaknine and
                  Gabriele Puppis},
  title        = {Ultra-Resilient Superimposed Codes: Near-Optimal Construction and
                  Applications},
  booktitle    = {52nd International Colloquium on Automata, Languages, and Programming,
                  {ICALP} 2025, July 8-11, 2025, Aarhus, Denmark},
  series       = {LIPIcs},
  volume       = {334},
  pages        = {65:1--65:20},
  OLDpublisher    = {Schloss Dagstuhl - Leibniz-Zentrum f{\"{u}}r Informatik},
  year         = {2025},
  OLDurl          = {https://doi.org/10.4230/LIPIcs.ICALP.2025.65},
  OLDdoi          = {10.4230/LIPICS.ICALP.2025.65},
  bibsource    = {dblp computer science bibliography, https://dblp.org}
}

@inproceedings{kowalski2020generic,
  title={Generic Framework for Optimization of Local Dissemination in Wireless Networks},
  author={Kowalski, Dariusz R and Mosteiro, Miguel A and Wadhwa, Krutika},
  booktitle={International Conference on Networked Systems},
  pages={244--260},
  year={2020},
  organization={Springer}
}

@inproceedings{kowalski2025dwls,
  title={On the Complexity of Deterministic Distributed Wireless Link Scheduling},
  author={Kowalski, Dariusz R and Mosteiro, Miguel A},
  booktitle={Proceedings of the 45th IEEE International Conference on Distributed Computing Systems},
  year={2025},
  organization={IEEE},
  note={To appear}
}

@article{moreno2023quorum,
  title={Quorum sensing as a mechanism to harness the wisdom of the crowds},
  author={Moreno-G{\'a}mez, Stefany and Hochberg, Michael E and Van Doorn, GS},
  journal={Nature communications},
  volume={14},
  number={1},
  pages={3415},
  year={2023},
  publisher={Nature Publishing Group UK London}
}

@article{miller2001quorum,
  title={Quorum sensing in bacteria},
  author={Miller, Melissa B and Bassler, Bonnie L},
  journal={Annual Reviews in Microbiology},
  volume={55},
  number={1},
  pages={165--199},
  year={2001},
  publisher={Annual Reviews 4139 El Camino Way, PO Box 10139, Palo Alto, CA 94303-0139, USA}
}

@article{fuqua1994quorum,
  title={Quorum sensing in bacteria: the LuxR-LuxI family of cell density-responsive transcriptional regulators},
  author={Fuqua, W Claiborne and Winans, Stephen C and Greenberg, E Peter},
  journal={Journal of bacteriology},
  volume={176},
  number={2},
  pages={269--275},
  year={1994}
}

@article{waters2005quorum,
  title={Quorum sensing: cell-to-cell communication in bacteria},
  author={Waters, Christopher M and Bassler, Bonnie L},
  journal={Annu. Rev. Cell Dev. Biol.},
  volume={21},
  number={1},
  pages={319--346},
  year={2005},
  publisher={Annual Reviews}
}

@inproceedings{nr,
  title={The network data repository with interactive graph analytics and visualization},
  author={Rossi, Ryan and Ahmed, Nesreen},
  booktitle={Proceedings of the AAAI conference on artificial intelligence},
  volume={29},
  number={1},
  year={2015},
  note={\url{https://networkrepository.com}}
}

@article{firth2015experimental,
  title={Experimental manipulation of avian social structure reveals segregation is carried over across contexts},
  author={Firth, Josh A and Sheldon, Ben C},
  journal={Proceedings of the Royal Society B: Biological Sciences},
  volume={282},
  number={1802},
  pages={20142350},
  year={2015},
  publisher={The Royal Society}
}

@article{mersch2013ant,
  title={Tracking individuals shows spatial fidelity is a key regulator of ant social organization},
  author={Mersch, Danielle P and Crespi, Alessandro and Keller, Laurent},
  journal={Science},
  volume={340},
  number={6136},
  pages={1090--1093},
  year={2013},
  publisher={American Association for the Advancement of Science}
}

@article{gazda2015importance,
  title={The importance of delineating networks by activity type in bottlenose dolphins (Tursiops truncatus) in Cedar Key, Florida},
  author={Gazda, Stefanie and Iyer, Swami and Killingback, Timothy and Connor, Richard and Brault, Solange},
  journal={Royal Society Open Science},
  volume={2},
  number={3},
  pages={140263},
  year={2015},
  publisher={The Royal Society Publishing}
}

@book{MitzenmacherUpfal-book2005,
author	={Michael Mitzenmacher and Eli Upfal},
title	={Probability and Computing},
publisher={Cambridge University Press},
year	={2005}
}

@article{galperin2004bacterial,
  title={Bacterial signal transduction network in a genomic perspective},
  author={Galperin, Michael Y},
  journal={Environmental microbiology},
  volume={6},
  number={6},
  pages={552--567},
  year={2004},
  publisher={Wiley Online Library}
}

@article{hounkanli2020global,
  title={Global synchronization and consensus using beeps in a fault-prone multiple access channel},
  author={Hounkanli, Kokouvi and Miller, Avery and Pelc, Andrzej},
  journal={Theoretical Computer Science},
  volume={806},
  pages={567--576},
  year={2020},
  publisher={Elsevier}
}

@inproceedings{ashkenazi2020brief,
  title={Brief announcement: Noisy beeping networks},
  author={Ashkenazi, Yagel and Gelles, Ran and Leshem, Amir},
  booktitle={Proceedings of the 39th Symposium on Principles of Distributed Computing},
  pages={458--460},
  year={2020}
}

@inproceedings{motskin2009lightweight,
  title={Lightweight coloring and desynchronization for networks},
  author={Motskin, Arik and Roughgarden, Tim and Skraba, Primoz and Guibas, Leonidas},
  booktitle={IEEE INFOCOM 2009},
  pages={2383--2391},
  year={2009},
  organization={IEEE}
}

@inproceedings{degesys2007desync,
  title={DESYNC: Self-organizing desynchronization and TDMA on wireless sensor networks},
  author={Degesys, Julius and Rose, Ian and Patel, Ankit and Nagpal, Radhika},
  booktitle={Proceedings of the 6th international conference on Information processing in sensor networks},
  pages={11--20},
  year={2007}
}

@article{dufoulon2022beeping,
  title={Beeping Shortest Paths via Hypergraph Bipartite Decomposition},
  author={Dufoulon, Fabien and Emek, Yuval and Gelles, Ran},
  journal={arXiv preprint arXiv:2210.06882},
  year={2022}
}

@inproceedings{beauquier2019optimal,
  title={Optimal multi-broadcast with beeps using group testing},
  author={Beauquier, Joffroy and Burman, Janna and Davies, Peter and Dufoulon, Fabien},
  booktitle={Structural Information and Communication Complexity: 26th International Colloquium, SIROCCO 2019, L'Aquila, Italy, July 1--4, 2019, Proceedings 26},
  pages={66--80},
  year={2019},
  organization={Springer}
}

@article{czumaj2019communicating,
  title={Communicating with beeps},
  author={Czumaj, Artur and Davies, Peter},
  journal={Journal of Parallel and Distributed Computing},
  volume={130},
  pages={98--109},
  year={2019},
  publisher={Elsevier}
}

@inproceedings{hounkanli2016asynchronous,
  title={Asynchronous broadcasting with bivalent beeps},
  author={Hounkanli, Kokouvi and Pelc, Andrzej},
  booktitle={Structural Information and Communication Complexity: 23rd International Colloquium, SIROCCO 2016, Helsinki, Finland, July 19-21, 2016, Revised Selected Papers 23},
  pages={291--306},
  year={2016},
  organization={Springer}
}

@article{hounkanli2015deterministic,
  title={Deterministic broadcasting and gossiping with beeps},
  author={Hounkanli, Kokouvi and Pelc, Andrzej},
  journal={arXiv preprint arXiv:1508.06460},
  year={2015}
}

@inproceedings{dufoulon2018beeping,
  title={Beeping a deterministic time-optimal leader election},
  author={Dufoulon, Fabien and Burman, Janna and Beauquier, Joffroy},
  booktitle={32nd International Symposium on Distributed Computing (DISC 2018)},
  year={2018},
  organization={Schloss Dagstuhl-Leibniz-Zentrum fuer Informatik}
}

@inproceedings{forster2014deterministic,
  title={Deterministic leader election in multi-hop beeping networks},
  author={F{\"o}rster, Klaus-Tycho and Seidel, Jochen and Wattenhofer, Roger},
  booktitle={Distributed Computing: 28th International Symposium, DISC 2014, Austin, TX, USA, October 12-15, 2014. Proceedings 28},
  pages={212--226},
  year={2014},
  organization={Springer}
}

@inproceedings{ghaffari2013near,
  title={Near optimal leader election in multi-hop radio networks},
  author={Ghaffari, Mohsen and Haeupler, Bernhard},
  booktitle={Proceedings of the twenty-fourth annual ACM-SIAM symposium on Discrete algorithms},
  pages={748--766},
  year={2013},
  organization={SIAM}
}

@article{casteigts2019design,
  title={Design patterns in beeping algorithms: Examples, emulation, and analysis},
  author={Casteigts, Arnaud and M{\'e}tivier, Yves and Robson, John Michael and Zemmari, Akka},
  journal={Information and Computation},
  volume={264},
  pages={32--51},
  year={2019},
  publisher={Elsevier}
}

@inproceedings{beauquier2018fast,
  title={Fast beeping protocols for deterministic MIS and ($\Delta$+ 1)-coloring in sparse graphs},
  author={Beauquier, Joffroy and Burman, Janna and Dufoulon, Fabien and Kutten, Shay},
  booktitle={IEEE INFOCOM 2018-IEEE Conference on Computer Communications},
  pages={1754--1762},
  year={2018},
  organization={IEEE}
}

@article{chlebus2017naming,
  title={Naming a channel with beeps},
  author={Chlebus, Bogdan S and De Marco, Gianluca and Talo, Muhammed},
  journal={Fundamenta Informaticae},
  volume={153},
  number={3},
  pages={199--219},
  year={2017},
  publisher={IOS Press}
}

@inproceedings{holzer2016brief,
  title={Brief announcement: beeping a maximal independent set fast},
  author={Holzer, Stephan and Lynch, Nancy},
  booktitle={30th International Symposium on Distributed Computing (DISC)},
  year={2016}
}

@article{jeavons2016feedback,
  title={Feedback from nature: simple randomised distributed algorithms for maximal independent set selection and greedy colouring},
  author={Jeavons, Peter and Scott, Alex and Xu, Lei},
  journal={Distributed Computing},
  volume={29},
  pages={377--393},
  year={2016},
  publisher={Springer}
}

@article{afek2013beeping,
  title={Beeping a maximal independent set},
  author={Afek, Yehuda and Alon, Noga and Bar-Joseph, Ziv and Cornejo, Alejandro and Haeupler, Bernhard and Kuhn, Fabian},
  journal={Distributed computing},
  volume={26},
  number={4},
  pages={195--208},
  year={2013},
  publisher={Springer}
}

@inproceedings{halldorsson2012towards,
  title={Towards tight bounds for local broadcasting},
  author={Halld{\'o}rsson, Magn{\'u}s M and Mitra, Pradipta},
  booktitle={Proceedings of the 8th International Workshop on Foundations of Mobile Computing},
  pages={1--9},
  year={2012}
}

@inproceedings{flury2010slotted,
  title={Slotted programming for sensor networks},
  author={Flury, Roland and Wattenhofer, Roger},
  booktitle={Proceedings of the 9th ACM/IEEE International Conference on Information Processing in Sensor Networks},
  pages={24--34},
  year={2010}
}

@inproceedings{cornejo2010deploying,
  title={Deploying wireless networks with beeps},
  author={Cornejo, Alejandro and Kuhn, Fabian},
  booktitle={Distributed Computing: 24th International Symposium, DISC 2010, Cambridge, MA, USA, September 13-15, 2010. Proceedings 24},
  pages={148--162},
  year={2010},
  organization={Springer}
}

@article{afek2011biological,
  title={A biological solution to a fundamental distributed computing problem},
  author={Afek, Yehuda and Alon, Noga and Barad, Omer and Hornstein, Eran and Barkai, Naama and Bar-Joseph, Ziv},
  journal={Science},
  volume={331},
  number={6014},
  pages={183--185},
  year={2011},
  publisher={American Association for the Advancement of Science}
}

\end{document}